\newtheorem{theorem}{Theorem}
\newtheorem{lemma}[theorem]{Lemma}
\newtheorem{example}[theorem]{Example}
\DeclareMathOperator{\IM}{im}
\def\thickeningHastingsa{
    \tikz[baseline={(current bounding box.center)}]{
        \node[circle, draw, fill = none, align = center] (a) {};
        \node[circle, draw, fill = none, align = center, right = 30 pt] (b) at (a) {};
        \node[circle, draw, fill = none, align = center, right = 30 pt] (c) at (b) {};
        \node[circle, draw, fill = none, align = center, right = 30 pt] (d) at (c) {};

        \node[rectangle, draw, fill = none, align = center, above = 50pt, right = 15pt] (e) at (b) {};

        \node[rectangle, draw, fill, align = center, below = 50pt, right = 15pt] (f) at (a) {};
        \node[rectangle, draw, fill, align = center, below = 50pt, right = 15pt] (g) at (b) {};

        \draw[-, thick] (a) -- (e);
        \draw[-, thick] (b) -- (e);
        \draw[-, thick] (c) -- (e);
        \draw[-, thick] (d) -- (e);
        \draw[-, thick] (f) -- (a);
        \draw[-, thick] (f) -- (b);
        \draw[-, thick] (g) -- (a);
        \draw[-, thick] (g) -- (c);
    }
}
\def\thickeningHastingsbX{
    \tikz[baseline={(current bounding box.center)}]{
        \node[rectangle, draw, fill = none, align = center] (a) {};
        \node[rectangle, draw, fill = none, align = center, right = 60pt] (b) at (a) {};
        \node[rectangle, draw, fill = none, align = center, right = 60pt] (c) at (b) {};

        \node[circle, draw, fill = none, align = center, below = 50pt, left = 10pt] (e) at (a) {};
        \node[circle, draw, fill = none, align = center, left = 10pt] (f) at (e) {};
        \node[circle, draw, fill = none, align = center, left = 10pt] (g) at (f) {};
        
        \node[circle, draw, fill = none, align = center, below = 50pt, right = 15pt] (h) at (a) {};
        \node[circle, draw, fill = none, align = center, right = 10pt] (i) at (h) {};
        \node[circle, draw, fill = none, align = center, right = 10pt] (j) at (i) {};
        
        \node[circle, draw, fill = none, align = center, below = 50pt, right = 15pt] (k) at (b) {};
        \node[circle, draw, fill = none, align = center, right = 10pt] (l) at (k) {};
        \node[circle, draw, fill = none, align = center, right = 10pt] (m) at (l) {};
        
        \node[circle, draw, fill = none, align = center, below = 50pt, right = 15pt] (n) at (c) {};
        \node[circle, draw, fill = none, align = center, right = 10pt] (o) at (n) {};
        \node[circle, draw, fill = none, align = center, right = 10pt] (p) at (o) {};

        \node[circle, draw, fill = none, align = center, above = 30pt, right = 30pt] (q) at (a) {};
        \node[circle, draw, fill = none, align = center, above = 30pt, right = 30pt] (s) at (b) {};

        \draw[-, thick] (a) -- (q);
        \draw[-, thick] (b) -- (q);
        \draw[-, thick] (b) -- (s);
        \draw[-, thick] (c) -- (s);

        \draw[-, thick] (a) -- (g);
        \draw[-, thick] (a) -- (h);
        \draw[-, thick] (a) -- (k);
        \draw[-, thick] (a) -- (n);

        \draw[-, thick] (b) -- (f);
        \draw[-, thick] (b) -- (i);
        \draw[-, thick] (b) -- (l);
        \draw[-, thick] (b) -- (o);

        \draw[-, thick] (c) -- (e);
        \draw[-, thick] (c) -- (j);
        \draw[-, thick] (c) -- (m);
        \draw[-, thick] (c) -- (p);

        \draw[decorate, decoration = {brace, mirror, amplitude = 10pt}, xshift = 170pt, yshift = -27pt] (0.5, 0.9) -- (0.5, 2.1) node[black, midway, xshift = 33pt] {$I_{n_X} \otimes H^{\mathrm T}_\ell$};
        \draw[decorate, decoration = {brace, amplitude = 10pt}, xshift = 170pt, yshift = -27pt] (0.5, 0.8) -- (0.5, -0.9) node[black, midway, xshift = 30pt] {$H_X \otimes I_\ell$};
    }
}
\def\thickeningHastingsbZ{
    \tikz[baseline={(current bounding box.center)}]{
        \node[circle, draw, fill = none, align = center] (a) {};
        \node[circle, draw, fill = none, align = center, right = 20pt] (b) at (a) {};
        \node[circle, draw, fill = none, align = center, right = 20pt] (c) at (b) {};
        \node[circle, draw, fill = none, align = center, right = 20pt] (d) at (c) {};
        \node[circle, draw, fill = none, align = center, right = 20pt] (e) at (d) {};
        \node[circle, draw, fill = none, align = center, right = 20pt] (f) at (e) {};
        \node[circle, draw, fill = none, align = center, right = 20pt] (g) at (f) {};
        \node[circle, draw, fill = none, align = center, right = 20pt] (h) at (g) {};
        \node[circle, draw, fill = none, align = center, right = 20pt] (i) at (h) {};
        \node[circle, draw, fill = none, align = center, right = 20pt] (j) at (i) {};
        \node[circle, draw, fill = none, align = center, right = 20pt] (k) at (j) {};
        \node[circle, draw, fill = none, align = center, right = 20pt] (l) at (k) {};

        \node[rectangle, draw, fill, align = center, below = 30pt] (m) at (b) {};
        \node[rectangle, draw, fill, align = center, below = 30pt] (n) at (c) {};
        \node[rectangle, draw, fill, align = center, below = 30pt] (o) at (d) {};
        \node[rectangle, draw, fill, align = center, below = 30pt] (p) at (f) {};
        \node[rectangle, draw, fill, align = center, below = 30pt] (q) at (g) {};
        \node[rectangle, draw, fill, align = center, below = 30pt] (r) at (h) {};

        \node[rectangle, draw, fill, align = center, above = 20pt, right = 10pt] (s) at (a) {};
        \node[rectangle, draw, fill, align = center, above = 20pt, right = 10pt] (t) at (b) {};
        \node[rectangle, draw, fill, align = center, above = 20pt, right = 10pt] (u) at (d) {};
        \node[rectangle, draw, fill, align = center, above = 20pt, right = 10pt] (v) at (e) {};
        \node[rectangle, draw, fill, align = center, above = 20pt, right = 10pt] (w) at (g) {};
        \node[rectangle, draw, fill, align = center, above = 20pt, right = 10pt] (x) at (h) {};
        \node[rectangle, draw, fill, align = center, above = 20pt, right = 10pt] (y) at (j) {};
        \node[rectangle, draw, fill, align = center, above = 20pt, right = 10pt] (z) at (k) {};

        \node[circle, draw, fill = none, align = center, above = 30pt] (aa) at (u) {};
        \node[circle, draw, fill = none, align = center, above = 30pt] (ab) at (x) {};

        \draw[-, thick] (a) -- (s);
        \draw[-, thick] (s) -- (b);
        \draw[-, thick] (b) -- (t);
        \draw[-, thick] (t) -- (c);

        \draw[-, thick] (d) -- (u);
        \draw[-, thick] (u) -- (e);
        \draw[-, thick] (e) -- (v);
        \draw[-, thick] (v) -- (f);

        \draw[-, thick] (g) -- (w);
        \draw[-, thick] (w) -- (h);
        \draw[-, thick] (h) -- (x);
        \draw[-, thick] (x) -- (i);

        \draw[-, thick] (j) -- (y);
        \draw[-, thick] (y) -- (k);
        \draw[-, thick] (k) -- (z);
        \draw[-, thick] (z) -- (l);

        \draw[-, thick] (s) -- (aa);
        \draw[-, thick] (u) -- (aa);
        \draw[-, thick] (w) -- (aa);
        \draw[-, thick] (y) -- (aa);
        \draw[-, thick] (t) -- (ab);
        \draw[-, thick] (v) -- (ab);
        \draw[-, thick] (x) -- (ab);
        \draw[-, thick] (z) -- (ab);

        \draw[-, thick] (a) -- (m);
        \draw[-, thick] (d) -- (m);
        \draw[-, thick] (b) -- (n);
        \draw[-, thick] (e) -- (n);
        \draw[-, thick] (c) -- (o);
        \draw[-, thick] (f) -- (o);
        \draw[-, thick] (a) -- (p);
        \draw[-, thick] (g) -- (p);
        \draw[-, thick] (b) -- (q);
        \draw[-, thick] (h) -- (q);
        \draw[-, thick] (c) -- (r);
        \draw[-, thick] (i) -- (r);

        \draw[decorate, decoration = {brace, mirror, amplitude = 10pt}, xshift = 265pt, yshift = 0pt] (0.5, 0.9) -- (0.5, 1.9) node[black, midway, xshift = 33pt] {$H^{\mathrm T}_X \otimes I_{\ell - 1}$};
        \draw[decorate, decoration = {brace, mirror, amplitude = 10pt}, xshift = 265pt, yshift = 0pt] (0.5, -0.1) -- (0.5, 0.8) node[black, midway, xshift = 30pt] {$I_n \otimes H_\ell$};
        \draw[decorate, decoration = {brace, amplitude = 10pt}, xshift = 265pt, yshift = 0pt] (0.5, -0.2) -- (0.5, -1.2) node[black, midway, xshift = 30pt] {$H_Z \otimes I_\ell$};
    }
}
\def\chHastingsZ{
    \tikz[baseline={(current bounding box.center)}]{
        \node[circle, draw, fill = none, align = center] (a) {};
        \node[circle, draw, fill = none, align = center, right = 20pt] (b) at (a) {};
        \node[circle, draw, fill = none, align = center, right = 20pt] (c) at (b) {};
        \node[circle, draw, fill = none, align = center, right = 20pt] (d) at (c) {};
        \node[circle, draw, fill = none, align = center, right = 20pt] (e) at (d) {};
        \node[circle, draw, fill = none, align = center, right = 20pt] (f) at (e) {};
        \node[circle, draw, fill = none, align = center, right = 20pt] (g) at (f) {};
        \node[circle, draw, fill = none, align = center, right = 20pt] (h) at (g) {};
        \node[circle, draw, fill = none, align = center, right = 20pt] (i) at (h) {};
        \node[circle, draw, fill = none, align = center, right = 20pt] (j) at (i) {};
        \node[circle, draw, fill = none, align = center, right = 20pt] (k) at (j) {};
        \node[circle, draw, fill = none, align = center, right = 20pt] (l) at (k) {};

        \node[rectangle, draw, fill, align = center, below = 30pt] (m) at (b) {};
        \node[rectangle, draw, fill, align = center, below = 30pt] (q) at (g) {};

        \node[rectangle, draw, fill, align = center, above = 20pt, right = 10pt] (s) at (a) {};
        \node[rectangle, draw, fill, align = center, above = 20pt, right = 10pt] (t) at (b) {};
        \node[rectangle, draw, fill, align = center, above = 20pt, right = 10pt] (u) at (d) {};
        \node[rectangle, draw, fill, align = center, above = 20pt, right = 10pt] (v) at (e) {};
        \node[rectangle, draw, fill, align = center, above = 20pt, right = 10pt] (w) at (g) {};
        \node[rectangle, draw, fill, align = center, above = 20pt, right = 10pt] (x) at (h) {};
        \node[rectangle, draw, fill, align = center, above = 20pt, right = 10pt] (y) at (j) {};
        \node[rectangle, draw, fill, align = center, above = 20pt, right = 10pt] (z) at (k) {};

        \node[circle, draw, fill = none, align = center, above = 30pt] (aa) at (u) {};
        \node[circle, draw, fill = none, align = center, above = 30pt] (ab) at (x) {};

        \draw[-, thick] (a) -- (s);
        \draw[-, thick] (s) -- (b);
        \draw[-, thick] (b) -- (t);
        \draw[-, thick] (t) -- (c);

        \draw[-, thick] (d) -- (u);
        \draw[-, thick] (u) -- (e);
        \draw[-, thick] (e) -- (v);
        \draw[-, thick] (v) -- (f);

        \draw[-, thick] (g) -- (w);
        \draw[-, thick] (w) -- (h);
        \draw[-, thick] (h) -- (x);
        \draw[-, thick] (x) -- (i);

        \draw[-, thick] (j) -- (y);
        \draw[-, thick] (y) -- (k);
        \draw[-, thick] (k) -- (z);
        \draw[-, thick] (z) -- (l);

        \draw[-, thick] (s) -- (aa);
        \draw[-, thick] (u) -- (aa);
        \draw[-, thick] (w) -- (aa);
        \draw[-, thick] (y) -- (aa);
        \draw[-, thick] (t) -- (ab);
        \draw[-, thick] (v) -- (ab);
        \draw[-, thick] (x) -- (ab);
        \draw[-, thick] (z) -- (ab);

        \draw[-, thick] (a) -- (m);
        \draw[-, thick] (d) -- (m);
        \draw[-, thick] (b) -- (q);
        \draw[-, thick] (h) -- (q);
    }
}
\begin{document}

\preprint{APS/123-QED}

\title{Weight Reduced Stabilizer Codes with Lower Overhead}

\author{Eric Sabo}
\thanks{These authors contributed equally. Corresponding author: \href{mailto:eric.sabo@xanadu.ai}{eric.sabo@xanadu.ai}.}
\affiliation{Xanadu, Toronto, Ontario M5G 2C8, Canada}
\author{Lane G. Gunderman}
\thanks{These authors contributed equally. Corresponding author: \href{mailto:eric.sabo@xanadu.ai}{eric.sabo@xanadu.ai}.}
\affiliation{Xanadu, Toronto, Ontario M5G 2C8, Canada}
\author{Benjamin Ide}
\thanks{These authors contributed equally. Corresponding author: \href{mailto:eric.sabo@xanadu.ai}{eric.sabo@xanadu.ai}.}
\affiliation{Xanadu, Toronto, Ontario M5G 2C8, Canada}
\author{Michael Vasmer}
\affiliation{Xanadu, Toronto, Ontario M5G 2C8, Canada}
\affiliation{Perimeter Institute for Theoretical Physics, Waterloo, ON N2L 2Y5, Canada}
\affiliation{Institute for Quantum Computing, University of Waterloo, Waterloo, ON N2L 3G1, Canada}
\author{Guillaume Dauphinais}
\affiliation{Xanadu, Toronto, Ontario M5G 2C8, Canada}

\date{\today}

\begin{abstract}
	Stabilizer codes are the most widely studied class of quantum error-correcting codes and form the basis of most proposals for a fault-tolerant quantum computer.  A stabilizer code is defined by a set of parity-check operators, which are measured in order to infer information about errors that may have occurred. In typical settings, measuring these operators is itself a noisy process and the noise strength scales with the number of qubits involved in a given parity check, or its weight. Hastings proposed a method for reducing the weights of the parity checks of a stabilizer code, though it has previously only been studied in the asymptotic regime. Here, we instead focus on the regime of small-to-medium size codes suitable for quantum computing hardware. We provide both a fully explicit description of Hastings's method and propose a substantially simplified weight reduction method that is applicable to the class of quantum product codes. Our simplified method allows us to reduce the check weights of hypergraph and lifted product codes to at most six, while preserving the number of logical qubits and at least retaining (in fact often increasing) the code distance. The price we pay is an increase in the number of physical qubits by a constant factor, but we find that our method is much more efficient than Hastings's method in this regard. We benchmark the performance of our codes in a photonic quantum computing architecture based on GKP qubits and passive linear optics, finding that our weight reduction method substantially improves code performance.
\end{abstract}

\maketitle
\setcounter{MaxMatrixCols}{30}

\section{Introduction}\label{sec:intro}
Quantum error correction (QEC) is believed to be necessary in order to run large-scale quantum algorithms~\cite{campbell2017, dalzell2023}. Recent years have seen remarkable progress in realizing QEC codes on various hardware platforms~\cite{ryan-anderson2021, krinner2022, sundaresan2023, googlequantumai2023}, with most experiments thus far demonstrating a particular QEC code called the surface code~\cite{kitaev2003, dennis2002, fowler2012}. 

In a stabilizer code (a widely-studied class of QEC codes), one measures parity-check operators that give partial information about errors that may have occurred on the physical qubits of the code. However, in a realistic setting the act of measuring the parity-check operators is itself a noisy process. In many hardware platforms, the noise associated with measuring a parity check scales with the weight of the check (the number of qubits that the check acts on non-trivially), as higher check weights correspond to deeper measurement circuits. Hence, one way to find practically useful stabilizer codes is to search for stabilizer codes with low-weight parity-checks, which are known as quantum low-density parity-check (qLDPC) codes~\cite{breuckmann2021a}.

The surface code is an example of a qLDPC code, as all of its checks are weight four. However, the surface code suffers from the drawback that it always encodes a single logical qubit (no matter its size). This low encoding rate means that for realistic noise rates we expect to have approximately $1,000$ physical qubits per logical qubit, which leads to estimates of millions of physical qubits being required for large-scale quantum algorithms~\cite{gidney2021, kim2022}. There exist other families of qLDPC codes with higher encoding rates than the surface code~\cite{breuckmann2021a}, though this often comes with slightly higher parity-check weights and the requirement of long-range connectivity~\cite{baspin2022, baspin2022a, baspin2023}. Nevertheless, the potential of these qLDPC codes has lead to recent proposals for implementing them in a variety of hardware platforms~\cite{bravyi2023high, xu2023constant}, and is particularly well suited to photonic architectures based on GKP qubits where there are minimal constraints on the locality of qubit connectivity~\cite{bourassa2021Blueprint, tzitrin2021Fault}.

Given a qLDPC code with favorable parameters (e.g.\ high encoding rate) but parity-check weights that are high enough to limit its performance under realistic noise assumptions, one can ask if there is any way to reduce the weights of the parity-checks while (mostly) retaining the favorable parameters of the code. Hastings~\cite{hastings2016, hastings2021quantum} provided a method known as weight reduction, which takes an input CSS code and can output a qLDPC code with $O(1)$  parity-check weights, while increasing the number of physical qubits by a constant factor and reducing the code distance by a constant factor. This method has thus far been applied mostly in the asymptotic regime~\cite{hastings2021quantum, wills2023tradeoff} where one is primarily interested in the scaling of properties such as the encoding rate of a family of codes as a function of the code size. 

Here, we investigate the utility of weight reduction techniques for modifying qLDPC codes of small-to-medium size that could potentially be implemented on hardware in the short-to-medium term. We first provide a self-contained presentation of Hastings's weight reduction procedure, providing an alternative, algorithmic perspective while also optimizing the procedure to improve its overhead. Next, we propose a new weight reduction technique for classical codes, the outputs of which can be used as input to quantum product code constructions~\cite{tillich2014quantum, panteleev2022, breuckmann2021, panteleev2022asymptotically}. We use these two techniques to construct qLDPC codes with low-weight parity-checks, finding that our method gives codes with better parameters in the finite-size regime of interest. In particular, for an input hypergraph product code with parameters $[\![45, 9, 3]\!]$, weight seven checks, and qubit degree four, our method gives a $[\![65, 9, 4]\!]$ code with weight six checks and qubit degree three, whereas Hastings's method gives a $[\![2892, 9, 5]\!]$ code with weight six checks and qubit degree six.

The default circuit for measuring a parity check in a circuit-based quantum computer is to introduce an ancilla qubit, apply controlled Pauli gates from the ancilla to the qubits in the support of the check, and then measure the ancilla~\cite{nielsen00}. If all the operations in the circuit are noisy, then the noise in this process scales with the weight of the parity check. In a measurement-based quantum computer the situation is similar: for each measurement of a weight $w$ parity check we introduce a degree-$w$ node in the cluster state\footnote{We note that the degree of the node may be greater than $w$ if the check comprises more than one type of Pauli operator.}, where the node represents a qubit prepared in the $|+\rangle$ state and each edges represents a control-$Z$ between this qubit and another qubit~\cite{raussendorf2006Afault, bolt2016Foliation, roberts2020Universal}. In Xanadu's architecture, cluster states are constructed from two-qubit entangled GKP states~\cite{gottesman2001Encoding} and $N$-body continuous-variable GHZ measurements~\cite{bourassa2021Blueprint, tzitrin2021Fault, xanadu2024inprep}.  In this architecture, the strength of the effective noise acting on a qubit in the cluster state scales (to first order) as $1 - \mathrm{erf}\left[c / \sqrt{ 2\sigma^{2} N }\right]$, where $\sigma^2$ is the variance of a single phase space peak of a GKP state's Wigner function (which can be related to the total transmissivity of the system~\cite{tzitrin2021Fault}), $N$ is the number of neighbors of the qubit, $c$ is a positive constant, and $\mathrm{erf}$ is the error function. Therefore, if we reduce the check weights, then we reduce the effective qubit-level noise in the cluster state for a fixed GKP state quality (or amount of optical loss).

We benchmark the performance of our weight-reduced codes using Monte Carlo simulations of the architecture described above as a quantum memory. We find that our weight reduction technique substantially improves the performance for cluster states constructed from hypergraph product codes and lifted product codes, in terms of both the logical error rates and the break-even point. Although we benchmarked our codes using a specific noise model relevant to photonic hardware based on GKP qubits, we would expect that similar results hold for other hardware platforms where the noise associated with measuring a parity-check scales with its weight.

The remainder of this paper is structured as follows. In \cref{sec:background}, we review the relevant background in classical and quantum coding theory, and we provide additional background material on homological algebra in \cref{app:homological}. In \cref{sec:revHast}, we describe the steps of Hastings's weight reduction method, discuss optimizations to reduce its overhead, and comment on its implications for iterative decoding. In \cref{sec:classicalWtR}, we present our alternative weight reduction method for classical codes and show how it can be applied to reduce the stabilizer weights of quantum product codes. In \cref{sec:num}, we present examples of weight-reduced codes and the results of our numerical simulations. We conclude in \cref{sec:conc}.

\section{Background \& Notation}\label{sec:background}
Throughout this paper, we assume familiarity with the stabilizer formalism~\cite{gottesman1997Stabilizer} of QEC codes and with the basics of classical coding theory~\cite{macwilliams1978Theory}.  For simplicity, we exclusively work with $\mathbb{F}_2$ vector spaces, although many of the ideas are easily extended to higher fields and do not depend on this choice. Missing entries in matrices are assumed to be zero, and horizontal and vertical dividing lines within matrices are added throughout solely for visual convenience. Let $H$ be a full-rank, $(n - k) \times n$ matrix. The $[n, k, d]$ (classical) linear code $\mathcal{C} = \mathcal{C}(H)$ associated with $H$ is the subspace orthogonal to the row space of $H$ with respect to the standard (Euclidean) inner product, $\mathcal{C} = \{ v \in \mathbb{F}^n_2 \mid Hv^{\mathrm{T}} = 0 \}$, where $v = (v_1, \hdots, v_n)$ is thought of as a row vector and superscript `$\mathrm{T}$' is the standard matrix transpose. The support of $v$ is the set $\mathrm{supp}(v) = \{i \mid v_i \neq 0\}$. In this context, $H$ is called the parity-check matrix of $\mathcal{C}$ and serves a role similar to the stabilizers in QEC. The minimum distance of the code is given by $d = \min \{\mathrm{wt}(v) \mid 0 \neq v \in \mathcal{C}\}$, where $\mathrm{wt} = |\mathrm{supp}(v)|$ denotes the Hamming weight.

An $[\![n, k, d]\!]$ Pauli stabilizer code is described numerically by its stabilizer matrix in symplectic form whose first $n$ columns denote Pauli $X$ operators and subsequent $n$ columns Pauli denote $Z$ operators. CSS codes~\cite{calderbank1996Good, steane1996Multiple} can be described by two matrices $H_X$ and $H_Z$ of $n$ columns comprising of the $X$ and $Z$ stabilizer generators, respectively, such that the stabilizer matrix is of the form $H_X \oplus H_Z$. The inputs of a procedure acting on a set of stabilizers appear as $H_X$ and $H_Z$ and the outputs appear with tildes, $\tilde{H}_X$ and $\tilde{H}_Z$. Following Refs.~\cite{hastings2021fiber, hastings2021quantum}, let $n_X$ and $n_Z$ be the number of $X$ and $Z$ stabilizers, respectively, $w_X$ and $w_Z$ be the maximum Hamming weight of the $X$ and $Z$ stabilizer generators, respectively, and $q_X$ and $q_Z$ be the maximum Hamming weight of the columns of $H_X$ and $H_Z$, respectively. The parameters $q_X$ and $q_Z$  are sometimes known as the $X$- and $Z$-qubit degrees, respectively, as, for example, $q_X$ denotes the maximum number of stabilizer generators that have nontrivial support on the same qubit. Note that these parameters are not inherent to the code but are relative to the specific form of the generators chosen.

Hypergraph product codes~\cite{tillich2014quantum}, $\mathrm{HGP}(H_1,H_2)$, are CSS codes constructed from two parity-check matrices $H_1$ and $H_2$ with stabilizers
\begin{equation}\label{hgp}
    H_X = \begin{pmatrix} H_1 \otimes I & I \otimes H^{\mathrm{T}}_2 \end{pmatrix}
    \quad , \quad
    H_Z = \begin{pmatrix} I \otimes H_2 & H^{\mathrm{T}}_1 \otimes I\end{pmatrix}.
\end{equation}
If $\mathcal{C}(H_i)$ has parameters $[n_i,k_i,d_i]$ and $\mathcal{C}(H_i^{\mathrm{T}})$ has parameters $[m_i,k^{\mathrm{T}}_i,d^{\mathrm{T}}_i]$, where $k^{\mathrm{T}}_i$ and $d^{\mathrm{T}}_i$ are the dimension and distance of $\mathcal{C}(H^{\mathrm{T}})$, respectively, then $\mathrm{HGP}(H_1,H_2)$ has parameters
\begin{equation}\label{eq:hgp-params}
    [\![n_1n_2 + m_1m_2, k_1k_2 + k_1^{\mathrm{T}}k_2^{\mathrm{T}}, \min(d_1,d_2,d_1^{\mathrm{T}},d_2^{\mathrm{T}})]\!].
\end{equation}

Let $R_\ell = \mathbb{F}_2[x] / (x^\ell - 1)$ be a polynomial quotient ring. A circulant matrix is a square matrix specified by the first row or column, where each subsequent row (column) is cyclically shifted to the right (down) by one index. An element $g(x) = g_0 + g_1 x + \hdots + g_{\ell - 1} x^{\ell - 1} \in R_\ell$ is associated with the $\ell \times \ell$ circulant matrix, $\mathbb{B}(g(x))$, whose first column is given by the coefficients of $g(x)$.\footnote{Assigning the coefficients to the column instead of the first row has precedent in the classical error correction and mathematical literatures but is opposite of recent convention used in the quantum literature. The difference comes down to left versus right multiplication in the ring of circulants.} This is called the lift of $g(x)$. The lift of a matrix $A \in M_{m\times n} (R_\ell)$ with elements in $R_\ell$ is the matrix $\mathbb{B}(A) \in M_{m \ell \times n \ell}$ constructed by replacing each element of $A$ with its lift. The matrix $A$ is called the base (or weight or protograph) matrix of the lift and $\ell$ is the lift size. 
For example, for $\ell = 2$,
\begin{equation}
    A = \begin{pmatrix}
        1 & x \\
        0 & 1+x \\
    \end{pmatrix},
    \quad
    \mathbb{B}(A) = 
    \begin{pmatrix}
        \mathbb{B}(1) & \mathbb{B}(x) \\
        \mathbb{B}(0) & \mathbb{B}(1+x) \\
    \end{pmatrix} =
    \begin{pNiceArray}{cc|cc}
        1 & 0 & 0 & 1 \\
        0 & 1 & 1 & 0 \\
        \hline
        0 & 0 & 1 & 1 \\
        0 & 0 & 1 & 1 \\
    \end{pNiceArray}.
\end{equation}

Although typically defined by the form of its generator matrix, here we define a quasi-cyclic code~\cite{chen1969some} to be the linear code defined by the parity-check matrix $H = \mathbb{B}(A)$. Quasi-cyclic lifted product codes~\cite{panteleev2021quantum} are a generalization of hypergraph product codes based on quasi-cyclic codes. Let $A_1 \in M_{m_1 \times n_1} (R_\ell)$ and $A_2 \in M_{m_2 \times n_2} (R_\ell)$ be base matrices and define
\begin{equation}\label{eq:lp}
    A_X = \begin{pmatrix} A_1 \otimes I & I \otimes A_2 \end{pmatrix}
    \quad , \quad
    A_Z = \begin{pmatrix} I \otimes A_2^{\mathrm{T}} & A_1^{\mathrm{T}} \otimes I\end{pmatrix},
\end{equation}
where the transpose of $g(x)$ is determined by the transpose of its lift: $g^{\mathrm{T}}(x) = g_0 + g_{\ell - 1} x + \hdots + g_1 x^{\ell - 1}$. The lifted product code $\mathrm{LP}(A_1, A_2)$ is the CSS code with parity-check matrices $H_X = \mathbb B(A_X)$ and $H_Z = \mathbb B (A_Z)$. The length of the code is $n = \ell (n_1 m_2 + n_2 m_1)$ but there are currently no general formulas for $k$ and $d$; however, lifted product codes often have superior parameters to hypergraph product codes of similar size~\cite{panteleev2021quantum, raveendran2022finite, roffe2023bias}.

We will often use the Tanner graph representation of linear codes and stabilizer codes. Check nodes will be denoted by rectangles and variable nodes by circles. For CSS codes, open rectangles will denote $X$ stabilizers and filled-in rectangles $Z$ stabilizers.

A chain complex $C$ is, for the sake of our purposes, an ordered sequence of vector spaces $\{C_i\}$ over $\mathbb{F}_2$ with maps $\partial_i$ between each ordered pair $\{C_{i - 1}, C_i\}$ such that $\partial_i \circ \partial_{i + 1} = 0$:
\begin{equation*}
    \cdots \to C_{i + 1} \xrightarrow{\partial_{i + 1}} C_i \xrightarrow{\partial_i} C_{i - 1} \to \cdots .
\end{equation*}
We will only be interested in chain complexes with a finite number of vector spaces. A chain complex with $\ell$ spaces is called an $\ell$-term chain complex. Since $\partial_i \circ \partial_{i + 1} = 0$, we have $\mathrm{im}\left(\partial_{i + 1}\right) \subseteq \ker\left(\partial_i\right)$. A sequence is said to be exact at $C_i$ if $\mathrm{im}\left(\partial_{i + 1}\right) = \ker\left(\partial_i\right)$ and is said to be exact if it is exact at each $C_i$. The $i$th homology group is defined as $H_i( \cdot ) = \ker \left(\partial_i\right) / \mathrm{im}\left(\partial_{i + 1}\right)$. The letter $H$ will also be used for parity-check and stabilizer matrices, but there should be no confusion as to the context. The dual of a chain complex is a cochain complex
\begin{equation*}
    \cdots \leftarrow C_{i + 1} \xleftarrow{\delta_{i+1}} C_i \xleftarrow{\delta_i} C_{i - 1} \leftarrow \cdots .
\end{equation*}
The corresponding dual of the homology groups are the cohomology groups $H^i( \cdot ) = \ker \delta_{i+1} / \mathrm{im} \, \delta_i$. In this work, $\delta_i = \partial^{\mathrm{T}}_i$, since we assume the standard basis.

There is a natural correspondence between codes and chain complexes. Let $H \in \mathbb{F}^{n - k \times n}_2$ be a parity-check matrix of an $[n, k, d]$-linear code. Then we can express this code as a chain complex
\begin{equation}\label{classicalchain}
    \mathbb{F}^n_2 \xrightarrow{H} \mathbb{F}^{n - k}_2.
\end{equation}
Any diagram consisting of a single map vacuously satisfies the definition of a chain complex.

Consider an $[\![n, k, d]\!]$ CSS code generated by $n_Z$ independent $Z$ stabilizers given by the matrix $H_Z$ and $n_X$ $X$ stabilizers given by $H_X$. Since $H_Z^{\mathrm{T}} H_X = 0$, we can treat this as the chain complex
\begin{equation}\label{CSSchain}
    \mathbb{F}^{n_Z}_2 \xrightarrow{H_Z^{\mathrm{T}}} \mathbb{F}^n_2 \xrightarrow{H_X} \mathbb{F}^{n_X}_2
\end{equation}
with the qubits in the center.\footnote{It is often assumed that the qubits are at $C_1$; this is true for this work but is not strictly necessary.} Conversely, we can derive a CSS code from any two consecutive boundary maps, setting $H_Z^{\mathrm{T}} = \partial_{i + 1}$ and $H_X = \partial_i$. The $Z$ logical operators commute with the $X$ stabilizers ($\ker H_X$) and are not $Z$ stabilizers ($\mathrm{im} \, H^{\mathrm{T}}_Z = \mathrm{rowspace}(H_Z)$), which make them elements of the first homology group $H_1$. The dual problem is the cochain
\begin{equation*}
    \mathbb{F}^{n_Z}_2 \xleftarrow{H_Z} \mathbb{F}^n_2 \xleftarrow{H^{\mathrm{T}}_X} \mathbb{F}^{n_X}_2,
\end{equation*}
which gives the $X$ logicals $H^1( \cdot ) = \ker H_Z / \mathrm{im} \, H^{\mathrm{T}}_X$.\footnote{There is a one-to-one correspondence between the usual set difference definition of logical operators and equivalence classes of the quotient.}

A detailed discussion of the tensor product of chain complexes, the mapping cone, and their respective homologies is given in \cref{app:homological}.

\section{Quantum Weight Reduction}\label{sec:revHast}
Reference~\cite{wills2023tradeoff} provides a good summary of Hastings's quantum weight reduction method~\cite{hastings2021quantum}. Rather than duplicating this work, we aim to complement it by providing a description of the method using diagrams and matrices without algebraic topology. Our notation is roughly aligned with~\cite{wills2023tradeoff}, which is a simplification of~\cite{hastings2021quantum}. In addition to reviewing previous work, we provide new insight into the method and discuss the subtleties of its implementation~\cite{Sabo_2021}.

The four steps of quantum weight reduction method are:
\begin{enumerate}
    \item Copying --- reduces $q_X$,
    \item Gauging --- reduces $w_X$,
    \item Thickening and choosing heights --- reduces $q_Z$,
    \item Coning --- reduces $w_Z$.
\end{enumerate}
We will examine each step independently, although they must be applied in this order as some care is required to avoid undoing the progress achieved in previous steps. While parameters from previous steps are indeed kept $O(1)$ in subsequent steps, the exact constants can be significant for constructing codes that are compatible with realistic architectures. Therefore, in this section we will focus on the explicit constants rather than on asymptotic results. The examples and figures provided are contrived to demonstrate a specific concept and are not meant to represent good stabilizer codes. Applying these operations to real codes produce large matrices and complicated Tanner graphs from which we believe it is difficult to discern the underlying structure.

\subsubsection*{Copying}
The goal of copying is to reduce $q_X$ to at most three. Start by making $q_X$ copies of each qubit. By this we mean to add $q_X - 1$ new qubits (initialized to zero) per original qubit; the value of each original qubit is not copied to the new qubits:
\begin{equation*}
    \begin{pmatrix}
        v_1 & v_2 & \hdots & v_n
    \end{pmatrix}
    \mapsto
    \begin{pNiceArray}{ccc|ccc|c|ccc}
        v_{1, 1} & \hdots & v_{1, q_X} & v_{2, 1} & \hdots & v_{2, q_X} & \hdots & v_{n, 1} & \hdots & v_{n, q_X}
    \end{pNiceArray}.
\end{equation*}
For every $X$ stabilizer of length $n$, make a new stabilizer of length $q_X n$ such that for every $v_i$ in the support of the stabilizer one of $\{v_{i, 1}, \hdots, v_{i, q_X}\}$ receives the value of $v_i$. If a stabilizer uses the qubit $v_{i, j}$, another stabilizer cannot use it. For example, valid copies of the stabilizer $\begin{pmatrix} 1 & 1 & 1 & 1 & 1 & 1\end{pmatrix}$ with $q_X = 3$ include
\begin{equation}\label{copyexample}
    \begin{pNiceArray}{ccc|ccc|ccc|ccc|ccc|ccc}
        1 & 0 & 0 & 1 & 0 & 0 & 1 & 0 & 0 & 1 & 0 & 0 & 1 & 0 & 0 & 1 & 0 & 0
    \end{pNiceArray}
\end{equation}
and
\begin{equation*}
    \begin{pNiceArray}{ccc|ccc|ccc|ccc|ccc|ccc}
        0 & 1 & 0 & 0 & 0 & 1 & 1 & 0 & 0 & 0 & 0 & 1 & 1 & 0 & 0 & 0 & 1 & 0
    \end{pNiceArray}.
\end{equation*}
As vectors, these are equivalent up to qubit permutations. For simplicity, this work always fills the columns from left to right, as in \cref{copyexample}. Suppose \cref{copyexample} is used and $\begin{pmatrix} 1 & 1 & 0 & 0 & 1 & 1\end{pmatrix}$ is another stabilizer. Then
\begin{equation*}
    \begin{pNiceArray}{ccc|ccc|ccc|ccc|ccc|ccc}
        1 & 0 & 0 & 0 & 1 & 0 & 0 & 0 & 0 & 0 & 0 & 0 & 0 & 1 & 0 & 0 & 1 & 0
    \end{pNiceArray}
\end{equation*}
is not a valid copy because the qubit $v_{1, 1}$ is already used by the first stabilizer. Note that every original $X$ stabilizer is kept, although now in a permuted form.

In addition to the copied stabilizers, $(q_X - 1)n$ new $X$ stabilizers are added to link the copies of $v_i$ such that they collectively ``behave'' like the original, single qubit. These are weight two and of the form $v_{i, j} v_{i, j + 1}$ for $1 \leq j \leq q_X - 1$. (These are not constrained by the ``validity'' concept required for the previous stabilizers.) For example,
\begin{gather*}
    \begin{pNiceArray}{ccc|ccc|ccc|ccc|ccc|ccc}
        1 & 1 & 0 & 0 & 0 & 0 & 0 & 0 & 0 & 0 & 0 & 0 & 0 & 0 & 0 & 0 & 0 & 0
    \end{pNiceArray},\\
    \begin{pNiceArray}{ccc|ccc|ccc|ccc|ccc|ccc}
        0 & 1 & 1 & 0 & 0 & 0 & 0 & 0 & 0 & 0 & 0 & 0 & 0 & 0 & 0 & 0 & 0 & 0
    \end{pNiceArray},\\
    \begin{pNiceArray}{ccc|ccc|ccc|ccc|ccc|ccc}
        0 & 0 & 0 & 1 & 1 & 0 & 0 & 0 & 0 & 0 & 0 & 0 & 0 & 0 & 0 & 0 & 0 & 0
    \end{pNiceArray},\\
    \begin{pNiceArray}{ccc|ccc|ccc|ccc|ccc|ccc}
        0 & 0 & 0 & 0 & 1 & 1 & 0 & 0 & 0 & 0 & 0 & 0 & 0 & 0 & 0 & 0 & 0 & 0
    \end{pNiceArray},\\
    \vdots
\end{gather*}
This imposes a classical repetition code on the copies; see \cref{fig:copying} below.

The commutativity with the $Z$ stabilizers is maintained on every copy by putting the value at $v_i$ at $v_{i, j}$ for all $1 \leq j \leq q_X$. For example, if $\begin{pmatrix}1 & 0 & 1 & 0 & 1 & 0\end{pmatrix}$ is a $Z$ stabilizer, then for $q_X = 3$, the new $Z$ stabilizer is
\begin{equation*}
    \begin{pNiceArray}{ccc|ccc|ccc|ccc|ccc|ccc}
        1 & 1 & 1 & 0 & 0 & 0 & 1 & 1 & 1 & 0 & 0 & 0 & 1 & 1 & 1 & 0 & 0 & 0
    \end{pNiceArray}.
\end{equation*}
This comes at the price of increasing $w_z$, which will be dealt with in subsequent steps.

\begin{example}
Copying the stabilizers
\begin{align*}
    H_X &= \begin{pmatrix}
        1 & 1 & 1 & 0 & 0 & 0\\
        1 & 1 & 0 & 0 & 1 & 1\\
        1 & 0 & 1 & 1 & 1 & 0\\
        1 & 0 & 0 & 0 & 0 & 1
    \end{pmatrix}\\
    H_Z &= \,\, \begin{pmatrix}
        1 & 0 & 1 & 0 & 0 & 1
    \end{pmatrix}
\end{align*}
gives
\begin{align}\label{tmatrixcopying}
    \tilde{H}_X &= \begin{pNiceArray}{cccc|cccc|cccc|cccc|cccc|cccc}[columns-width=auto,margin]
        1 &   &   &   & 1 &   &   &   & 1 &   &   &   &   &   &   &   &   &   &   &   &   &   &   &   \\
  & 1 &   &   &   & 1 &   &   &   &   &   &   &   &   &   &   & 1 &   &   &   & 1 &   &   &   \\
  &   & 1 &   &   &   &   &   &   & 1 &   &   & 1 &   &   &   &   & 1 &   &   &   &   &   &   \\
  &   &   & 1 &   &   &   &   &   &   &   &   &   &   &   &   &   &   &   &   &   & 1 &   &   \\
  \hline
1 & 1 &   &   &   &   &   &   &   &   &   &   &   &   &   &   &   &   &   &   &   &   &   &   \\
  & 1 & 1 &   &   &   &   &   &   &   &   &   &   &   &   &   &   &   &   &   &   &   &   &   \\
  &   & 1 & 1 &   &   &   &   &   &   &   &   &   &   &   &   &   &   &   &   &   &   &   &   \\
  \hline
  &   &   &   & 1 & 1 &   &   &   &   &   &   &   &   &   &   &   &   &   &   &   &   &   &   \\
  &   &   &   &   & 1 & 1 &   &   &   &   &   &   &   &   &   &   &   &   &   &   &   &   &   \\
  &   &   &   &   &   & 1 & 1 &   &   &   &   &   &   &   &   &   &   &   &   &   &   &   &   \\
  \hline
  &   &   &   &   &   &   &   & 1 & 1 &   &   &   &   &   &   &   &   &   &   &   &   &   &   \\
  &   &   &   &   &   &   &   &   & 1 & 1 &   &   &   &   &   &   &   &   &   &   &   &   &   \\
  &   &   &   &   &   &   &   &   &   & 1 & 1 &   &   &   &   &   &   &   &   &   &   &   &   \\
  \hline
  &   &   &   &   &   &   &   &   &   &   &   & 1 & 1 &   &   &   &   &   &   &   &   &   &   \\
  &   &   &   &   &   &   &   &   &   &   &   &   & 1 & 1 &   &   &   &   &   &   &   &   &   \\
  &   &   &   &   &   &   &   &   &   &   &   &   &   & 1 & 1 &   &   &   &   &   &   &   &   \\
  \hline
  &   &   &   &   &   &   &   &   &   &   &   &   &   &   &   & 1 & 1 &   &   &   &   &   &   \\
  &   &   &   &   &   &   &   &   &   &   &   &   &   &   &   &   & 1 & 1 &   &   &   &   &   \\
  &   &   &   &   &   &   &   &   &   &   &   &   &   &   &   &   &   & 1 & 1 &   &   &   &   \\
  \hline
  &   &   &   &   &   &   &   &   &   &   &   &   &   &   &   &   &   &   &   & 1 & 1 &   &   \\
  &   &   &   &   &   &   &   &   &   &   &   &   &   &   &   &   &   &   &   &   & 1 & 1 &   \\
  &   &   &   &   &   &   &   &   &   &   &   &   &   &   &   &   &   &   &   &   &   & 1 & 1 
    \end{pNiceArray}\\
    \tilde{H}_Z &= \,\, \begin{pNiceArray}{cccc|cccc|cccc|cccc|cccc|cccc}[columns-width=auto,margin]
        1 & 1 & 1 & 1 &   &   &   &   & 1 & 1 & 1 & 1 &   &   &   &   &   &   &   &   & 1 & 1 & 1 & 1 
    \end{pNiceArray}.
\end{align}
The parameters have transformed from $(w_X = 4, q_X = 4, w_Z = 3, q_Z = 1)$ to  $(\tilde{w}_X = 4, \tilde{q}_X = 3, \tilde{w}_Z = 12, \tilde{q}_Z = 1)$.
\end{example}

It follows from above that $\tilde{n} = q_X n$, $\tilde{n}_X = n_X + (q_X - 1) n$, $\tilde{n}_Z = n_Z$, $\tilde{w}_X = w_X$, $\tilde{q}_X = \min \{q_X, 3\}$, $\tilde{w}_Z = q_x w_Z$, and $\tilde{q}_Z = q_Z$. It follows that $\tilde{k} = \tilde{n} - \tilde{n}_X - \tilde{n}_Z = n - n_x - n_z = k$. The dimension is traditionally computed by counting the number of logical operators, but since we know the dimension already, we will write down $k$ independent operators that commute with the stabilizers and are therefore logical operators. A $Z$-logical operator must commute with both the old and new $X$ stabilizers. The old $X$ stabilizers are supported on qubits $v_{i, 1}$ for $1 \leq i \leq n$. Fix an $i$ in the overlap between an old $X$ stabilizer and old $Z$ logical. This does not commute with the new $X$ stabilizer $v_{i, 1} v_{i, 2}$. The only way to fix this is to extend the support of the $Z$ stabilizer to all of the copied qubits $\{v_{i, 1}, \hdots, v_{i, q_X}\}$. New $Z$ logicals are thus of the form $z \otimes \begin{pmatrix} 1 & \hdots & 1\end{pmatrix}$, where $z$ is a $Z$-logical operator of the input code and the all-ones vector has length $q_X$. The $X$ logicals of the input code still commute with the new $Z$ stabilizers on the bits $v_{i, 1}$. This gives $\tilde{d}_X = d_X$ and $\tilde{d}_Z = q_X d_Z$.

\begin{figure}[t!]
    \centering
    \subfloat[][]{
        \centering
        \includegraphics[height=3cm]{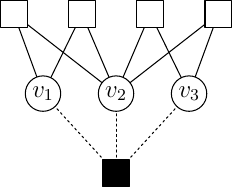}
        \label{fig:Tannercopyinga}
    }
    \hspace{4cm}
    \subfloat[][]{
        \centering
        \includegraphics[height=3cm]{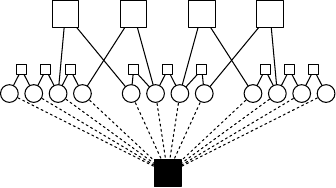}
        \label{fig:Tannercopyingb}
    }
    \caption{Open squares represent X stabilizers, filled squares Z stabilizers, and circles qubits. (a) The maximum column weight is $q_X = 4$ at vertex $v_2$. (b) The result of applying the copying procedure to (a). The copied variables in the repetition codes would have labels \(v_{1, 1}, v_{1, 2}, v_{1, 3}, v_{1, 4}, v_{2, 1},\dots\), and so on from left-to-right.}
    \label{fig:copying}
\end{figure}
Graphically, copying replaces each variable node in the Tanner graph with a repetition code of $X$ stabilizers protecting against phase errors. For example, the Tanner graph of \cref{fig:Tannercopyinga} is transformed to that of \cref{fig:Tannercopyingb}. The manner in which the edges from the check nodes are attached to the repetition codes correspond to the choice in placing $v_i$ in its copies. In particular, the ordering of the original stabilizers induces a potential permutation of the edges. We will discuss implications of this in \cref{sec:itdec}.

\subsubsection*{Gauging}
The goal of gauging is to reduce $w_X$ to less than or equal to three without increasing $q_X$. Consider an $X$ stabilizer of weight $w > 3$ with support on qubits labeled by $\{v_1, \hdots, v_w\}$. For each such stabilizer, gauging introduces $w - 3$ new qubits, $\{v^\prime_1, \hdots, v^\prime_{w - 3}\}$. These are in addition to any new qubits introduced by copying. The input $X$ stabilizer is replaced by new $X$ stabilizers with supports
\begin{equation*}
    \{v_1, v_2, v^\prime_1\}, \{v_3, v^\prime_1, v^\prime_2\}, \hdots, \{v_{w - 2}, v^\prime_{w - 4}, v^\prime_{w - 3}\}, \{v_{w - 1}, v^\prime_{w - 3}, v_w\}.
\end{equation*}
To see this in matrix form, assume without loss of generality that the support of the stabilizer is permuted to the first $w$ qubits. Then,
\begin{equation}\label{gaugingeq}
    \begin{pNiceArray}{ccccccc}[first-row]
        v_1 & \cdots & v_w & & & &\\
        1 & \cdots & 1 & 0 & \cdots & 0
    \end{pNiceArray}
    \mapsto
    \begin{pNiceArray}{ccccccccccccc}[first-row]
        v_1 & v_2 & v_3 & \cdots & v_{w - 2} & v_{w - 1} & v_w & & v^\prime_1 & v^\prime_2 & \cdots & v^\prime_{w - 2} & v^\prime_{w - 3}\\
        1 & 1 &   &        &   &   &   & \cdots & 1 &   &        &   & \\
          &   & 1 &        &   &   &   & \cdots & 1 & 1 &        &   & \\
          &   &   & \ddots &   &   &   & \cdots &   &   & \ddots &   & \\
          &   &   &        & 1 &   &   & \cdots &   &   &        & 1 & 1\\
          &   &   &        &   & 1 & 1 & \cdots &   &   &        &   & 1
    \end{pNiceArray},
\end{equation}
where the column of dots represent qubits not in the support of the current stabilizer. Note that the right-hand side is $H^{\mathrm{T}}_{w - 2}$, where the transpose
\begin{equation}\label{repcode}
    H_\ell = \begin{pmatrix}
        1 & 1 &  & & & \\
          & 1 & 1 & & & \\
          &    & \ddots & & & \\
          &    &        & 1 & 1 & \\
          &    &        &    & 1 & 1
    \end{pmatrix}
\end{equation}
is a parity-check matrix for the $[\ell, 1, \ell]$ classical repetition code. The left side of the matrix has support on the original qubits. The new (primed) qubits are not used by any other $X$ stabilizer, leaving the right side as the direct sum $\oplus_i H^{\mathrm{T}}_{w_i - 2}$, where $w_i$ is the weight of the $i$th reduced $X$ stabilizer.

At this point, the $Z$ stabilizers only have support on the original qubits and may no longer commute with the rows of \cref{gaugingeq}. Consider the $w - 2$ new rows of a reduced $X$ stabilizers, where the new stabilizers are arranged in the order of \cref{gaugingeq}. If the $j$th $Z$ stabilizer anti-commutes with the product of new $X$ stabilizers 1 to $m$ for $i \in \{1, \hdots, w - 3\}$, then set the $m$th new column of the $j$th row of $\tilde{H}_Z$ to one, where the new columns are those that were introduced when applying gauging to the original $X$ stabilizer. We will mention an alternative method to build commuting $Z$ stabilizers in \cref{sec:modifiedcopying}.

\begin{example}\label{ex:gauging}
Gauging the stabilizers
\begin{align*}
    H_X &= \begin{pNiceArray}{ccccccccccccccc}[columns-width=auto,margin]
          & 1 & 1 &   &   & 1 & 1 &   &   & 1 & 1 &   &   & 1 & 1 \\
        1 &   & 1 &   & 1 &   & 1 &   & 1 &   & 1 &   & 1 &   & 1
    \end{pNiceArray}\\
    H_Z &= \begin{pNiceArray}{ccccccccccccccc}[columns-width=auto,margin]
          &   &   &   &   &   &   & 1 & 1 & 1 & 1 & 1 & 1 & 1 & 1\\
          &   &   & 1 & 1 & 1 & 1 &   &   &   &   & 1 & 1 & 1 & 1\\
          & 1 & 1 &   &   & 1 & 1 &   &   & 1 & 1 &   &   & 1 & 1\\
        1 &   & 1 &   & 1 &   & 1 &   & 1 &   & 1 &   & 1 &   & 1
    \end{pNiceArray}
\end{align*}
gives 
\begin{align*}
    \tilde{H}_X &= \begin{pNiceArray}{ccccccccccccccc|ccccc|ccccc}[columns-width=auto,margin]
      & 1 & 1 &   &   &   &   &   &   &   &   &   &   &   &   & 1 &   &   &   &   &   &   &   &   &   \\
      &   &   &   &   & 1 &   &   &   &   &   &   &   &   &   & 1 & 1 &   &   &   &   &   &   &   &   \\
      &   &   &   &   &   & 1 &   &   &   &   &   &   &   &   &   & 1 & 1 &   &   &   &   &   &   &   \\
      &   &   &   &   &   &   &   &   & 1 &   &   &   &   &   &   &   & 1 & 1 &   &   &   &   &   &   \\
      &   &   &   &   &   &   &   &   &   & 1 &   &   &   &   &   &   &   & 1 & 1 &   &   &   &   &   \\
      &   &   &   &   &   &   &   &   &   &   &   &   & 1 & 1 &   &   &   &   & 1 &   &   &   &   &   \\
  \hline
    1 &   & 1 &   &   &   &   &   &   &   &   &   &   &   &   &   &   &   &   &   & 1 &   &   &   &   \\
      &   &   &   & 1 &   &   &   &   &   &   &   &   &   &   &   &   &   &   &   & 1 & 1 &   &   &   \\
      &   &   &   &   &   & 1 &   &   &   &   &   &   &   &   &   &   &   &   &   &   & 1 & 1 &   &   \\
      &   &   &   &   &   &   &   & 1 &   &   &   &   &   &   &   &   &   &   &   &   &   & 1 & 1 &   \\
      &   &   &   &   &   &   &   &   &   & 1 &   &   &   &   &   &   &   &   &   &   &   &   & 1 & 1 \\
      &   &   &   &   &   &   &   &   &   &   &   & 1 &   & 1 &   &   &   &   &   &   &   &   &   & 1 
\end{pNiceArray}\\
    \tilde{H}_Z &= \begin{pNiceArray}{ccccccccccccccc|ccccc|ccccc}[columns-width=auto,margin]
          &   &   &   &   &   &   & 1 & 1 & 1 & 1 & 1 & 1 & 1 & 1 &   &   &   & 1 &   &   &   &   & 1 &   \\
          &   &   & 1 & 1 & 1 & 1 &   &   &   &   & 1 & 1 & 1 & 1 &   & 1 &   &   &   &   & 1 &   &   &   \\
          & 1 & 1 &   &   & 1 & 1 &   &   & 1 & 1 &   &   & 1 & 1 &   & 1 &   & 1 &   & 1 & 1 &   &   & 1 \\
        1 &   & 1 &   & 1 &   & 1 &   & 1 &   & 1 &   & 1 &   & 1 & 1 & 1 &   &   & 1 &   & 1 &   & 1 &   
    \end{pNiceArray}.
\end{align*}
The parameters have transformed from $(w_X = 8, q_X = 2, w_Z = 8, q_Z = 4)$ to  $(\tilde{w}_X = 3, \tilde{q}_X = 2, \tilde{w}_Z = 13, \tilde{q}_Z = 4)$.
\end{example}

It follows from above that $\tilde{n}$ and $\tilde{n}_X$ increase by $w_i - 3$ for each $w_i > 3$ while $\tilde{n}_Z = n_Z$, and hence, $\tilde{k} = k$. We also have $\tilde{w}_X =\min \{w_X, 3\}$, $\tilde{q}_X = q_X$, $\tilde{w}_Z \geq w_Z$, and $\tilde{q}_Z \geq q_Z$. The old $X$ logical operators (appended with zeros for the new primed qubits) still commute with the new $Z$ stabilizes, however, $\tilde{d}_X$ could decrease. Take an element of $\ker \tilde{H}_Z$ with non-zero support on the new qubits that is not an $X$ stabilizer. The matrices $H^{\mathrm{T}}_{w_i - 2}$ row reduce to the identity (and a zero row) and may be used to clean the logical operator off the support of the new qubits \cite{bravyi2009no}. This could lead to a lower weight logical representative than the logical operators of the input code. The $Z$ distance is at least $d_Z$, but we delay the proof until \cref{thm:classicaldistthm}.

Graphically, gauging transforms the $X$ stabilizers of the Tanner graph from \cref{fig:gauginga} to \cref{fig:gaugingb}. The matrices $H^{\mathrm{T}}_{w - 2}$ induce a repetition code on the check nodes instead of the variable nodes.\\
\begin{figure}[t!]
    \centering
    \subfloat[][]{
        \centering
        \includegraphics[width=.3\textwidth]{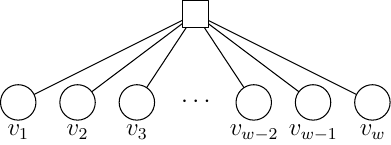}
        \label{fig:gauginga}
    }
    \hspace{1cm}
    \subfloat[][]{
        \centering
        \includegraphics[width=.3\textwidth]{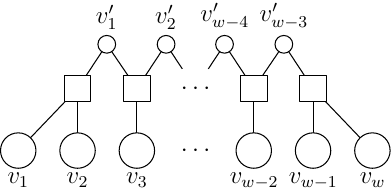}
        \label{fig:gaugingb}
    }
    \caption{(a) The input Tanner graph. (b) The affect of applying gauging to the $X$ stabilizer in (a).}
    \label{fig:gauging}
\end{figure}

\noindent {\bf Remark}: Hastings treats copying and gauging as a single operation \cite{hastings2021quantum}. Here the new qubits resulting from copying are called ``copied'' qubits and the new qubits resulting from gauging ``new'' qubits.

\subsubsection*{Thickening \& Choosing Heights}
The goal of thickening is to increase $d_X$, and the goal of choosing heights is to reduce $q_Z$. On its own, thickening is commonly referred to as (a special case of) distance balancing using the chain complexes
\begin{align*}
    &\mathcal{A} \, : \, \mathbb{F}^{n_Z}_2 \xrightarrow{H_Z^{\mathrm{T}}} \mathbb{F}^n_2 \xrightarrow{H_X} \mathbb{F}^{n_X}_2\\
    &\mathcal{B} \, : \, \mathbb{F}^{\ell - 1}_2 \xrightarrow{H^{\mathrm{T}}_\ell} \mathbb{F}^\ell_2,
\end{align*}
where $H_\ell$ is defined in \cref{repcode}. The details of the tensor product of these two chains are worked out in \cref{app:tenprodch}. There we show that $\mathcal{A} \otimes \mathcal{B}$ gives
\begin{equation*}
    C_3 \xrightarrow{\partial_3} C_2 \xrightarrow{\left(\tilde{H}_Z\right)^{\mathrm{T}}} C_1 \xrightarrow{\tilde{H}_X} C_0
\end{equation*}
where the code is described by stabilizer matrices (\cref{3by2chainmatsb} and the transpose of \cref{3by2chainmatsa})
\begin{equation}\label{thickening}
    \tilde{H}_X = \begin{pmatrix}
        H_X \otimes I_\ell & I_{n_X} \otimes H^{\mathrm{T}}_\ell
    \end{pmatrix}
    \quad , \quad
    \tilde{H}_Z = \begin{pmatrix}
        H_Z \otimes I_{\ell} & 0\\
        I_n \otimes H_\ell & H^{\mathrm{T}}_X \otimes I_{\ell - 1}
    \end{pmatrix},
\end{equation}
and has distances $\tilde{d}_X = \ell d_X$ and $\tilde{d}_Z = d_Z$. Thickening can be used to counteract the decrease in $X$ distance that gauging may have caused. In addition to $\tilde{H}_X$ and $\tilde{H}_Z$, there is a third matrix (\cref{thirdmap})\footnote{The map $\partial_3$ can be thought of as metachecks \cite{campbell2019theory,pryadko2019higher,quintavalle2021single} for the $Z$ stabilizers induced by the tensor product in thickening. Choosing heights eliminates the metachecks.}
\begin{equation}\label{partial3}
    \partial_3 = \begin{pmatrix}
        I_{n_Z} \otimes H^{\mathrm{T}}_\ell\\
        H^{\mathrm{T}}_Z \otimes I_{\ell - 1}
    \end{pmatrix},
\end{equation}
which satisfies $\tilde{H}^{\mathrm{T}}_Z \partial_3 = 0$. This implies that some of the $Z$ stabilizers are redundant with linear dependencies determined by $\partial_3$. Note that $I_{n_Z} \otimes H^{\mathrm{T}}_\ell = \bigoplus_{i = 1}^{n_z} H^{\mathrm{T}}_\ell$. Each row $h_j$ of $H_Z$ corresponds to a block in $\bigoplus_{i = 1}^{n_z} H^{\mathrm{T}}_\ell$, and the set of stabilizers $\begin{pmatrix} h_j \otimes I_\ell & 0 \end{pmatrix}$. The block diagonal structure of $\bigoplus_{i = 1}^{n_z} H^{\mathrm{T}}_\ell$ implies that the $\ell$ stabilizers in $h_j \otimes I_\ell$ are related via $\ell - 1$ constraints. Since these stabilizers are the cause of the potentially high column weights, removing the redundant stabilizers can help lower $q_Z$. Hastings calls making the choice of which row of $\begin{pmatrix} h_j \otimes I_\ell & 0 \end{pmatrix}$ to keep \emph{choosing heights}. We represent this by a length $n_Z$ vector $\mathrm{heights}$ whose $j$th element is an integer between $1$ and $\ell$ specifying which row of $\begin{pmatrix} h_j \otimes I_\ell & 0 \end{pmatrix}$ is kept. See \cref{thchex} for an explicit example of thickening, using $\partial_3$ to derive the row dependencies, and choosing heights.

When a different height is chosen for two different rows of $H_Z$, the resulting two rows in $\tilde{H}_Z$ have no qubits in common, thus a good choice of heights and a sufficient amount of thickening reduces $q_Z$. In the extreme case, choosing $\ell = n_Z$ and $\mathrm{heights} = (1, \hdots, n_Z)$ ensures a column weight of one within the block and therefore $q_Z = 3$ at the cost of extra qubits. A greedy algorithm can be used when $\ell < n_Z$ to choose heights that satisfy certain parameters such as a target $q_Z$. The properties of the resulting code depend highly on the choice of $\ell$ and $\mathrm{heights}$.

\begin{example}\label{thchex}
Thickening the stabilizers
\begin{equation}\label{thexample}
    H_X = \begin{pmatrix} 1 & 1 & 1 & 1 \end{pmatrix} \quad , \quad H_Z = \begin{pmatrix} 1 & 1 & 0 & 0\\ 1 & 0 & 1 & 0 \end{pmatrix},
\end{equation}
with $\ell = 3$ gives
\begin{align*}
    \tilde{H}_X &= \begin{pNiceArray}{cccccccccccc|cc}
        1 &   &   & 1 &   &   & 1 &   &   & 1 &   &   & 1 &   \\
          & 1 &   &   & 1 &   &   & 1 &   &   & 1 &   & 1 & 1 \\
          &   & 1 &   &   & 1 &   &   & 1 &   &   & 1 &   & 1
    \end{pNiceArray},\\
    \tilde{H}^\prime_Z &= \begin{pNiceArray}{cccccccccccc|cc}
        1 &   &   & 1 &   &   &   &   &   &   &   &   &   & \\
          & 1 &   &   & 1 &   &   &   &   &   &   &   &   & \\
          &   & 1 &   &   & 1 &   &   &   &   &   &   &   & \\
        1 &   &   &   &   &   & 1 &   &   &   &   &   &   & \\
          & 1 &   &   &   &   &   & 1 &   &   &   &   &   & \\
          &   & 1 &   &   &   &   &   & 1 &   &   &   &   & \\
        \hline
        1 & 1 &   &   &   &   &   &   &   &   &   &   & 1 &   \\
          & 1 & 1 &   &   &   &   &   &   &   &   &   &   & 1 \\
          &   &   & 1 & 1 &   &   &   &   &   &   &   & 1 &   \\
          &   &   &   & 1 & 1 &   &   &   &   &   &   &   & 1 \\
          &   &   &   &   &   & 1 & 1 &   &   &   &   & 1 &   \\
          &   &   &   &   &   &   & 1 & 1 &   &   &   &   & 1 \\
          &   &   &   &   &   &   &   &   & 1 & 1 &   & 1 &   \\
          &   &   &   &   &   &   &   &   &   & 1 & 1 &   & 1 
    \end{pNiceArray},\qquad
    \partial_3 = \begin{pNiceArray}{cccc}
        1 &   &   & \\
        1 & 1 &   & \\
          & 1 &   & \\
          &   & 1 & \\
          &   & 1 & 1 \\
          &   &   & 1 \\
        \hline
        1 &   & 1 & \\
          & 1 &   & 1 \\
        1 &   &   & \\
          & 1 &   & \\
          &   & 1 & \\
          &   &   & 1 \\
          &   &   & \\
          &   &   &
    \end{pNiceArray}.
\end{align*}
Let $\{h_1, \hdots, h_{14}\}$ denote the rows of $\tilde{H}^\prime_Z$, where $h_1$ to $h_6$ are above and $h_7$ to $h_{14}$ below the horizontal line, respectively. Multiplying $\partial_3 \left(\tilde{H}^\prime_Z\right)^{\mathrm{T}}$ gives four equations, with the first two:
\begin{align}
    h_1 + h_2 + h_7 + h_9 &= 0 \label{eq1}\\ 
    h_2 + h_3 + h_8 + h_{10} &= 0. \label{eq2}
\end{align}
If we remove rows $h_2$ and $h_3$ but keep rows $h_1$ and $h_7$ to $h_{14}$, we can recover $h_2$ from \cref{eq1} and $h_3$ from \cref{eq2}. Hence, $h_2, h_3 \in \mathrm{rowspace}(\tilde{H}^\prime_Z)$ even if removed as rows of $\tilde{H}^\prime_Z$. A similar argument using the next two equations generated from $\partial_3 \left(\tilde{H}^\prime_Z\right)^{\mathrm{T}}$ shows that we only need to keep one of the rows $h_4$, $h_5$, and $h_6$. Choosing $\mathrm{heights} = (1, 2)$ keeps rows $h_1$ and $h_5$ and removes rows $h_2$, $h_3$, $h_4$, and $h_6$. The final $Z$ stabilizers are
\begin{equation}
    \tilde{H}_Z = \begin{pNiceArray}{cccccccccccc|cc}
        1 &   &   & 1 &   &   &   &   &   &   &   &   &   & \\
          & 1 &   &   &   &   &   & 1 &   &   &   &   &   & \\
        \hline
        1 & 1 &   &   &   &   &   &   &   &   &   &   & 1 &   \\
          & 1 & 1 &   &   &   &   &   &   &   &   &   &   & 1 \\
          &   &   & 1 & 1 &   &   &   &   &   &   &   & 1 &   \\
          &   &   &   & 1 & 1 &   &   &   &   &   &   &   & 1 \\
          &   &   &   &   &   & 1 & 1 &   &   &   &   & 1 &   \\
          &   &   &   &   &   &   & 1 & 1 &   &   &   &   & 1 \\
          &   &   &   &   &   &   &   &   & 1 & 1 &   & 1 &   \\
          &   &   &   &   &   &   &   &   &   & 1 & 1 &   & 1 
    \end{pNiceArray}.
\end{equation}
The parameters have transformed from $(w_X = 4, q_X = 1, w_Z = 2, q_Z = 2)$ to  $(\tilde{w}_X = 6, \tilde{q}_X = 2, \tilde{w}_Z = 3, \tilde{q}_Z = 4)$. This example is shown in Tanner graphs in \cref{fig:thickening}.
\end{example}

It follows from above that $\tilde{n} = \ell n + (\ell - 1)n_X$, $\tilde{n}_X = \ell n_X$, $\tilde{w}_X = w_X + 2$, $\tilde{q}_X = \max \{q_X, 2\}$, and $\tilde{w}_Z = \max \{w_Z, q_X + 2\}$. Before choosing heights, $\tilde{n}_Z = \ell n_Z + (\ell - 1)n$ and $\tilde{q}_Z = \max \{w_X, q_Z + 2\}$. The logical operators are given by the K\"{u}nneth formula (\cref{Kunneth}),
\begin{equation*}
    H_1(\mathcal{C}) = (H_1(\mathcal{A}) \otimes H_0(\mathcal{B})) \oplus (H_0(\mathcal{A}) \otimes H_1(\mathcal{B})) = H_1(\mathcal{A}) \otimes H_0(\mathcal{B}),
\end{equation*}
from which it follows that $\tilde{k} = \mathrm{dim} \, H_1(\mathcal{C}) = \mathrm{dim} \, H_1(\mathcal{A}) \, \mathrm{dim} \, H_0(\mathcal{B}) = k \cdot 1$. Neither the dimension nor the logicals are affected by removing redundant stabilizers. The homology of the repetition code is given in \cref{app:tenprodch}.

Graphically, recall that the Tanner graph has a variable node for each column (qubit) and a check node for each row (stabilizer). For the $X$ stabilizers, there are therefore variable nodes for each column of $I_{n_X} \otimes H^{\mathrm{T}}_\ell$ and separate variable nodes for each column of $H_X \otimes I_\ell$, both of which are connected to the same check nodes. The Tanner graph of $H^{\mathrm{T}}_\ell$ is simply the Tanner graph of $H_\ell$ with the variable and check nodes (columns and rows) switched. The term $I_{n_X} \otimes H^{\mathrm{T}}_\ell$ makes $n_X$ identical and unconnected copies of the Tanner graph of $H^{\mathrm{T}}_\ell$. The term $H_X \otimes I_\ell$ is equivalent to $I_\ell \otimes H_X$ up to row and column permutations but connects the $\ell$ copies of $H_X$ in a different pattern. The $Z$ stabilizers are similar but now there is an extra set of check nodes for $H_Z \otimes I_\ell$ acting on the same variable nodes as $I_n \otimes H_\ell$. An example is given in \cref{fig:thickening}.
\begin{figure}[t!]
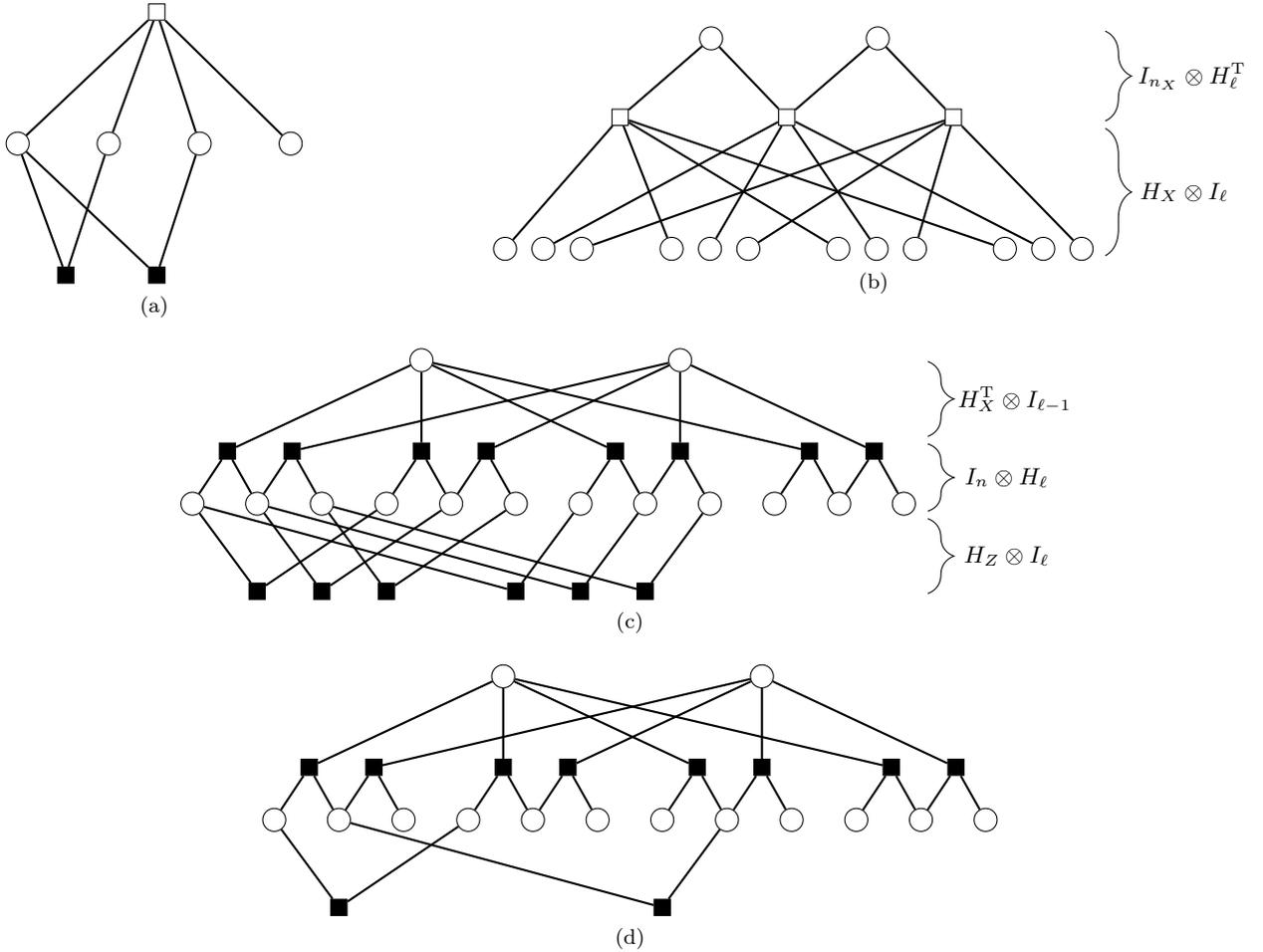

    \centering
    \subfloat[][]{
        \centering
        \thickeningHastingsa
        \label{fig:thickeningHastingsa}
    }
    \hspace{2cm}
    \subfloat[][]{
        \centering
        \thickeningHastingsbX
        \label{fig:thickeningHastingsbX}
    }\\
    \subfloat[][]{
        \centering
        \thickeningHastingsbZ
        \label{fig:thickeningHastingsbZ}
    }\\
    
    \subfloat[][]{
        \centering
        \chHastingsZ
        \label{fig:thickeningHastingscZ}
    }
    \caption{(a) The Tanner graph of \cref{thexample} with $X$ stabilizer on top and two $Z$ stabilizers on bottom. (b) The effect of applying thickening to the $X$ stabilizer of (a). The $Z$ stabilizers are omitted for clarity. (c) The effect of applying thickening to the $Z$ stabilizers of (a). The $X$ stabilizers are omitted for clarity. (d) Using the notation of the text, the effect of choosing $\mathrm{heights} = (1, 2)$ on (c). The $X$ stabilizers remain unchanged. The Tanner graph for the full code consists of the stabilizers shown in both (b) and (d).}
    \label{fig:thickening}
\end{figure}

\subsubsection*{Coning}
The necessary homological algebra to motivate the formulas here is reviewed in \cref{app:homological}. It is suggested for readers unfamiliar with tensor products of chain complexes and the mapping cone to read this first. Here, we will only describe the coning of what Hastings calls \emph{reasonable} codes, which will be defined later in the context it arises. Readers interested in the modifications required for \emph{unreasonable} codes are referred to the discussion in Ref.~\cite{hastings2021quantum}.

A set of stabilizer generators for the 15-qubit quantum Reed-Muller code $\mathrm{QRM}(4)$~\cite{knill1996threshold,anderson2014fault} is
\begin{align}\label{RMstabs}
    H_X = &\begin{pmatrix}
        1 &   & 1 &   & 1 &   & 1 &   & 1 &   & 1 &   & 1 &   & 1\\
          & 1 & 1 &   &   & 1 & 1 &   &   & 1 & 1 &   &   & 1 & 1\\
          &   &   & 1 & 1 & 1 & 1 &   &   &   &   & 1 & 1 & 1 & 1\\
          &   &   &   &   &   &   & 1 & 1 & 1 & 1 & 1 & 1 & 1 & 1
    \end{pmatrix},\\
    H_Z = &\begin{pmatrix}
        1 &   & 1 &   & 1 &   & 1 &   & 1 &   & 1 &   & 1 &   & 1\\
          & 1 & 1 &   &   & 1 & 1 &   &   & 1 & 1 &   &   & 1 & 1\\
          &   &   & 1 & 1 & 1 & 1 &   &   &   &   & 1 & 1 & 1 & 1\\
          &   &   &   &   &   &   & 1 & 1 & 1 & 1 & 1 & 1 & 1 & 1\\
          &   & 1 &   &   &   & 1 &   &   &   & 1 &   &   &   & 1\\
          &   &   &   & 1 &   & 1 &   &   &   &   &   & 1 &   & 1\\
          &   &   &   &   & 1 & 1 &   &   &   &   &   &   & 1 & 1\\
          &   &   &   &   &   &   &   &   & 1 & 1 &   &   & 1 & 1\\
          &   &   &   &   &   &   &   &   &   &   & 1 & 1 & 1 & 1\\
          &   &   &   &   &   &   &   & 1 &   & 1 &   & 1 &   & 1
    \end{pmatrix}, \nonumber
\end{align}
which has $(w_X = 8, q_X = 4, w_Z = 8, q_Z = 10)$. Applying copying, gauging, and then thickening and choosing heights with $\ell = 3$, $\mathrm{heights} = (2, 1, 2, 1, 2, 3, 1, 3, 3, 1)$ gives the sequence of parameters $(w_X, q_X, w_Z, q_Z)$:
\begin{equation*}
    (8, 4, 8, 10) \xrightarrow{copying} (8, 3, 32, 10) \xrightarrow{gauging} (3, 3, 43, 10) \xrightarrow{th. \, \& \, c.h.} (5, 3, 43, 5).
\end{equation*}
Na\"{i}vely trying to reduce $w_Z$ using a second round of gauging applied to the $Z$ stabilizers gives $(w_X = 85, q_X = 22, w_Z = 3, q_Z = 5)$, which is counterproductive. This occurs in general and is not unique to this example. Coning using the mapping cone of \cref{app:mappingcone} serves to reduce $w_Z$ while preserving $w_X$, $q_X$ and $q_Z$.

Coning is more involved than copying, gauging, or thickening and choosing heights. For this reason, we begin with a brief overview, which is aligned with the discussion in \cref{app:mappingcone}. First, pick a $Z$ stabilizer to be reduced, $Z_i$, and remove it from $H_Z$ to create $H^{(r)}_Z$. Second, view this code as a chain complex in the standard way,
\begin{equation*}
    \begin{tikzcd}
	\mathcal{B}: {C_2} & {C_1} & {C_0}
	\arrow["{\left(H^{(r)}_Z\right)^{\mathrm{T}}}", from=1-1, to=1-2]
	\arrow["{H_X}", from=1-2, to=1-3]
    \end{tikzcd},
\end{equation*}
with $C_2 = \mathbb{F}^{n_Z}_2$, $C_1 = \mathbb{F}^n_2$, and $C_0 = \mathbb{F}^{n_X}_2$. Next, define a new chain complex $\mathcal{A}_i: \mathcal{Q}_i \to \mathcal{X}_i \to \mathcal{R}_i$, where $C_1 = \mathcal{Q}_i = \mathbb{F}^{|\mathrm{supp}(Z_i)|}_2$, $C_0 = \mathcal{X}_i$ is derived from the overlap of $Z_i$ with the $X$ stabilizers, and $C_{-1} = \mathcal{R}_i$ is constructed, if necessary, based on $\mathcal{Q}_i$ and $\mathcal{X}_i$ to ensure that no new logical operators are added with support only on the new qubits. All three spaces are defined in detail below. These two chain complexes will be connected with appropriate chain maps $f^{(i)}$,
\begin{equation*}
    \begin{tikzcd}
	& {\mathcal{Q}_i} & {\mathcal{X}_i} & {\mathcal{R}_i} \\
	{C_2} & {C_1} & {C_0}
	\arrow["{\partial^{(i)}_1}", from=1-2, to=1-3]
	\arrow["{\partial^{(i)}_0}", from=1-3, to=1-4]
	\arrow["{\left(H^{(r)}_Z\right)^{\mathrm{T}}}"', from=2-1, to=2-2]
	\arrow["{H_X}"', from=2-2, to=2-3]
	\arrow["{f^{(i)}_1}"', from=1-2, to=2-2]
	\arrow["{f^{(i)}_0}"', from=1-3, to=2-3]
    \end{tikzcd},
\end{equation*}
which induces the mapping cone
\begin{equation}\label{mappingconesingle}
    \begin{tikzcd}
	& {\mathcal{Q}_i} & {\mathcal{X}_i} & {\mathcal{R}_i} \\
	& \oplus & \oplus & \oplus \\
	& {C_2} & {C_1} & {C_0} \\
	{\mathrm{cone}\left(f^{(i)}\right):} & {C_2 \oplus \mathcal{Q}_i} & {C_1 \oplus \mathcal{X}_i} & {C_0 \oplus \mathcal{R}_i}
	\arrow["{\partial^{(i)}_1}", from=1-2, to=1-3]
	\arrow["{\partial^{(i)}_0}", from=1-3, to=1-4]
	\arrow["{\left(H^{(r)}_Z\right)^{\mathrm{T}}}"', from=3-2, to=3-3]
	\arrow["{H_X}"', from=3-3, to=3-4]
	\arrow["{f^{(i)}_0}"', from=1-3, to=3-4]
	\arrow["{f^{(i)}_1}"', from=1-2, to=3-3]
	\arrow["{\tilde{H}^{\mathrm{T}}_Z}", from=4-2, to=4-3]
	\arrow["{\tilde{H}_X}", from=4-3, to=4-4]
    \end{tikzcd}
\end{equation}
with
\begin{equation}\label{coningstabssingle}
    \tilde{H}_X = \begin{pmatrix}
        \partial^{(i)}_0 & \\
        f^{(i)}_0 & H_X\\
        \end{pmatrix}
    \quad , \quad
    \tilde{H}_Z = \begin{pmatrix}
        \left( \partial^{(i)}_1\right)^{\mathrm{T}} & \left( f^{(i)}_1 \right)^{\mathrm{T}} \\
        & H^{(r)}_Z
        \end{pmatrix}.
\end{equation}
This is for a fixed $i$. The case for $m$ stabilizers to be weight reduced gives\footnote{The $n_Z-m$ $Z$ stabilizers that remain in $H_Z^{(r)}$ are directly visible in the resulting code and are referred to as ``direct stabilizers'', while the other $m$ $Z$ stabilizers are induced by the resulting code and are referred to as ``induced stabilizers'' \cite{hastings2021quantum}.}
\begin{equation}\label{mappingconemultiple}
    \begin{tikzcd}
    	{\mathcal{Q}_1} && {\mathcal{X}_1} && {\mathcal{R}_1} \\
    	\oplus && \oplus && \oplus \\
    	\vdots && \vdots && \vdots \\
    	\oplus && \oplus && \oplus \\
    	{\mathcal{Q}_m} && {\mathcal{X}_m} && {\mathcal{R}_m} \\
    	\\
    	{C_2} && {C_1} && {C_0}
    	\arrow["{\partial^{(1)}_1}"{description}, from=1-1, to=1-3]
    	\arrow["{\partial^{(1)}_0}"{description}, from=1-3, to=1-5]
    	\arrow["{\left(H^{(r)}_Z\right)^{\mathrm{T}}}"{description}, from=7-1, to=7-3]
    	\arrow["{H_X}"{description}, from=7-3, to=7-5]
    	\arrow["{f^{(1)}_0}"{description}, from=1-3, to=7-5]
    	\arrow["{f^{(1)}_1}"{description}, from=1-1, to=7-3]
    	\arrow["{f^{(m)}_1}"{description}, from=5-1, to=7-3]
    	\arrow["{f^{(m)}_0}"{description}, from=5-3, to=7-5]
    	\arrow["{\partial^{(m)}_1}"{description}, from=5-1, to=5-3]
    	\arrow["{\partial^{(m)}_0}"{description}, from=5-3, to=5-5]
    \end{tikzcd}.
\end{equation}
and
\begin{equation}\label{coningstabs}
    \tilde{H}_X = \begin{pmatrix}
            \partial^{(1)}_0 & & & \\
            & \ddots & & \\
            & & \partial^{(m)}_0 & \\
        f_0^{(1)} & \cdots & f_0^{(m)} & H_X
    \end{pmatrix}
    \quad , \quad
    \tilde{H}_Z = \begin{pmatrix}
        \left(\partial^{(1)}_1\right)^{\mathrm{T}} & & & \left(f_1^{(1)}\right)^{\mathrm{T}} \\
         & \ddots & & \vdots \\
         & & \left(\partial^{(m)}_1\right)^{\mathrm{T}} & \left(f_1^{(m)}\right)^{\mathrm{T}}\\
         & & & H_Z^{(r)}
    \end{pmatrix}.
\end{equation}
These commute by definition of the chain complex. If all $Z$ stabilizers need to be reduced, then $H^{(r)}_Z$ and $C_2$ will be empty. There are several degrees of freedom in constructing these objects and the final parameters of the code are highly dependent on the choices made.

Now in full detail, let $Z_i = (z_1, \hdots, z_n)$ be a $Z$-stabilizer generator whose weight needs to be reduced. Define $\mathcal{Q}_i$ be the vector space $\mathbb{F}^{|\mathrm{supp}(Z_i)|}_2$. There is a natural embedding of the standard basis of $\mathcal{Q}_i$ to all weight-one patterns of $\mathbb{F}^n_2$ contained in the support of $Z_i$, $f^{(i)}_1 : \mathcal{Q}_i \to C_1$, which extends to all of $\mathcal{Q}_i$ by linearity: if the $k$th element of $\mathrm{supp}(Z_i)$ is $z_p$, then $f^{(i)}_1$ maps the $k$th elementary basis element of $\mathcal{Q}_i$ to the $p$th elementary basis element of $\mathbb{F}^n_2$. A matrix representation of $f^{(i)}_1$ has maximum row and column weights equal to one, which helps control $w_Z$ and $q_Z$ in \cref{coningstabs}.

The support of any $X$ stabilizer overlaps with the support of $Z_i$ an even number of times. Construct a set of tuples $\{(S, j, k)\}$, where $S$ is an $X$-stabilizer generator with overlapping support on indices $j, k \in \mathrm{supp}(S) \cap \mathrm{supp}(Z_i)$. It is not necessary to include all possible pairs $(j, k)$ but the entire overlap must be represented for each $S$. For example, if the overlap occurs on indices $\{1, 2, 3, 4\}$, then $\{(S, 1, 2), (S, 3, 4)\}$ or $\{(S, 1, 2), (S, 2, 3), (S, 3, 4)\}$ or $\{(S, 1, 2), (S, 1, 3), (S, 1, 4), (S, 2, 3), (S, 2, 4), (S, 3, 4)\}$ are three valid options. Define $\mathcal{X}_i$ to be the vector space $\mathbb{F}^{|\{(S, j, k)\}|}_2$, where the $c$th standard basis element may be associated to the $c$th element of $\{(S, j, k)\}$. There is a map from $\mathcal{X}_i$ to the space of $X$ syndromes, $f^{(i)}_0 : \mathcal{X}_i \to C_0$, which, in matrix form, has a $1$ in row $r$ and column $c$ if the $X$ stabilizer associated with the $c$th basis element is the $r$th $X$ stabilizer represented in $H_X$.

There is also a natural connection between $\mathcal{Q}_i$ and $\mathcal{X}_i$: the elements of $\mathcal{Q}_i$ are associated with qubits and $\mathcal{X}_i$ with pairs of qubits. Make a graph with a vertex for every basis element of $\mathcal{Q}_i$ with an edge between the vertices associated with the qubits in $(S,j, k) \in \mathcal{X}_i$.\footnote{This is potentially a multigraph because two different $X$ stabilizers could have had the same overlap with $Z_i$.} The map $\partial^{(i)}_1: \mathcal{Q}_i \to \mathcal{X}_i$ is called the coboundary map of the graph and takes the vertices to the basis elements of $\mathcal{X}_i$ that include those vertices. For example, the coboundary map of the square with vertices $V = \{v_1, v_2, v_3, v_4\}$ and edges $E = \{v_1 v_2, v_2 v_3, v_3 v_4, v_4 v_1\}$ is the $|E| \times |V|$ edge-vertex incidence matrix
\begin{equation*}
    \begin{pNiceArray}{cccc}[first-row, first-col]
                & v_1 & v_2 & v_3 & v_4\\
        v_1 v_2 & 1 & 1 & 0 & 0\\
        v_2 v_3 & 0 & 1 & 1 & 0\\
        v_3 v_4 & 0 & 0 & 1 & 1\\
        v_4 v_1 & 1 & 0 & 0 & 1
    \end{pNiceArray}.
\end{equation*}

Looking back at \cref{coningstabs}, the number of rows of $\partial^{(i)}_1$, and hence the spaces themselves, determine the number of new qubits in the weight reduction ($|\mathcal{X}_i|$). Assuming coning is applied to the output of all the previous steps, the input can be made to satisfy $w_X \leq 5$, $q_X \leq 3$, and $q_Z \leq 3$ and are of the form \cref{thickening}. The $Z$ stabilizers corresponding to the bottom row of \cref{thickening} have weight no more than five (since $H_X$ there has $q_X \leq 3$) and therefore do not need to be reduced. The $I_{n_X} \otimes H^{\mathrm{T}}$ term in the $X$ stabilizers contributes two to the weight but overlaps the $0$ block in the $Z$ stabilizers that need reduction. Thus, the support of an $X$ stabilizer can therefore only intersect the support of $Z_i$ zero or two times. This keeps $\mathcal{X}_i$ smaller compared to applying coning to a completely random input.

Finally, we construct the space $\mathcal{R}_i$. The logical operators are determined by the chain complex of homologies (long exact sequence \cref{LES})
\begin{equation}\label{homeq}
    0 \to H_1(\mathcal{A}_i) \to H_1(\mathcal{B}) \to H_1\left(\mathrm{cone}\left(f^{(i)}\right)\right) \to H_0(\mathcal{A}_i).
\end{equation}
The term $H_1(\mathcal{B})$ represents the logical operators of the original code plus the logical operator created by deleting the $Z$ stabilizer that is being reduced. The logical operators of the output of coning are elements of $H_1\left(\mathrm{cone}\left(f^{(i)}\right)\right)$. In order to find a relationship between the two, recall that weight reduction is considered an operation on a code as compared to a method of constructing a new code. Hence, we require $k$ remains the same throughout the entire process. In order for this to happen, we need $H_0(\mathcal{A}_i)$ to be zero. The space $\mathcal{R}_i$ is designed to make this happen. Recall (\cref{app:homological}) that $H_0(\mathcal{A}_i) = \ker \partial^{(i)}_0 / \mathrm{im}\, \partial^{(i)}_1$. It follows that we should choose $\mathcal{R}_i$ and $\partial^{(i)}_0$ such that $\ker \partial^{(i)}_0 = \mathrm{im}\, \partial^{(i)}_1$. Define $\mathcal{R}_i$ to be the $\mathbb{F}_2$-vector space with a basis element for every element of the cycle basis of the graph defined by $\mathcal{Q}_i$ and $\mathcal{X}_i$, and $\partial^{(i)}_0: \mathcal{X}_i \to \mathcal{R}_i$ by the coboundary map sending the edges to the cycles they are contained in. If the graph has no cycles, it is not necessary to construct $\mathcal{R}_i$. In practice, we have observed that we have always needed it.

The following example shows that not all cycle bases are equivalent for our purposes. Freedman and Hastings provide an algorithm called the Decongestion Lemma \cite{freedman2021building} to find a cycle basis such that each edge appears in at most $O(\log |\mathcal{Q}_i|^2)$ cycles and whose cycle length is $O(|\mathcal{Q}_i| \log |\mathcal{Q}_i|)$. This may not be the minimum-weight cycle basis, but these conditions are designed to control the column and row weights of $\partial^{(i)}_0$. Alternative algorithms, such as Horton's algorithm \cite{horton1987polynomial} for finding a minimum-weight cycle basis, could be explored in future work.

\begin{figure}[t!]
    \centering
    \subfloat[][]{
        \centering
        \includegraphics[width=.2\textwidth]{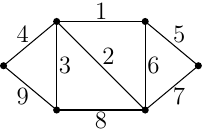}
        \label{fig:cycle-basis-a}
    }
    \quad
    \subfloat[][]{
        \centering
        \includegraphics[width=.2\textwidth]{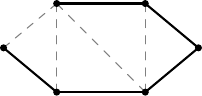}
        \label{fig:cycle-basis-b}
    }
    \quad
    \subfloat[][]{
        \centering
        \includegraphics[width=.2\textwidth]{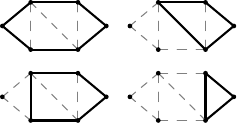}
        \label{fig:cycle-basis-c}
    }
    \quad
    \subfloat[][]{
        \centering
        \includegraphics[width=.2\textwidth]{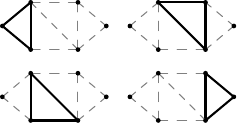}
        \label{fig:cycle-basis-d}
    }
    \caption{(a) An input graph. (b) A spanning tree for the graph in (a). (c) The fundamental cycle basis resulting from (b). (d) A minimum-weight cycle basis for (a).}
    \label{fig:cycle-basis}
\end{figure}

\begin{example}
    To find a cycle basis of the graph in \cref{fig:cycle-basis-a}, start with a spanning tree. One possible choice is \cref{fig:cycle-basis-b}. The fundamental cycle basis associated with this tree is given by adding edges from \cref{fig:cycle-basis-a} not in \cref{fig:cycle-basis-b}. Adding edges $4$, $2$, $3$, and $6$ produce the fundamental cycle basis in \cref{fig:cycle-basis-c}. If this was interpreted as $\mathcal{Q}_i$ and $\mathcal{X}_i$, we would have
    \begin{equation*}
        \partial^{(i)}_0 = \begin{pmatrix}
            1 &   &   & 1 & 1 &   & 1 & 1 & 1\\
            1 & 1 &   &   & 1 &   & 1 &   &  \\
            1 &   & 1 &   & 1 &   & 1 & 1 &  \\
              &   &   &   & 1 & 1 & 1 &   &  \\
        \end{pmatrix},
    \end{equation*}
    where the rows correspond to the cycles read left-to-right then top-to-bottom in \cref{fig:cycle-basis-c} and the columns correspond to the edges in \cref{fig:copying-cycles-a} in numerical order. The two columns of ones correspond to the edges $5$ and $7$ which appear in every cycle of \cref{fig:cycle-basis-c}. This is a potentially undesirable increase in $q_X$ (\cref{coningstabs}).
    
    Alternatively, a so-called minimum-weight cycle basis whose total sum of the number of edges in each basis element is minimal is given in \cref{fig:cycle-basis-d}. Now,
    \begin{equation*}
        \partial^{(i)}_0 = \begin{pmatrix}
              &   & 1 & 1 &   &   &   &   & 1\\
            1 & 1 &   &   &   & 1 &   &   &  \\
              & 1 & 1 &   &   &   &   & 1 &  \\
              &   &   &   & 1 & 1 & 1 &   &
        \end{pmatrix},
    \end{equation*}
    where the rows correspond to the cycles read left-to-right then top-to-bottom in \cref{fig:cycle-basis-d} and the columns correspond to the edges in \cref{fig:copying-cycles-a} in numerical order. The minimum-weight basis avoids increasing $q_X$ but is not guaranteed to always do so. The Decongestion Lemma reduces both the row and column weights, but only with high probability.
\end{example}

Having chosen a cycle basis, some elements may contain a large number of edges. Large cycles lead to high-weight rows of $\partial^{(i)}_0$ and thus high-weight $X$ stabilizers. If the input code was LDPC, this would produce a non-LDPC output and undermine the reduction of $w_X$ in gauging. To fix this, we introduce auxiliary edges to break up the cycles into smaller cycles in a process called cellulation. The map and the resulting stabilizers are highly dependent on the choice of cellulation. In this sense, some cellulations are better than others. Here we follow Reference~\cite{wills2023tradeoff}: any time a cycle has length greater than four, simply add edges to bring it down to four. We do not add any new vertices (qubits) to do this. These new edges must be added to $\mathcal{X}_i$ but without an associated $X$ stabilizer, i.e., $(\_, j, k)$. This affects the maps $\partial^{(i)}$ but not $f^{(i)}_0$. The additions to $\partial^{(i)}$ appear in the stabilizers \cref{coningstabs} and could affect the LDPC properties of the output. We leave the question of ``optimal'' cellulations to future work. An explicit example is done below in \cref{ex:coning}, and another graphical demonstration is provided in \cref{fig:coning}.

If $q_X$ is higher than desired, we can perform an extra round of thickening and choosing heights with the roles of $X$ and $Z$ switched:
\begin{align*}
    &\mathcal{A} \, : \, \mathbb{F}^{n_Z}_2 \xleftarrow{H_Z} \mathbb{F}^n_2 \xleftarrow{H^{\mathrm{T}}_X} \mathbb{F}^{n_X}_2\\
    &\mathcal{B} \, : \, \mathbb{F}^{\ell - 1}_2 \xleftarrow{H_\ell} \mathbb{F}^\ell_2.
\end{align*}
The cellulation and the optional second thickening and choosing heights is referred to as the reduced cone in \cite{hastings2021quantum, wills2023tradeoff}.

\begin{example}\label{ex:coning}
    Consider coning the following inputs
    \begin{align*}
        H_X &= \begin{pNiceArray}{cccccccccc}
            1 & 1 &   &   &   &   &   &   &   &\\
              & 1 & 1 &   &   &   &   &   &   &\\
              &   & 1 & 1 &   &   &   &   &   &\\
              &   &   & 1 & 1 &   &   &   &   &\\
              &   &   &   & 1 & 1 &   &   &   &\\
              &   &   &   &   & 1 & 1 &   &   &\\
            1 &   &   &   &   &   & 1 &   &   &\\
              &   &   & 1 &   &   &   & 1 &   &\\
              &   &   &   &   &   &   & 1 & 1 &\\
              &   &   &   &   &   &   &   & 1 & 1\\
              &   &   &   &   &   &   & 1 &   & 1
        \end{pNiceArray}\\
        H_Z &= \, \, \begin{pNiceArray}{cccccccccc}
            1 & 1 & 1 & 1 & 1 & 1 & 1 & 1 & 1 & 1
        \end{pNiceArray}.
    \end{align*}
    There is only one $Z$ stabilizer to be reduced with support on all the qubits, so $\mathcal{Q}_1 = \mathbb{F}^{10}_2$. Since $Z_1$ has full support, $f^{(1)}_1$ is simply the identity map. The overlaps of $H_X$ and $H_Z$ give $\mathcal{X}_1 = \mathbb{F}^{11}_2$ with standard basis elements in correspondence with the edges
    \begin{equation*}
         \{(S_1, 1, 2), (S_2, 2, 3), (S_3, 3, 4), (S_4, 4, 5), (S_5, 5, 6), (S_6, 6, 7), (S_7, 7, 1), (S_8, 4, 8), (S_9, 8, 9), (S_{10}, 9, 10), (S_{11}, 10, 8)\}.
    \end{equation*}
    Since the $j$th element of $\mathcal{X}_1$ is associated with the $j$th $X$ stabilizer, $f^{(1)}_0$ is also the identity.

    We can visualize the graphical relationship between $\mathcal{Q}_i$ and $\mathcal{X}_i$ by treating $H_X$ restricted to the qubits on the support of the $Z$ stabilizer being reduce as the edge-vertex incidence matrix
    \begin{equation*}
        \includegraphics{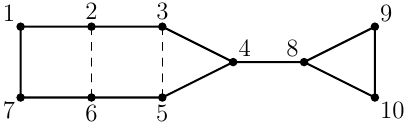} .
    \end{equation*}
    The cycles are $\mathcal{R}_i = \{(1, 2, 3, 4, 5, 6, 7), (8, 9, 10)\}$ giving
    \begin{equation*}
        \partial^{(1)}_0 = \begin{pNiceArray}{cccccccccc}
            1 & 1 & 1 & 1 & 1 & 1 & 1 &   &   & \\
              &   &   &   &   &   &   & 1 & 1 & 1
        \end{pNiceArray}.
    \end{equation*}
    This will produce a high-weight $X$ stabilizer in \cref{coningstabs}, so we cellulate by adding the dashed edges $2-6$ and $3-5$ and $(\_, 2, 6), (\_, 3, 5)$ to $\mathcal{X}_1$. Now the cycles are
    \begin{equation*}
        \mathcal{R}_i = \{(1, 2, 6, 7), (2, 3, 5, 6), (3, 4, 5), (8, 9, 10)\}
    \end{equation*}
    and
    \begin{equation*}
        \partial^{(1)}_0 = \begin{pNiceArray}{ccccccccccc|cc}
            1 &   &   &   &   & 1 & 1 & & &   &   & 1 &\\
              & 1 &   &   & 1 &   &   & & &   &   & 1 & 1\\
              &   & 1 & 1 &   &   &   & & &   &   &   & 1\\
              &   &   &   &   &   &   & & 1 & 1 & 1 &   &
        \end{pNiceArray},
    \end{equation*}
    where the columns to the right of the vertical line correspond to the cellulation. Note that the edge $4-8$ does not participate in any cycles and corresponds to the empty column. The new edges are not associated to any stabilizers, so $f^{(1)}$ simply adjoins zero rows. With the new $\mathcal{X}_1$,
    \begin{equation*}
        \partial^{(1)}_1 = \begin{pmatrix}
            1 & 1 &   &   &   &   &   &   &   &\\
              & 1 & 1 &   &   &   &   &   &   &\\
              &   & 1 & 1 &   &   &   &   &   &\\
              &   &   & 1 & 1 &   &   &   &   &\\
              &   &   &   & 1 & 1 &   &   &   &\\
              &   &   &   &   & 1 & 1 &   &   &\\
            1 &   &   &   &   &   & 1 &   &   &\\
              &   &   & 1 &   &   &   & 1 &   &\\
              &   &   &   &   &   &   & 1 & 1 &\\
              &   &   &   &   &   &   &   & 1 & 1\\
              &   &   &   &   &   &   & 1 &   & 1\\
              \hline
              & 1 &   &   &   & 1 &   &   &   &\\
              &   & 1 &   & 1 &   &   &   &   &
        \end{pmatrix},
    \end{equation*}
    where the rows below the horizontal line correspond to the cellulation.
\end{example}

\noindent {\bf Remark:} The process of finding $\partial^{(i)}_1$, $f^{(i)}_1$, and $f^{(i)}_0$ (pre-cellulation) can be simplified to following these operations:
\begin{equation}\label{derive-f-partial}
  \begin{pmatrix}H_X & I_{n_X} \\ I_n & \end{pmatrix}
  \xrightarrow[\text{columns}]{\text{delete}}
  \begin{pmatrix}H_X|_{\mathrm{supp}(Z_i)} & I_{n_X} \\ f_1^{(i)} & \end{pmatrix}
  \xrightarrow[\text{rows}]{\text{delete}}
  \begin{pmatrix} \partial^{(i)}_1 & \left(f^{(i)}_0\right)^{\mathrm{T}} \\ f^{(i)}_1 & \end{pmatrix},
\end{equation}
where we delete the columns of \(\begin{pmatrix} H_X \\ I_n \end{pmatrix}\) corresponding to qubits \(\{1,\dots,n\} \setminus \mathrm{supp}(Z_i)\) followed by deleting the rows of \(\begin{pmatrix} H_X|_{\mathrm{supp}(Z_i)} & I_{n_X} \end{pmatrix}\) where \(H_X|_{\mathrm{supp}(Z_i)}\) is zero.\\

\begin{example}
    We began this section by pointing out why we could not use gauging to reduce $w_Z$. Instead we found a cycle basis, cellulated, and used the mapping cone. Here we follow~\cite{hastings2021quantum} and~\cite{wills2023} in cellulating cycles down to weight four, but one could choose an alternative scheme. Consider cellulating the octogon of solid edges $1$ to $8$ \cref{octogon} by triangulation via the dashed edges $9$ to $13$:
    \begin{equation}\label{octogon}
        \includegraphics{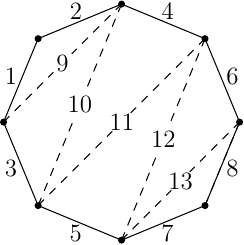}.
    \end{equation}
    This transforms the cycle $\begin{pmatrix}
        1 & 1 & 1 & 1 & 1 & 1 & 1 & 1
    \end{pmatrix}$ to
    \begin{equation*}
        \begin{pmatrix}
            1 & 1 & & & & & & & 1 & & & & \\
              & & 1 & & & & & & 1 & 1 & & & \\
              & & & 1 & & & & & & 1 & 1 & & \\
              & & & & 1 & & & & & & 1 & 1 & \\
              & & & & & 1 & & & & & & 1 & 1 \\
              & & & & & & 1 & 1 & & & & & 1
        \end{pmatrix},
    \end{equation*}
    where the columns represent the edges of \cref{octogon} in numerical order. This is the gauging equation, \cref{gaugingeq}. In this sense, cellulating plays the role of gauging for cycles by gauging a single row of $\partial^{(i)}_0$. If every cycle in the cycle basis is cellulated, then every row of $\partial^{(i)}_0$ is ``gauged'' for a fixed $i$. However, this will not necessarily be equivalent to applying gauging directly because the edge labels must be consistent between all of the cycles and therefore the columns cannot always be simultaneously permuted into the form \cref{gaugingeq} for every cycle.
\end{example}

Now that the full coning procedure has been described, we return to the logical operators (\cref{homeq}). We will assume to simplify the discussion that we are reducing $Z$ stabilizers one at a time, i.e., we are using \cref{mappingconesingle} and not \cref{mappingconemultiple}. If an extra round of thickening and choosing heights is used, the logical operators described here will be modified as discussed in that section.

By definition, $H_1(\mathcal{A}_i) = \ker \partial^{(i)}_1$. The space $\mathcal{Q}_i$ contains the full support of $Z_i$. Each element of $\mathcal{X}_i$ is associated with two elements of $\mathcal{Q}_i$ by construction, so rows of $\partial^{(i)}_1$ will always have weight two. Thus, the all-ones vector will always be an element of $\ker \partial^{(i)}_1$. This is equivalent to saying that the $Z_i$ commutes with all of the $X$ stabilizers it overlaps with. Anything else in the kernel commutes with all the $X$ stabilizers in the overlap of $Z_i$ and is therefore a $Z$ stabilizer or a $Z$ logical operator contained entirely in the support of $Z_i$. A stabilizer doesn't change the result when we take the quotient of the old logical operators and it. A $Z$ logical operator is more problematic, and we simply define away this problem by declaring coning to only apply to codes in which no $Z$ logical operator is entirely contained within the support of a $Z$ stabilizer generator. Hastings called these codes \emph{reasonable}; see \cite{hastings2021quantum} for the modifications to coning required for \emph{unreasonable} codes. Assuming we have a reasonable code, applying the exactness condition throughout \cref{homeq}, the Isomorphism Theorem (for groups) gives $H_1\left(\mathrm{cone}\left(f^{(i)}\right)\right) \cong H_1(\mathcal{B}) / \ker \partial^{(i)}_1$. This removes the extra logical operator caused by removing $Z_i$ for reduction, showing $\tilde{k} = k$.

It is clear from the matrices that the parameters after coning are more complicated than the previous steps. We have already showed that $\tilde{k} = k$. Before cellulation, $w_Z \leq 5$. Cellulation can increase this. The more a vertex is reused for adding new edges to create the cellulation, the more $w_Z$ can increase. If a vertex is used $x$ times in cellulation, there will be a $Z$-stabilizer generator of at least weight $x + 2$, up to potentially $x + 4$. Coning does not change $q_Z$. The effect on $q_X$ depends on how many edges are reused in cycles. This could be a large number, in which case a second round of thickening and choosing heights could be necessary, this time swapping the roles of $X$ and $Z$. If this is done, $w_Z$ will increase by two, $q_Z$ will still remain the same, and $\tilde{w}_X$ will be the maximum of $w_X$ before the second round of thickening and $2 + q_Z$. The following lemma clarifies the asymptotic result stated in \cite{hastings2021quantum}.
\begin{lemma}
    Given a code that has undergone copying, gauging, thickening, and choosing heights, let $h$ be the maximum number of times any single height was chosen. Then the reduced cone code has $\tilde{w}_X \leq 5 + h$.
\end{lemma}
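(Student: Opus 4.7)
The plan is to read off $\tilde H_X$ from the block form in \cref{coningstabs} and bound each type of row separately, leveraging the rigid tensor structure the input to coning inherits from thickening: every $X$-stabilizer is a row of $\bigl(H_X^{(0)}\otimes I_\ell\mid I_{n_X}\otimes H_\ell^{\mathrm T}\bigr)$, where $H_X^{(0)}$ denotes the post-gauging $X$-matrix, so that $\mathrm{wt}(H_X^{(0)})\leq 3$.

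The rows coming from the $\partial^{(i)}_0$ blocks are cycle vectors, and since cellulation breaks every cycle to length at most four, these have weight at most $4\leq 5+h$. The main work is bounding the rows from the bottom block $\bigl(f^{(1)}_0\ \cdots\ f^{(m)}_0\ H_X\bigr)$. Fix one such row, indexed by a pair $(s,k)$ corresponding to an input $X$-stabilizer $S$ at height $k$; its weight is $\mathrm{wt}(S)+\sum_i\mathrm{wt}\bigl(\text{row $(s,k)$ of }f^{(i)}_0\bigr)$, with the first summand $\leq 5$. To control the second summand I would use three structural observations: \emph{(a)} the two qubits $S$ inherits from $I_{n_X}\otimes H_\ell^{\mathrm T}$ lie outside the support of every reducible $Z$-stabilizer, since those come only from $H_Z\otimes I_\ell$ (the other rows of $\tilde H_Z$ already have weight $\leq 5$ and need no reduction); \emph{(b)} a reducible $Z$-stabilizer $Z_i$ lives entirely at a single height $k'$, so $\mathrm{supp}(S)\cap\mathrm{supp}(Z_i)=\emptyset$ unless $k'=k$, and when $k'=k$ the intersection is contained in the height-$k$ slice of $S$, of size at most $\mathrm{wt}(H_X^{(0)})\leq 3$; \emph{(c)} the intersection has even cardinality, so it is $0$ or $2$, whence a single tuple $(S,j,k')$ suffices in $\mathcal X_i$ and $f^{(i)}_0$ contributes at most one $1$ to the row.

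Combining \emph{(a)}--\emph{(c)}, the only coning operations contributing to row $(s,k)$ are those reducing a $Z$-stabilizer at height $k$, and by the definition of $h$ there are at most $h$ of them, so the second summand is bounded by $h$ and $\tilde w_X\leq 5+h$. The main obstacle is the bookkeeping: one must verify that the ``overlap is $\leq 2$'' claim is not spoiled by cellulation (it is not, since cellulation alters $\partial^{(i)}_0$ without touching $f^{(i)}_0$), and that the count of reducible $Z$-stabilizers at a single height is genuinely $h$ rather than some larger quantity depending on $H_Z$'s finer structure.
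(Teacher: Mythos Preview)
Your proof is correct and follows essentially the same strategy as the paper's: bound the $\partial^{(i)}_0$ rows by four via cellulation, then show each $f^{(i)}_0$ contributes at most one to a given $H_X$-row and only when $Z_i$ sits at the same height as that row, so at most $h$ contributions in total. Your observation (b) is in fact slightly cleaner than the paper's version---you use the tensor structure $h_s\otimes e_k$ directly to see that the first-block support of an $X$-stabilizer lies entirely at one height, whereas the paper reaches the same conclusion by a small pigeonhole argument on the $\leq 3$ first-block qubits.
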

\begin{proof}
    Having undergone copying, gauging, thickening, and choosing heights, $w_X \leq 5$ before coning. Cellulation ensures that new $X$ stabilizers $\partial^{(i)}_0$ have a maximum weight of four. Thus we only need to check how much the weight of the old $X$ stabilizers increases under coning, i.e., the bottom row of $\tilde{H}_X$ in \cref{coningstabs}: $\begin{pmatrix} f^{(1)}_0 & \cdots & f^{(m)}_0 & H_X \end{pmatrix}$. Having undergone copying, gauging, thickening, and choosing heights, the $X$ stabilizers overlap with the $Z$ stabilizers we reduce either zero or two times (\cref{thickening}), contributing either none or one element to $\mathcal{X}_i$, respectively. So $f^{(i)}_0$ has a maximum row weight of one by construction.
    
    Recall that we only reduce the weight of $Z$ stabilizers from the rows $\begin{pmatrix} H_Z \otimes I_\ell & 0\end{pmatrix}$ in \cref{thickening} where only one row is kept per $\ell$ rows. The support of two $Z$ stabilizers are disjoint if they are associated with different heights. This row overlaps the support of the $X$ stabilizers in the $H_X \otimes I_\ell$ block of the thickening stabilizers, which has $w_X \leq 3$ due to gauging. If an $X$ stabilizer overlaps with one of the $Z$ stabilizers we reduce, then at least two qubits of its support are within the set of qubits associated with the height of that $Z$ stabilizer. That leaves a maximum of one qubit that could be in any other height. By commutativity, it therefore cannot overlap with any $Z$ stabilizer from another height. Thus, an $X$ stabilizer can only get added weight from $f^{(i)}_0$ within exactly one height.
\end{proof}

\begin{figure}[p]\centering
  \subfloat[width=\textwidth][We wish to reduce the weight of the stabilizer $Z_i$ via coning. The shaded/dashed portion of the Tanner graph will remain unchanged, and for clarity, we omit it in the following part (b).]{
  \makebox[16.5cm][c]{
    \includegraphics[width=.3\textwidth]{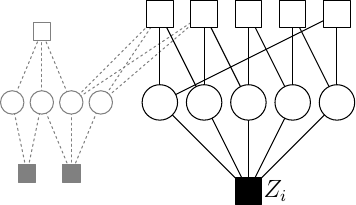}
  }}
  \\
  \subfloat[][Qubits $v_1$ through $v_5$ are new. The number of new qubits is always exactly equal to the number of $X$ stabilizers that share some support with $Z_i$, and the number of $Z$ stabilizers replacing $Z_i$ is always exactly the number of qubits in the support of $Z_i$. The map $f^{(i)}_i$ always has this form, and $\partial^{(i)}_1$ is always the same structure as $H_X\vert_{\mathrm{supp}(Z_i)}$, i.e., $\left(\partial^{(i)}_1\right)^{\mathrm{T}}$ looks like $H_X\vert_{\mathrm{supp} (Z_i)}$ with variables and checks swapped. Note that this created a new $X$-logical operator on the qubits $v_1, v_2, v_3, v_4, v_5$. This is why we need $\partial^{(i)}_0$.]{
  \makebox[16.5cm][c]{
    \includegraphics[width=.5\textwidth]{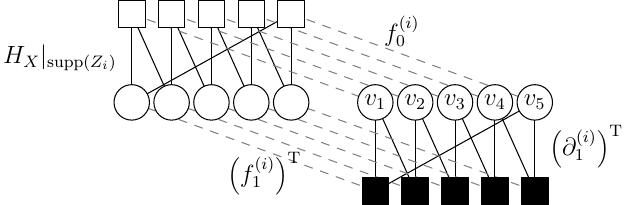}
  }}
  \\
  \subfloat[][Redrawing just $\left(\partial^{(i)}_1\right)^{\mathrm{T}}$ on its own, we can clearly see the cycle that causes the new logical operator.]{\centering
    \includegraphics[width=.2\textwidth]{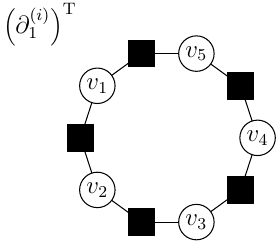}
  }
  \hspace{1.5cm}
  \subfloat[][We add the logical operator to the $X$ stabilizers, creating the structure of $\partial^{(i)}_0$.]{\centering
    \includegraphics[width=.2\textwidth]{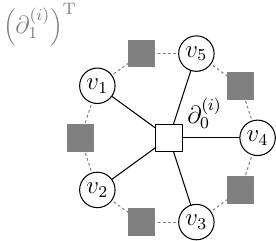}
  }
  \hspace{1.5cm}
  \subfloat[][The weight of the new $X$ stabilizer is high, so we cellulate which requires an additional qubit $v_6$.]{\centering
    \includegraphics[width=.2\textwidth]{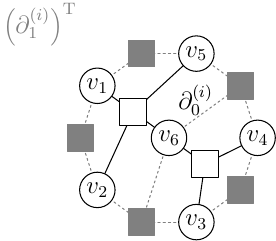}
  }
  \\
  \subfloat[][Putting the entire graph back together, we have the completed result of coning.]{\centering
    \includegraphics[width=.6\textwidth]{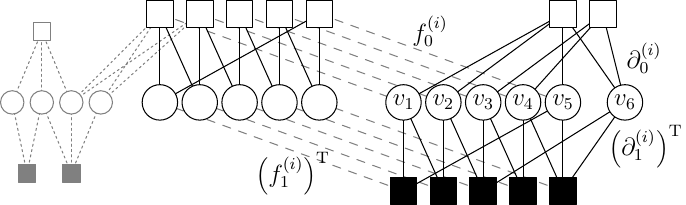}
  }
  \caption{Illustration of coning on the Tanner graph.}
  \label{fig:coning}
\end{figure}

\subsection{Reducing The Overhead With Improved Copying}\label{sec:modifiedcopying}
With the above steps using $l = 3$ and $\mathrm{heights} = (2, 1, 2, 1, 2, 3, 1, 3, 3, 1)$, the $\mathrm{QRM}(4)$ stabilizers \cref{RMstabs} produce a $[\![724, 1, 3]\!]$ code. This is a significant increase in resources to protect a single logical qubit. A small modification to Hastings’s method can make a large difference with respect to this problem. Looking at the Tanner graph in \cref{fig:Tannercopyingb}, any column whose weight is not equal to $q_X$ ends up with unused qubits. The simplest and most natural modification to make is to only copy a qubit as many times as its column weight, as in \cref{fig:modifiedcopying}. Applying this to the Reed-Muller code produces a $[\![512, 1, 2]\!]$ code. The reduction of resources propagated through the full procedure is even more dramatic when $q_X$ is much larger than the median column weight.

\begin{figure}[t!]\centering
  \subfloat[][]{\centering
    \includegraphics[width=.2\textwidth]{figures/fig-copying2-a.pdf}
  }
  \hspace{2cm}
  \subfloat[][]{\centering
    \includegraphics[width=.2\textwidth]{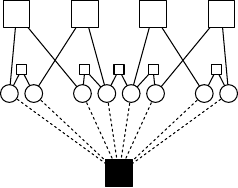}
    \label{fig:modifiedcopying}
  }
  \hspace{2cm}
  \subfloat[][]{\centering
    \includegraphics[width=.2\textwidth]{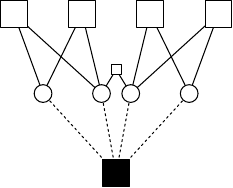}
    \label{fig:targetedcopying}
  }
  \caption{(a) The original Tanner graph. (b) Applying reduced copying to (a). (c) Applying targeted copying to (a) with $\mathrm{targ}_{q_X} = 3$. See \cref{fig:copying} for the original copying procedure applied to (a).}
  \label{fig:copying2}
\end{figure}

Note that copying results in $q_X = 3$, but some of the new qubits are only in the support of two $X$ stabilizers for both the original and the modified copying procedures. We can choose to only expand enough so that all new qubits are in the support of exactly $\mathrm{targ}_{q_X}$ $X$ stabilizers for some target column weight $\mathrm{targ}_{q_X} \geq 3$. Referring to \cref{fig:Tannercopyingb}, observe that $v_{i, 1}$ and $v_{i, q_X}$ can accept $\mathrm{targ}_{q_X} - 1$ edges instead of $\mathrm{targ}_{q_X} - 2$, reducing the number of qubits needed for copying. Applying this to \cref{fig:Tannercopyingb} with $\mathrm{targ}_{q_X} = 3$ produces \cref{fig:targetedcopying}. Applying this to the Reed-Muller code with $\mathrm{targ}_{q_X} = 3$ produces a $[\![315, 1, 2]\!]$ code.

The improved copying methods just introduced do not decrease the distance. The drop in distance from the original $[\![15, 1, 3]\!]$ code in the previous examples is due to the other steps of quantum weight reduction. The codes after copying, reduced copying, and targeted copying have parameters $[\![60, 1, 7]\!]$, $[\![32, 1, 4]\!]$, and $[\![16, 1, 3]\!]$, respectively.

\begin{figure}[t!]\centering
  \subfloat[][]{\centering
    \includegraphics[height=2cm]{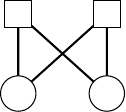}
    \label{fig:copying-cycles-a}
  }
  \hspace{2cm}
  \subfloat[][]{\centering
    \includegraphics[height=2cm]{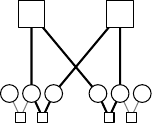}
    \label{fig:copying-cycles-b}
  }
  \hspace{2cm}
  \subfloat[][]{\centering
    \includegraphics[height=2cm]{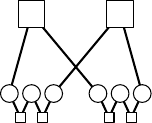}
    \label{fig:copying-cycles-c}
  }
  \\
  \subfloat[][]{\centering
    \includegraphics[height=2cm]{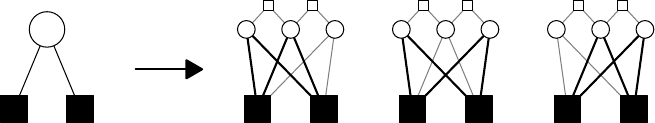}
    \label{fig:copying-cycles-d}
  }
  \caption{(a) A 4-cycle in $X$. Under copying, it could be expanded to an 8-cycle (b) or a 12-cycle (c) depending on choices for the connections of the original $X$ stabilizers. (d) The three new 4-cycles in $Z$ from the two $Z$ stabilizers which shared the same variable node.}
  \label{fig:copying-cycles}
\end{figure}

\subsection{Quantum Weight Reduction Reinterpreted}
All three copying variants, Hastings's original (\cref{fig:Tannercopyingb}), the reduced copying (\cref{fig:modifiedcopying}), and targeted copying (\cref{fig:targetedcopying}) may be viewed in the same framework. With the hindsight of coning, it is clear that \cref{tmatrixcopying} is in the form of a mapping cone where $H_\ell$ is given by \cref{repcode}:
\begin{equation}
    \tilde{H}_X = \begin{pmatrix}
        H^{(r)}_X & f^{\mathrm{T}}_1\\
        & H_\ell
    \end{pmatrix}
    \quad , \quad
    \tilde{H}_Z = \begin{pmatrix}
        H^{(r)}_Z & f_2
    \end{pmatrix}.
\end{equation}
Proceeding one column at a time, remove the column of weight $q_i$ to be reduced to get $H^{(r)}_X$ and $H^{(r)}_Z$. Without all columns, these may not commute and therefore cannot form the standard chain complex. Instead, we have
\begin{equation*}
    \begin{tikzcd}
	{C_1} & {C_2} \\
	{C_0} & {\mathbb{F}^\ell_2} & {\mathbb{F}^{\ell - 1}_2}
	\arrow["{f_2}"', from=2-2, to=2-1]
	\arrow["H^{\mathrm{T}}_\ell"', from=2-3, to=2-2]
	\arrow["{f_1}", from=1-2, to=2-2]
	\arrow["{\left(H^{(r)}_X\right)^{\mathrm{T}}}"', from=1-2, to=1-1]
	\arrow["{H^{(r)}_Z}"', from=1-1, to=2-1]
    \end{tikzcd}.
\end{equation*}
In order to be a valid chain complex, we need $f_2 H^{\mathrm{T}}_\ell = 0$, and in order to be a valid chain map, we need $f_2 f_1 = H^{(r)}_Z \left(H^{(r)}_X\right)^{\mathrm{T}}$. The number of solutions for $f_2$ is the size of the left kernel of $\begin{pmatrix} f_1 & H^{\mathrm{T}}_\ell\end{pmatrix}$. Viewed as a matrix, each column of $f_1$ must have weight zero or one, and the columns should have support exactly corresponding to the row-support of the deleted column of $H_X$. Choosing $\ell=q_X$ uniformly for all variables and limiting the row weight of $f_1$ to one gives Hastings' copying. Choosing $\ell=q_i$ separately for each variable and limiting the row weight of $f_1$ to one results in reduced copying. Choosing $\ell=q_i-2$ and letting the first and last rows of $f_1$ have weight two with other rows having weight one results in targeted copying with $\mathrm{targ}_{q_X} = 3$.

Gauging removes rows of $H_X$, preserving the commutativity between the stabilizers and can therefore form the standard chain complex. Removing a row of weight $w_i$ from $H_X$ to obtain $H_X^{(r)}$, the associated complex is
\begin{equation*}
    \begin{tikzcd}
	{C_2} & {C_1} & {C_0} \\
	{\mathbb{F}_2^{\ell-1}} & {\mathbb{F}_2^\ell}
	\arrow["{H_Z^{\mathrm{T}}}", from=1-1, to=1-2]
	\arrow["{H^{(r)}_X}", from=1-2, to=1-3]
	\arrow["{H^{\mathrm{T}}_\ell}", from=2-1, to=2-2]
	\arrow["{f_2}"', from=1-1, to=2-1]
	\arrow["{f_1}"', from=1-2, to=2-2]
    \end{tikzcd}
\end{equation*}
with $\ell = w_i - 2$, the columns of $f_1$ corresponding to the support of the deleted $X$ stabilizer having weight one, other columns having weight zero, first and last rows having weight two, and all other rows having weight one. The stabilizers are (compare to \cref{ex:gauging})
\begin{equation*}
    \tilde{H}_X = \begin{pmatrix}
        H^{(r)}_X & \\
        f_1 & H^{\mathrm{T}}_\ell
    \end{pmatrix}
    \quad , \quad
    \tilde{H}_Z = \begin{pmatrix}
        H_Z & f^{\mathrm{T}}_2
    \end{pmatrix}
\end{equation*}
and $H^{\mathrm{T}}_\ell f_2 = f_1 H_Z^{\mathrm{T}}$. There is a unique solution for $f_2$ since $H^{\mathrm{T}}_\ell$ has a left inverse.

Quantum weight reduction may therefore be seen as two cones, a tensor product of chain complexes, and then another cone.

\subsection{Effects On Iterative Decoding}\label{sec:itdec}
Despite its advantages, quantum weight reduction generally alters the underlying structure of the original code. In the absence of any obvious structure in the output, the available decoders are those applicable to wide ranges of codes, such as belief propagation. These iterative algorithms are highly sensitive to the topology of its Tanner graph \cite{ryan2009channel}. The fact that stabilizers must commute forces all stabilizer codes to have 4-cycles. For CSS codes, these unavoidable cycles are between the $X$ and $Z$ stabilizers. If $X$-$Z$ correlations are ignored and the two types of stabilizers are decoded independently, it is beneficial to reduce the number of short cycles from only $H_X$ and only $H_Z$. Unfortunately, the weight reduction procedure above introduces a significant amount of new 4-cycles.

Since the Tanner graph of the repetition code (and its transpose) has no cycles, expanding the variable nodes in copying does not introduce any new cycles in the $X$ stabilizers. However, the degree of freedom present is assigning a qubit to one of its copies in the repetition code can have important consequences. Consider the 4-cycle in \cref{fig:copying-cycles-a}. Either \cref{fig:copying-cycles-b} or \cref{fig:copying-cycles-c} are valid choices, but one leads to an 8-cycle and the other to a 12-cycle. If the cycle structure of the input is known, selectively assigning edges can be used to lengthen short cycles of the original graph. On the other hand, an identical copy of the $Z$ stabilizers is imposed on every variable node in the repetition code, leading to many new 4-cycles; see \cref{fig:copying-cycles-d}. A simple counting argument shows the following.
\begin{lemma}
    Splitting a variable node into $s$ variable nodes during copying introduces $\binom{s}{2} \binom{c}{2}$ new 4-cycles in the $Z$-only Tanner graph and $c (s - 1)$ new 4-cycles between $X$ and $Z$, where $c$ is the number of $Z$ check nodes connected to the original variable node.
\end{lemma}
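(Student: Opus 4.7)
The plan is a direct combinatorial count, working entirely locally in the neighbourhood of the single variable node being split. First I would set up the picture: copying replaces the original vertex $v$ with $s$ copies $v_1,\ldots,v_s$; each copy $v_i$ inherits exactly the same set of $c$ neighbouring $Z$-check nodes that $v$ originally had (this is the preservation of commutativity recorded in the description of copying); and $s-1$ new weight-two $X$-check nodes are introduced, the $j$th of which is supported on $\{v_j,v_{j+1}\}$, forming a path. Since the Tanner graph is bipartite, every 4-cycle is uniquely specified by an unordered pair of variable nodes together with an unordered pair of check nodes, where all four incidence edges are present. I will use this parametrization throughout.

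For the $Z$-only count, I would argue that the \emph{new} 4-cycles are exactly those whose two variable nodes are both copies of $v$: a 4-cycle with at most one copy of $v$ either was already present before splitting (just with the single vertex $v$ playing the role now played by one copy) or does not involve $v$ at all. Given the structure, every pair of copies $v_i,v_j$ shares the full list of $c$ common $Z$-check neighbours. Thus the choice of pair of copies contributes $\binom{s}{2}$ and, independently, the choice of pair of $Z$-checks contributes $\binom{c}{2}$. Distinct choices yield distinct vertex sets and hence distinct 4-cycles, giving $\binom{s}{2}\binom{c}{2}$ new 4-cycles.

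For the $X$–$Z$ count, only the newly introduced $X$-checks can produce new $X$–$Z$ 4-cycles, since none of the original $X$-checks change and the $Z$-checks adjacent to the copies are the same objects as before. Fix one new $X$-check, the $j$th in the path, with support $\{v_j,v_{j+1}\}$. A 4-cycle through it must be closed by a $Z$-check adjacent to both $v_j$ and $v_{j+1}$; by the copying rule these are precisely the $c$ original $Z$-check neighbours of $v$. So each of the $s-1$ new $X$-checks contributes $c$ new 4-cycles, yielding $c(s-1)$ in total.

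The only real subtlety is making sure ``new'' is unambiguous so that nothing is double-counted or missed. This is straightforward here: 4-cycles of the first type require two distinct copies of $v$ and therefore could not exist before splitting, and 4-cycles of the second type require one of the newly added $X$-checks, which likewise did not exist before. Hence the two counts are disjoint contributions of genuinely new 4-cycles, and summing them gives the claim.
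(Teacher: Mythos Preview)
The paper itself gives no proof beyond ``a simple counting argument'', and your approach is exactly that argument. Your $X$--$Z$ count is correct and the reasoning is sound: each original $X$-check is reattached to \emph{exactly one} copy, so any old $X$--$Z$ 4-cycle through $v$ survives as exactly one cycle through that specific copy, and only the $s-1$ new weight-two $X$-checks create new cycles, each closing through the $c$ shared $Z$-checks.

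There is a gap in the $Z$-only part, however. You claim that a 4-cycle using exactly one copy $v_i$ ``was already present before splitting''. That identification is not a bijection: because every $Z$-check of $v$ is attached to \emph{all} $s$ copies, an old 4-cycle $v$--$z_1$--$w$--$z_2$ through some other variable $w$ becomes $s$ distinct 4-cycles $v_i$--$z_1$--$w$--$z_2$ after splitting, so $s-1$ of them are genuinely new as subgraphs. (Contrast this with the $X$--$Z$ case, where the single-copy attachment of original $X$-checks really does make the correspondence one-to-one.) The formula $\binom{s}{2}\binom{c}{2}$ therefore counts only those 4-cycles whose two variable nodes are \emph{both} copies of $v$; this is consistent with the paper's intent and its worked numerical examples, but your sentence asserting these are \emph{all} the new $Z$-only 4-cycles is not correct. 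The easy fix is to state explicitly that the lemma is counting the 4-cycles supported entirely on the replicated structure, rather than arguing that no other new cycles appear.
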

\noindent Note that this lemma applies to both the original and modified copying procedures. Applying this to the 15-qubit Reed-Muller code \cref{RMstabs}, Hastings's copying produces $738 \, Z + 168 \, X\text{-}Z = 906$ new 4-cycles; the reduced copying produces $468 \, Z + 96 \, X\text{-}Z = 564$ new 4-cycles; and the targeted copying with $\mathrm{targ}_{q_X} = 3$ produces $45 \, Z + 10 \, X\text{-}Z = 55$ new 4-cycles.

Similar to copying, gauging only modifies the length of the $X$ cycles, while new 4-cycles are created in both $Z$ and between $X$ and $Z$. Although the number of each is unpredictable due to the nature of the new $Z$ stabilizers, it is often significant due to the density of the solution.

The terms $H_X \otimes I_\ell$, $H^{\mathrm{T}}_X \otimes I_{\ell - 1}$, and $H_Z \otimes I_\ell$ in thickening make copies of each $X$ or $Z$ cycle present up to this point in the procedure. Additionally, $\tilde{H}_Z$ contains both $H_X$ and $H_Z$, converting cycles between $X$ and $Z$ to purely $Z$ cycles. These new cycles are connected through a copy of the repetition code in $I_n \otimes H_\ell$ and are therefore potentially lengthened. (See \cref{fig:thickeningHastingsbZ}.) Choosing heights removes some of the cycles coming from the $H_Z \otimes I_\ell$ term.

Every edge of the graph $\mathcal{Q}_i \to \mathcal{X}_i$ gives an $X\text{-}Z$ 4-cycle in the resulting code, but exactly half that number were removed by deleting $Z_i$. Every multi-edge in $\mathcal{Q}_i \to \mathcal{X}_i$ results in a $Z$ 4-cycle. Any two cycle basis elements that share two edges in $\mathcal{X}_i \to \mathcal{R}_i$ results in an $X$ 4-cycle. Cellulation results in new $X\text{-}Z$ 4-cycles. If thickening is performed here to reduce $q_X$ after coning, then the results of the previous paragraph should be applied to these results.

The number of new short cycles introduced by quantum weight reduction severely degrades the performance of iterative decoding schemes based on the Tanner graph, requiring cycle mitigation techniques or expensive post-processing such as ordered-statistics decoding (OSD).

\section{Classical Weight Reduction}\label{sec:classicalWtR}
A natural question is how the above quantum weight reduction method (and our proposed modifications) compares to simply reducing the column and row weights of the classical codes prior to their use in constructing a quantum code. We restrict ourselves to some well-known code constructions that produce valid stabilizers regardless of the input---such as the hypergraph and lifted product codes, \cref{hgp,eq:lp}, respectively---whose row and column weights are completely determined by the classical inputs. This technique can also be used for other constructions, but we leave this to future work.

It is not common to consider weight reduction in classical coding theory since there is no equivalent hardware requirement and there are numerous methods to construct excellent LDPC codes for efficient decoding applications. Nevertheless, we draw inspiration from the quantum case to define a procedure which is simple and maintains or increases the (classical) minimum distance. The result is similar to the copying/gauging steps described above but with a few minor changes that have important consequences. For clarity, we refer to Hastings's work as \emph{quantum} weight reduction and this method as \emph{classical} weight reduction.

Similar to the previous section, let $H$ be a parity-check matrix of an $[n, k, d]$ linear code with rows $\{h_i\}$, let $w_i = \mathrm{wt}(h_i)$ be the weight of the $i$th row, and let $q_i$ be the weight of the $i$th column. Consider a row $h_i$ with $w_i > 3$, and assume, as before, that the support of $h_i$ has been permuted to the first $w_i$ bits. Then to weight reduce, add $w_i - 1$ new columns to $H$ and replace $h_i$ with the matrix $\begin{pmatrix} I_{w_i} & 0 & H^{\mathrm{T}}_{w_i - 1}\end{pmatrix}$, where $I_{w_i}$ is the $w_i \times w_i$ identity matrix, 0 represents the rest of the original columns not in the support of $h_i$, and $H^{\mathrm{T}}_{w_i}$ is the transpose of \cref{repcode}. In matrix form,
\begin{equation}\label{classicalwtred}
    \begin{pNiceArray}{ccccccc}[first-row]
        v_1 & \cdots & v_w & & & &\\
        1 & \cdots & 1 & 0 & \cdots & 0
    \end{pNiceArray}
    \mapsto
    \begin{pNiceArray}{ccccccccccc}[first-row]
        v_1 & v_2 & \cdots & v_{w - 1} & v_w & & v^\prime_1 & v^\prime_2 & \cdots & v^\prime_{w - 2} & v^\prime_{w - 1}\\
        1 &   &        &   &   & \cdots & 1 &   &   &   & \\
          & 1 &        &   &   & \cdots & 1 & 1 &   &   & \\
          &   & \ddots &   &   & \cdots &   & & \ddots &   & \\
          &   &        & 1 &   & \cdots &   &   &   & 1 & 1\\
          &   &        &   & 1 & \cdots &   &   &   &   & 1
    \end{pNiceArray}.
\end{equation}
Repeat this for all rows which need to be reduced. To reduce the columns, simply transpose and use the same method. This is summarized in \cref{alg:classicalwtred}. Note that the row-wise sum of $\begin{pmatrix} I_{w_i} & 0 & H^{\mathrm{T}}_{w_i - 1}\end{pmatrix}$ is just $\begin{pmatrix} h_i & 0\end{pmatrix}$.

There is only a slight difference between \cref{classicalwtred,gaugingeq} (gauging) and \cref{fig:TGwtred,fig:gaugingb}. This change increases the (classical) minimum distance of the reduced code at the cost of decreased encoding rate.
\begin{theorem}\label{thm:classicaldistthm}
    Let $H$ be the parity-check matrix of an $[n, k, d]$ binary, linear code, $w_H$ be the maximum row weight of $H$, and $q_H$ be the maximum column weight of $H$. Then \cref{alg:classicalwtred} outputs a parity-check matrix $\tilde{H}$ with $w_{\tilde{H}} = \rho_{\tilde{H}} = 3$ whose code has parameters $[O(n \rho), k, \tilde{d}]$, where $\rho = \max \{w_H, q_H\}$ and $\tilde{d} \geq d$. Additionally,
    \begin{itemize}
        \item[(a)] if all rows of $H$ have weight greater than three, then $\tilde{d} \geq 3d/2$;
        \item[(b)] if all columns of $H$ have weight greater than three, then $\tilde{d} \geq d \min_i q_i$;
        \item[(c)] if all rows and all columns of $H$ have weight greater than three, then $\tilde{d} \geq (3d \min q_i)/2$.
    \end{itemize}
\end{theorem}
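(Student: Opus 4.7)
My plan is to analyze the weight-reduced code via a restriction map $\tilde c \mapsto c$ that sends codewords of $\tilde H$ back to codewords of $H$, and then to track how much each of the two reductions must inflate the Hamming weight. The structural properties of $\tilde H$ follow by direct inspection of \cref{classicalwtred}: the interior replacement rows have weight $3$, the boundary and column-reduction rows have weight $2$, the analogous check bounds column weights by $3$, and since each reduction step adds the same number of independent new rows as it does new bits, the dimension remains $k$. The block length is $n + \sum_{i:\,w_i > 3}(w_i - 1) + \sum_{j:\,q_j > 3}(q_j - 1) = O(n\rho)$.

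For the distance, I fix a nonzero codeword $\tilde c$ of the reduced code. The weight-$2$ rows added in column reduction force the $q_j$ copies of each original bit to agree, so I set $c_j$ equal to this common value and $c = (c_1, \ldots, c_n)$. Any unreduced row of $H$ survives unchanged in $\tilde H$ and gives $h_i c^{\mathrm{T}} = 0$ immediately. For a reduced row $h_i$, summing its $w_i$ replacement rows cancels every primed bit (each $v'_\ell$ appears in exactly two such rows in \cref{classicalwtred}) and recovers $h_i c^{\mathrm{T}} = 0$, so $c$ is a codeword of $H$. Substituting $c = 0$ into a reduced row's replacement block leaves the triangular system $v'_1 = 0,\, v'_1 + v'_2 = 0,\, \ldots,\, v'_{w_i - 1} = 0$, which forces every primed bit to zero and contradicts $\tilde c \neq 0$. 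Hence $\mathrm{wt}(c) \geq d$, and since each $c_j = 1$ contributes at least one nonzero entry to $\tilde c$, we obtain $\tilde d \geq d$.

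The refined bounds (a)--(c) all rest on a combinatorial estimate of the primed weight contributed by a reduced row, which is the only genuinely non-trivial step. Solving the replacement constraints iteratively yields $v'_\ell = v_1 + v_2 + \cdots + v_\ell$ in $\mathbb{F}_2$, so the number of nonzero primes in a reduced row equals the number of $\ell \in \{1, \ldots, w_i - 1\}$ with an odd prefix-sum. If the $1$'s of $c$ in $\mathrm{supp}(h_i)$ lie at positions $p_1 < \cdots < p_{k_i}$ (with $k_i$ even, since $h_i c^{\mathrm{T}} = 0$), this count is exactly $\sum_{m=1}^{k_i/2}(p_{2m} - p_{2m-1}) \geq k_i/2$, the minimum being attained when the $1$'s occur in adjacent pairs. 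Summing over reduced rows yields a total primed contribution of at least $\frac{1}{2}\sum_{j:\,c_j=1} q_j^R$, where $q_j^R$ counts the reduced rows through column $j$.

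Finally, I assemble the three cases. In (a), every row is reduced, so $q_j^R = q_j \geq 1$ for $j \in \mathrm{supp}(c)$ (zero columns can be discarded at no cost), giving primed weight at least $\mathrm{wt}(c)/2$ and hence $\mathrm{wt}(\tilde c) \geq \frac{3}{2}\mathrm{wt}(c) \geq 3d/2$. In (b), every $c_j = 1$ is duplicated $q_j$ times in $\tilde c$, so $\mathrm{wt}(\tilde c) \geq \sum_j c_j q_j \geq (\min_i q_i)\,\mathrm{wt}(c) \geq d\min_i q_i$. For (c), combining both effects gives $\mathrm{wt}(\tilde c) \geq \sum_j c_j q_j + \frac{1}{2}\sum_j c_j q_j = \frac{3}{2}\sum_j c_j q_j \geq \frac{3}{2}(\min_i q_i)\,d$. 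The only genuinely delicate step is the prefix-sum estimate in paragraph three; everything else is bookkeeping driven by the explicit form of \cref{classicalwtred}.
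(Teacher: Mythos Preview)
Your proof is correct and follows essentially the same approach as the paper: both define the restriction $\tilde c\mapsto c$ to the original bits, verify that $c$ is a codeword of $H$ (the paper via the image of $H_{w_i-1}^{\mathrm T}$ being even-weight vectors, you via summing the replacement rows), show $c=0$ forces $\tilde c=0$, and then bound the ``copy'' and ``primed'' contributions separately. Your prefix-sum computation of the primed weight is exactly the paper's syndrome-counting argument (``$fc|_{\mathrm{old}}$ flips at least $d$ syndromes, a single one in $c|_{\mathrm{new}}$ flips two syndromes'') written out in coordinates; both yield the same lower bound $\tfrac12\sum_j c_j q_j^R$ on the primed weight. The one place you are noticeably more careful is part~(c): the paper simply says ``combine (a) and (b)'', whereas you track the sharper primed bound $\tfrac12\sum_j c_j q_j$ (rather than just $d/2$) through the column expansion to get $\tfrac32\sum_j c_j q_j\ge\tfrac32(\min_i q_i)d$, which is what the combination actually requires.
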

\begin{proof}
    The claims about $w_H$, $q_H$, and the length follow from \cref{classicalwtred}. Weight reducing a row does not change column weights and vice versa, so implementing one does not undermine the other. Reducing a row or a column appends the same number of linearly independent rows as columns, so the rank is always preserved.

    Let $H$ be a parity-check matrix with rows $\{h_1, \hdots, h_{n - k}\}$. Reducing row $h_i$ produces the parity-check matrix
    \begin{equation}\label{proofmatrix}
        \left(\begin{array}{@{}c|c@{}}
          h_1 & \\
          \vdots & 0 \\
          h_{i-1} & \\
          \hline
          f & H^{\mathrm{T}}_{w_i} \\
          \hline
          h_{i+1} & \\
          \vdots & 0 \\
          h_{n-k} & 
        \end{array}\right),
    \end{equation}
    where $w_i = |\mathrm{supp} \, h_i |$ is the weight of row $i$ and $f\big|_{\mathrm{supp} \, h_i} = I_{w_i}$ is the $w_i \times w_i$ identity. Denoting the columns to the left of the vertical line by $\mathrm{old}$ and those to the right by $\mathrm{new}$, $f\big|_{\mathrm{old} \backslash \mathrm{supp} \, h_i} = 0$. If $c$ is a codeword of \cref{proofmatrix}, then $c\big|_{\mathrm{old}}$ satisfies the checks $h_j$ for $j \neq i$ and $\begin{pmatrix} f & H^{\mathrm{T}}_{w_i - 1}\end{pmatrix} c = 0$. Then
    \begin{equation*}
	0 = \begin{pmatrix} f & H^{\mathrm{T}}_{w_i - 1}\end{pmatrix} c = \begin{pmatrix} f & H^{\mathrm{T}}_{w_i - 1}\end{pmatrix} \begin{pmatrix} c\big|_{\mathrm{old}} \\ c\big|_{\mathrm{new}} \end{pmatrix} = f c\big|_{\mathrm{old}} + H^{\mathrm{T}}_{w_i - 1} c\big|_{\mathrm{new}}
    \end{equation*}
    gives $H^{\mathrm{T}}_{w_i - 1} c\big|_{\mathrm{new}} = f c\big|_{\mathrm{old}} = c\big|_{\mathrm{supp} \, h_i}$. The image of $H^{\mathrm{T}}_{w_i - 1}$ are even-weight vectors, so $c\big|_{\mathrm{supp} \, h_i}$ has even weight and therefore $c\big|_{\mathrm{old}}$ satisfies $h_i$. Hence, $c\big|_{\mathrm{old}}$ is a codeword of the original code.

    Now suppose $c$ is a codeword of \cref{proofmatrix} such that $c\big|_{\mathrm{old}} = 0$. Then $H^{\mathrm{T}}_{w_i - 1} c\big|_{\mathrm{new}} = 0$. But $\ker H^{\mathrm{T}}_{w_i - 1} = 0$ so $c\big|_{\mathrm{new}} = 0$. There is therefore a one-to-one correspondence between codewords of the original code and codewords of \cref{proofmatrix} ($\tilde{k} = k$). It also follows that the minimum-weight codeword of \cref{proofmatrix} is at least as large as the old code ($\tilde{d} \geq d$).

    To prove (a), assume $w_i > 3$ for all $1 \leq i \leq n - k$ and weight reduce all rows. A codeword $c$ has $\mathrm{wt}\left(c\big|_{\mathrm{old}}\right) \geq d$ and $f c\big|_{\mathrm{old}}$ flips at least $d$ syndromes. A single one in $c\big|_{\mathrm{new}}$ flips two syndromes, so at least $d/2$ ones are required to make $c$ commute: $\mathrm{wt}(c) = \mathrm{wt}\left(c\big|_{\mathrm{old}}\right) + \mathrm{wt}\left(c\big|_{\mathrm{new}}\right) \geq d + d/2 = 3d/2$.

    To prove (b), observe that any bit of the original code that is supported on a column that is expanded must now have support on the full repetition code in the expanded section, as this is the only way to commute with the newly added rows. This increases the weight of any codeword supported on this bit; however, not all minimum weight codewords may have support here, so the minimum distance may not increase. If all columns are expanded, the distance increases by at least a multiplicative factor of the smallest column weight.

    To prove (c), combine (a) and (b).
\end{proof}

\begin{algorithm}[t]
    \DontPrintSemicolon
    \SetAlgoLined
    \SetKwComment{Comment}{// }{}
	
    \KwInput{$H$}
    \KwOutput{$\tilde{H}$}

    \BlankLine

    \For{$i \gets 1 \textbf{ to } \text{number of rows of $H$}$}{
        $h_i \gets \text{$i$th row of $H$}$ \;
        \uIf{$\mathrm{wt}(h_i) > 3$}{
            \uIf{compressed}{
                $f \gets \begin{pmatrix}
                    1 & 1 &   &        &   & & \\
                      &   & 1 &        &   & & \\
                      &   &   & \ddots &   & & \\
                      &   &   &        & 1 & & \\
                      &   &   &        & & 1 & 1
                \end{pmatrix}$ on $\mathrm{supp}(h_i)$\;
                $\ell \gets \mathrm{wt}(h_i) - 3$ \;
            }
            \Else{
                $f \gets $ identity matrix on $\mathrm{supp}(h_i)$\;
                $\ell \gets \mathrm{wt}(h_i) - 1$ \;
            }
            \BlankLine
            \If{permute}{
                \Comment*[l]{randomly permute the nonzero columns of $f$}
            }
            \BlankLine
            \uIf{$\tilde H$ exists}{
                $\tilde{H} \gets \begin{pmatrix} \multicolumn{2}{c}{\tilde{H}} & 0 \\ f & 0 & H^{\mathrm{T}}_{\ell} \end{pmatrix}$}
            \Else{
                $\tilde H \gets \begin{pmatrix}f & H^{\mathrm{T}}_\ell \end{pmatrix}$
            }
        }
        \Else{
            \uIf{$\tilde H$ exists}{
                $\tilde{H} \gets \begin{pmatrix} \multicolumn{2}{c}{\tilde{H}} \\ h_i & 0\end{pmatrix}$
            }
            \Else{
                $\tilde{H} \gets h_i$
            }
        }
    }

    \BlankLine

    \Comment*[l]{Apply the above to $\tilde{H}^{\mathrm T}$ and transpose back}

    \BlankLine
	
    \KwRet{$\tilde{H}$}\;
	
    \caption{Classical Weight Reduction}
    \label{alg:classicalwtred}
\end{algorithm}

\noindent Note that nowhere in the proof did we assume that $H$ is full rank, and therefore the proof holds for parity-check matrices with redundant checks.\\

\noindent {\bf Remark:} The proof of \cref{thm:classicaldistthm} shows that weight reduction affects the generator and parity-check matrices of the code asymmetrically. This has implications for properties based on duality. In particular, for a linear code with parity-check matrix $H$ and whose (Eulidean) orthogonal space is determined by the row space of $G$, the linear codes based on $\tilde{G}$ and $\tilde{H}$ are no longer dual. This rules out using classical weight reduction in the CSS construction of stabilizer codes.\\

Suppose that we instead perform weight reduction using \cref{gaugingeq} (gauging) rather than \cref{classicalwtred}. Comparing the two forms of $f$, we will refer to this as \emph{compressed} classical weight reduction since it requires the addition of fewer additional bits. We still have $\tilde{d} \geq d$ but now part (a) of the proof does not hold (consider a codeword with overlap of two with the check being reduced such that the overlap falls within a single row of the expansion $f$, and we see that a generalized proof of part (a) would fail), and part (b) holds but with a different constant, $\tilde{d} \geq d \min_i (q_i - 2)$. In general, there are many possible choices of row expansions: Any choice of $f$ with weight one in columns corresponding to the support of the check being reduce, and weight zero columns elsewhere is a valid expansion, though some choices may not result in effective weight reduction. Further increasing the number of columns of $H^{\mathrm{T}}_{w_i - 1}$ induces a small, additive constant in the distance but at the expense of a nearly linear increase in $n$. A row $h_i$ can be expanded to $\begin{pmatrix} f & H^{\mathrm{T}}_{r_i - 1}\end{pmatrix}$ where $f$ is any $r_i \times n$ matrix with column weight one on bits of the support of $h_i$ and column weight zero elsewhere. In this case, if $f$ has any row with weight greater than one, then part (a) of \cref{thm:classicaldistthm} can't be generalized to this choice of $f$, and instead the guarantee is that reduction will never reduce the minimum distance.  Applying this to the transpose results in column expansion, and if all columns are expanded, then the minimum distance further increases by a factor of at least the minimum size of all column expansions.

We note that classical weight reduction was previously presented in Appendix A of Ref.~\cite{hastings2021fiber}, using the language of cellular homology. As with quantum weight reduction, we believe that our presentation is complementary and remark that \cref{thm:classicaldistthm} also gives a tighter bound on the distance of the weight reduced code. In addition, the independent row and column permutations described in \cref{subsec:perm} are novel and can give significant improvement in the distances of the weight-reduced codes; see \cref{tab:hgp-examples}.

\subsection{Decoding}
Unlike quantum weight reduction, classical weight reduction is able to easily map back onto the original code. Inspired by the proof of \cref{thm:classicaldistthm}, we can use this to map the decoding problem of the weight-reduced code back to the decoding problem of the original code. To see how, suppose we are using the weight-reduced parity-check matrix to decode. An error on the bits corresponding to weight-reduced columns can be uniquely identified and corrected using the rows corresponding to the $H^{\mathrm{T}}_\ell$ blocks. Collapsing these columns into a single bit puts the matrix into the form: ``original bits'' followed by ``new bits''. The original bits are protected by the original parity-check matrix and the original decoder may be used. Errors on the new bits may be corrected using the $H^{\mathrm{T}}_\ell$ blocks. This may or may not be beneficial depending on the original decoder. Message-passing based decoders may perform better on the weight-reduced code than the original.

Graphically, a high-degree node (\cref{fig:TG}) is replaced by \cref{fig:TGwtred}. This has no cycles and lengthens any cycles the replaced node was previously a part of. The exact increase in cycle length depends on the ordering of the initial edges. The length-2 path from $v_{w - 1}$ to $v_w$ becomes a length-4 path through $v_{w - 1}$, $v^\prime_{w - 1}$, and $v_w$, while the length-2 path from $v_1$ to $v_w$ becomes the length-$(2w + 1)$ path through $v_1$, $v^\prime_1$, $v^\prime_2, \cdots$, $v^\prime_{w - 1}$, and $v_w$. If the cycle structure of the input code is known, permutations of the check sides of the original edges in the weight reduced Tanner graph can be used to selectively increase harmful short cycles. These permutations are the graphical representation of the permutations described in \cref{subsec:perm}. We have observed large increases in girth and improvements in iterative decoder performance using this approach.
\begin{figure}[t!]
    \centering
    \subfloat[][]{
        \centering
        \includegraphics[width=.3\textwidth]{figures/fig-gauging-a.pdf}
        \label{fig:TG}
    }
    \hspace{1cm}
    \subfloat[][]{
        \centering
        \includegraphics[width=.3\textwidth]{figures/fig-gauging-b.pdf}
    }
    \hspace{1cm}
    \subfloat[][]{
        \centering
        \includegraphics[width=.17\textwidth]{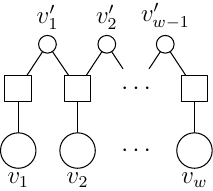}
        \label{fig:TGwtred}
    }
    \caption{(a) The input Tanner graph. (b) The compressed classical weight reduction of the check node in (a). (c) The classical weight reduction of the check node in (a). (a) and (b) are the same as \cref{fig:gauginga,fig:gaugingb}, respectively.}
\end{figure}

The above has effects on iterative decoding.
Recall the unique-neighbor property of expander codes\footnote{See~\cite{shankar2005expander} for a nice, low-level description of expander codes and how the unique-neighbor property of expander graphs plays a role in bounding the distance of the code.}: Consider a $(\ell, r, \varepsilon, \ell / 2)$-left expander with vertex bipartition $L \cup R$, (for all sets $S \subset L$ of size no more than $\varepsilon |L|$, $|\mathcal{N}(S)| \geq \ell / 2 |S|$, where $\mathcal{N}(S)$ is the neighborhood of $S$), then for these sets $\exists y \in R$ such that $|\mathcal{N}(y) \cap S| = 1$, (there exists a unique neighbor). Now suppose $S$ is a trapping set. Then unless there are many such $y$’s, there is not going to be a large amount of extrinsic information flowing into this set of nodes to help overcome the trapping set. A cycle with a lot of extrinsic information flowing into it will be able to overcome what would otherwise become a trapping set. In this sense, cycle connectivity is more important than cycle length for decoder performance.

On one hand, weight reduction creates a potentially large number of low degree nodes, which typically strongly degrade the performance of iterative decoders. On the other hand, any given cycle has the same number of extrinsic connections to the rest of the graph as in the original graph since weight reduction introduces no new cycles. What has changed is the ratio of extrinsic connections to the cycle length. Due to their low degree, certain patterns in the variable nodes introduced by the $H^{\mathrm{T}}_\ell$ block will cause the entire repetition code pattern to flip to errors under belief propagation. This can be overcome if the other variable nodes receive a high amount of incoming extrinsic information (have high degree) but becomes significantly more difficult if the columns are also weight reduced. However, the more rows and columns that are reduced, the more new vertices are created in the Tanner graph, the longer the cycles, and therefore probability that theses structured patterns of errors occur decreases. Hence, weight reduction introduces a significant amount of trapping sets but they become larger and therefore less harmful. In this sense, weight reduction introduces uneven error protection in the variable nodes.

See~\cite{ryan2009channel, tian2004selective} for further discussion of the concepts in this subsection.

\subsection{Permutations}\label{subsec:perm}
Permutations of the input are also important for reasons besides selectively increasing the girth. Consider an $[n, k, d]$ code with parity-check matrix $H$. Permuting the columns of $H$ produces an equivalent code with the same parameters and corresponds to a relabeling of the Tanner graph without any effect on cycle structure. However, weight reducing the original code and the permuted code produce \emph{nonequivalent} codes with potentially different distances.
\begin{example}
    Weight reducing
    \begin{equation*}
        \begin{pmatrix}
            1 & 1 & 1 & 1 & 0 & 0\\
            0 & 0 & 1 & 1 & 1 & 1
        \end{pmatrix}
    \end{equation*}
    and its permutation
    \begin{equation*}
        \begin{pmatrix}
            1 & 0 & 1 & 0 & 1 & 1\\
            0 & 1 & 1 & 1 & 1 & 0
        \end{pmatrix}
    \end{equation*}
    produces
    \begin{equation*}
        H = \begin{pmatrix}
            1 &   &   &   &   &   & 1 &   &   &   &   &  \\
              & 1 &   &   &   &   & 1 & 1 &   &   &   &  \\
              &   & 1 &   &   &   &   & 1 & 1 &   &   &  \\
              &   &   & 1 &   &   &   &   & 1 &   &   &  \\
              &   & 1 &   &   &   &   &   &   & 1 &   &  \\
              &   &   & 1 &   &   &   &   &   & 1 & 1 &  \\
              &   &   &   & 1 &   &   &   &   &   & 1 & 1\\
              &   &   &   &   & 1 &   &   &   &   &   & 1
        \end{pmatrix}
        \quad \text{and} \quad
        H^\prime = \begin{pmatrix}
            1 &   &   &   &   &   & 1 &   &   &   &   &  \\
              &   & 1 &   &   &   & 1 & 1 &   &   &   &  \\
              &   &   &   & 1 &   &   & 1 & 1 &   &   &  \\
              &   &   &   &   & 1 &   &   & 1 &   &   &  \\
              & 1 &   &   &   &   &   &   &   & 1 &   &  \\
              &   & 1 &   &   &   &   &   &   & 1 & 1 &  \\
              &   &   & 1 &   &   &   &   &   &   & 1 & 1\\
              &   &   &   & 1 &   &   &   &   &   &   & 1
        \end{pmatrix},
    \end{equation*}
    respectively. The linear code given by $H$ has parameters $[12, 4, 3]$ but $H^\prime$ is $[12, 4, 4]$.
\end{example}

More generally, we can use the replacement $\begin{pmatrix} \Pi & 0 & H^{\mathrm{T}}_\ell\end{pmatrix}$, where $\Pi$ is any permutation of the identity. Different permutations can be used for reducing each row and column independently. We use this in the next section, where the positive effect on distance can be seen in \cref{tab:hgp-examples,tab:lp-examples,tab:hgp-examples-app-A,tab:hgp-examples-app-B}. 
Quantum weight reduction should be equally susceptible to permutations; we leave this to future work.

\section{Examples And Numerical Results}\label{sec:num}
{\renewcommand{\arraystretch}{1.4}
\begin{center}
\begin{table}[]
    \begin{tabular}{||c|c|c|c|c|c|c||}
        \hline
        $\mathcal{C}(H)$ & $\mathrm{HGP}(H)$ & $R$ & $\mathrm{HGP}(\tilde{H})$ & $R$ & $\mathrm{HGP}(\tilde{H}^{(c)})$ & $R$ \\ \hline
        $[6, 3, 3]$ & $[\![45, 9, 3]\!]$ & 0.200 & $[\![117, 9, 4]\!]$ & 0.077 & $[\![65, 9, 4]\!]$ & 0.138\\ \hline
        $[7, 2, 4]$ & $[\![74, 4, 4]\!]$ & 0.054 & $[\![164, 4, 4]\!]$ & 0.024 & $[\![100, 4, 4]\!]$ & 0.040\\ \hline
        $[7, 3, 4]$ & $[\![65, 9, 4]\!]$ & 0.138 & $[\![149, 9, 5]\!]$ & 0.060 & $[\![89, 9, 4]\!]$ & 0.101\\ \hline
        $[7, 4, 3]$ & $[\![58, 16, 3]\!]$ & 0.276 & $[\![400, 16, 6\rightarrow7]\!]$ & 0.040 & $[\![136, 16, 3\rightarrow4]\!]$ & 0.118 \\ \hline
        $[8, 2, 5]$ & $[\![100, 4, 5]\!]$ & 0.040 & $[\![394, 4, 8]\!]$ & 0.010 & $[\![202, 4, 6]\!]$ & 0.020\\ \hline
        $[8, 3, 4]$ & $[\![89, 9, 4]\!]$ & 0.101 & $[\![317, 9, 6\rightarrow7]\!]$ & 0.028 & $[\![149, 9, 4\rightarrow5]\!]$ & 0.060 \\ \hline
        $[8, 4, 4]$ & $[\![80, 16, 4]\!]$ & 0.200 & $[\![656, 16, 7\rightarrow10]\!]$ & 0.024 & $[\![208, 16, 4\rightarrow6]\!]$ & 0.077 \\ \hline
        $[9, 2, 6]$ & $[\![130, 4, 6]\!]$ & 0.031 & $[\![514, 4, 10]\!]$ & 0.008 & $[\![290, 4, 8]\!]$ & 0.014\\ \hline
        $[9, 3, 4]$ & $[\![117, 9, 4]\!]$ & 0.077 & $[\![369, 9, 6\rightarrow7]\!]$ & 0.024 & $[\![185, 9, 4\rightarrow5]\!]$ & 0.049 \\ \hline
        $[9, 4, 4]$ & $[\![106, 16, 4]\!]$ & 0.151 & $[\![730, 16, 7\rightarrow10]\!]$ & 0.022 & $[\![250, 16, 4\rightarrow6]\!]$ & 0.064 \\ \hline
        $[9, 5, 3]$ & $[\![97, 25, 3]\!]$ & 0.258 & $[\![1313, 25, 7\rightarrow11]\!]$ & 0.019 & $[\![493, 25, 4\rightarrow8]\!]$ & 0.051 \\ \hline
        $[10, 2, 6]$ & $[\![164, 4, 6]\!]$ & 0.024 & $[\![650, 4, 10]\!]$ & 0.006 & $[\![394, 4, 8]\!]$ & 0.010\\ \hline
        $[10, 3, 5]$ & $[\![149, 9, 5]\!]$ & 0.060 & $[\![929, 9, 10]\!]$ & 0.010 & $[\![369, 9, 7]\!]$ & 0.024\\ \hline
        $[10, 4, 4]$ & $[\![136, 16, 4]\!]$ & 0.118 & $[\![458, 16, 6\rightarrow7]\!]$ & 0.035 & $[\![250, 16, 4\rightarrow5]\!]$ & 0.064\\ \hline
        $[10, 5, 4]$ & $[\![125, 25, 4]\!]$ & 0.200 & $[\![937, 25, 7\rightarrow9]\!]$ & 0.027 & $[\![377, 25, 4\rightarrow6]\!]$ & 0.066\\ \hline
        $[10, 6, 3]$ & $[\![116, 36, 3]\!]$ & 0.310 & $[\![1586, 36, 7\rightarrow10]\!]$ & 0.023 & $[\![666, 36, 4\rightarrow8]\!]$ & 0.054\\ \hline
    \end{tabular}
    \caption{Classical weight reduction applied to hypergraph product codes with the best known linear codes of small size as input. $\mathcal{C}(H)$ is the linear code with parity-check matrix $H$ obtained from GAP (see main text). For each hypergraph product code, we give its encoding rate $R=k/n$. For each weight-reduction method, we apply the relevant algorithm 10,000 times using different permutations of the input parity-check matrix. In cases where permutations improved the distance, we use the notation $d_1 \rightarrow d_2$, where $d_1$ indicates the distance without permutations, and $d_2$ indicates the highest obtained distance.}
    \label{tab:hgp-examples}
\end{table}
\end{center}
}

In this section, we present examples of codes constructed using classical and quantum weight reduction. We focus on two classes of QEC codes: hypergraph product codes and lifted product codes (see \cref{sec:background} for the relevant background) for two reasons. First, both classical and quantum weight reduction are applicable to these code classes, so we can use them to compare the two weight reduction methods directly. Second, both of these classes contain code families with constant encoding rate~\cite{tillich2014quantum, panteleev2021quantum}, and almost linear minimum distance in the case of lifted product codes~\cite{panteleev2021quantum}, and therefore present compelling alternatives to code families with limited parameters such as two-dimensional topological codes~\cite{bravyi2010tradeoffs}.

We first consider hypergraph product codes. Recall that the distance of a hypergraph product code $\mathrm{HGP}(H_1,H_2)$ is $d = \min (d_1, d_2, d_1^{\mathrm{T}}, d_2^{\mathrm{T}})$. If $H_i$ is full rank, then $\mathcal{C}\left(H_i^{\mathrm{T}}\right)$ has $k=0$. In this case, $d_i^{\mathrm{T}}$ is defined to be infinite. By \cref{thm:classicaldistthm}, weight reduction does not decrease the code distance of the (classical) input codes and therefore we can conclude that $\tilde{d} \geq d$, where $\tilde{d}$ is the distance of $\mathrm{HGP}(\tilde{H}_1,\tilde{H}_2)$. We specialize to the case of square hypergraph product codes, $\mathrm{HGP}(H) = \mathrm{HGP}(H,H)$, where $H$ is the parity-check matrix of a linear code. As above, we use $\tilde{H}$ to denote the matrix produced from $H$ by classical weight reduction and introduce $\tilde{H}^{(c)}$ to denote the matrix produced from $H$ by compressed classical weight reduction. We use $\widetilde{\mathrm{HGP}}(H)$ to denote the code produced from $\mathrm{HGP}(H)$ by applying quantum weight reduction.

To compare the weight reduction methods, we consider small parity-check matrices returned by the GAP function \texttt{BestKnownLinearCode} (from the package GUAVA~\cite{guava}) as inputs to the hypergraph product.  The output matrices of classical weight reduction have row and column weights upper bounded by three, and by using these as inputs we are able to construct weight-reduced hypergraph product codes with $(\tilde{w}_X, \tilde{q}_X, \tilde{w}_Z, \tilde{q}_Z) = (6, 3, 6, 3)$. The weight-reduced codes produced here often have higher distance than their inputs, although at a reduced encoding rate; see \cref{tab:hgp-examples} (and \cref{tab:hgp-examples-app-A,tab:hgp-examples-app-B} in \cref{app:examples-hgp}). We find that compressed classical weight reduction gives us codes with improved encoding rate but often worse distance, as is to be expected from the analysis in \cref{sec:classicalWtR}. We additionally find that independent row and column permutations (see \cref{subsec:perm}) can dramatically improve the distance of the weight-reduced codes. We demonstrate this in \cref{tab:hgp-examples} for the hypergraph product codes where the number to the left of an arrow is the distance in the unpermuted case and the number to the right is the highest distance found amongst $10,000$ codes constructed with independent row/column permutation. We apply the same methodology to the family of $(w_X, q_X, w_Z, q_Z) = (7, 4, 7, 4)$ hypergraph product codes from~\cite[Table 1]{roffe2020decoding} in \cref{tab:hgp-joschka}.

For quantum weight reduction we achieve parameters of $(\tilde{w}_X, \tilde{q}_X, \tilde{w}_Z, \tilde{q}_Z) = (6, 6, 6, 3)$ but at the cost of a prohibitively large increase in the number of physical qubits; see \cref{tab:qwr-examples}. For example, consider the code $\mathrm{HGP}(H)$ where $H$ is the parity-check matrix of the $[6, 3, 3]$ code from \cref{tab:hgp-examples}. This is a $[\![45, 9, 3]\!]$ code with $(w_X, q_X, w_Z, q_Z) = (7, 4, 7, 4)$. The best code we found by applying quantum weight reduction after approximately $100$ random cycle bases has parameters $[\![2892, 9, 5]\!]$ and $(\tilde{w}_X, \tilde{q}_X, \tilde{w}_Z, \tilde{q}_Z) = (6, 6, 6, 3)$. We did not utilize a second round of thickening and choosing heights to reduce $\tilde{q}_X$, as the increase in $n$ was already excessively large. Compare this with the code with parameters $[\![117, 9, 4]\!]$ and $(\tilde{w}_X, \tilde{q}_X, \tilde{w}_Z, \tilde{q}_Z) = (6, 3, 6, 3)$ produced by classical weight reduction. These disparities combined with the expected degradation of iterative decoding performance in the case of quantum weight reduction (see \cref{sec:itdec}) lead us to conclude that classical weight reduction is a superior technique for weight reducing hypergraph product codes in the regime of interest.

{\renewcommand{\arraystretch}{1.4}
\begin{center}
\begin{table}[]
    \begin{tabular}{||c|c|c|c|c|c|c||}
        \hline
        $\mathcal{C}(H)$ & $\mathrm{HGP}(H)$ & $R$ & $\mathrm{HGP}(\tilde{H})$ & $R$ & $\mathrm{HGP}(\tilde{H}^{(c)})$ & $R$ \\ \hline
        $[16, 4, 6]$ & $[\![400, 16, 6]\!]$ & 0.040 & $[\![5008, 16, 18 \rightarrow 23]\!]$ & 0.003 & $[\![1360, 16, 10 \rightarrow 13]\!]$ & 0.012\\ \hline
        $[20, 5, 8]$ & $[\![625, 25, 8]\!]$ & 0.040 & $[\![7825, 25, 23 \rightarrow 29]\!]$ & 0.003 & $[\![2125, 25, 12 \rightarrow 15]\!]$ & 0.012\\ \hline
        $[24, 6, 10]$ & $[\![900, 36, 10]\!]$ & 0.040 & $[\![11268, 36, 29 \rightarrow 33]\!]$ & 0.003 & $[\![3060, 36, 17 \rightarrow 18]\!]$ & 0.012\\ \hline
    \end{tabular}
    \caption{Classical weight reduction applied to the family of hypergraph product codes from~\cite[Table 1]{roffe2020decoding}. For each hypergraph product code, we give its encoding rate $R = k / n$. For each weight-reduction method, we apply the relevant algorithm 10,000 times using different permutations of the input parity-check matrix. We use the notation $d_1 \rightarrow d_2$, where $d_1$ indicates the distance without permutations, and $d_2$ indicates the highest obtained distance.}
    \label{tab:hgp-joschka}
\end{table}
\end{center}
}

{\renewcommand{\arraystretch}{1.4}
\begin{center}
\begin{table}[h]
    \begin{tabular}{||c|c|c|c|c|c|c||}
        \hline
        $\mathcal{C}(H)$ & $\mathrm{HGP}(H)$ & $R$ & $(w_X,q_X,w_Z,q_Z)$ & $\widetilde{\mathrm{HGP}}(H)$ & $R$ & $(\tilde{w}_X, \tilde{q}_X, \tilde{w}_Z, \tilde{q}_Z)$\\ \hline
        $[6, 3, 3]$ & $[\![45, 9, 3]\!]$ & 0.200 & (7, 4, 7, 4) & $[\![2892, 9, 5]\!]$ & 0.003 & (6, 6, 6, 3)\\ \hline
        $[7, 2, 4]$ & $[\![74, 4, 4]\!]$ & 0.054 & (7, 4, 7, 4) & $[\![7466 ,4, 6]\!]$ & 0.0005 & (6, 6, 8, 3)\\ \hline
        $[7, 3, 4]$ & $[\![65, 9, 4]\!]$ & 0.138 & (7, 4, 7, 4) & $[\![6844, 9, 5]\!]$ & 0.001 & (6, 6, 8, 3)\\ \hline
        $[7, 4, 3]$ & $[\![58, 16, 3]\!]$ & 0.276 & (7, 4, 7, 4) & $[\![5085, 16, 3]\!]$ & 0.003 & (6, 6, 8, 3)\\ \hline
    \end{tabular}
\caption{
Quantum weight reduction applied to small hypergraph product codes. For each code we ran the algorithm approximately $100$ times with random cellulations and kept the output with the lowest $(\tilde{w}_X, \tilde{q}_X, \tilde{w}_Z, \tilde{q}_Z)$.
}
\label{tab:qwr-examples}
\end{table}
\end{center}
}

We also consider the more efficient lifted product construction. Classical weight reduction is applicable to quasi-cyclic lifted product codes with various inputs including group algebras and polynomial rings, e.g.,
\begin{equation}\label{QCwtred}
    \begin{pNiceArray}{cccccc}
        g_1(x) & \hdots & g_w(x) & 0 & \cdots & 0
    \end{pNiceArray}
    \mapsto
    \begin{pNiceArray}{ccccccccccc}
        g_1(x) &   &        &   &   & \cdots & 1 &   &   &   & \\
          & g_2(x) &        &   &   & \cdots & 1 & 1 &   &   & \\
          &   & \ddots &   &   & \cdots &   & & \ddots &   & \\
          &   &        & g_{w - 1}(x) &   & \cdots &   &   &   & 1 & 1\\
          &   &        &   & g_w(x) & \cdots &   &   &   &   & 1
    \end{pNiceArray},
\end{equation}
where $1$ is the constant polynomial. Although this reduces the row weight, the final weight is still determined by the number of coefficients in each polynomial and could be larger than our target of three. 

We restrict our attention to codes of the form $\mathrm{LP}(A) = \mathrm{LP}(A, A^{\mathrm{T}})$ and use base matrices that have previously appeared in the literature, which are given in \cref{app:examples-lp}. All of our weight-reduced lifted codes have $(\tilde{w}_X, \tilde{q}_X, \tilde{w}_Z, \tilde{q}_Z) = (6, 3, 6, 3)$.  As with hypergraph product codes, we find that the weight-reduced codes have higher distance but lower encoding rate than the input codes; see \cref{tab:lp-examples}. We again observe that independent row and column permutations improve the distances of the weight-reduced codes, though we only construct ten codes for each base matrix as the distance calculations are more time-consuming in this case. The upper bounds on the distances are computed using the BP+OSD method described in~\cite{bravyi2023high}. Note that the bounds for the codes produced with (uncompressed) classical weight reduction are likely loose.

Comparing our weight-reduced lifted product and hypergraph product codes, we observe that both families have similar encoding rates but the lifted product codes generally offer improved distances. To illustrate this, consider the quantity $Rd^2$, which is equal to one for the rotated surface code~\cite{bombin2007optimal}. For the hypergraph product codes, the highest value we obtain is $Rd^2 = 6.4$ for the $[\![1850, 100, 9]\!]$ code in \cref{tab:hgp-examples-app-B}. In contrast, for the lifted product codes produced using compressed classical weight reduction (where we are have more confidence in the distance estimates), we obtain values up to $Rd^2 = 37.6$ for the $[\![2635, 43, \leq 48]\!]$ code in \cref{tab:lp-examples}.

{\renewcommand{\arraystretch}{1.4}
\begin{center}
\begin{table}[ht!]
    \begin{tabular}{||c|c|c|c|c|c|c|c|c||}
        \hline
        $\mathcal{C}(A)$ & $\mathrm{LP}(A)$ & $R$ & $\mathcal{C}(\tilde{A})$ & $\mathrm{LP}(\tilde{A})$ & $R$ & $\mathcal{C}(\tilde{A}^{(c)})$ & $\mathrm{LP}(\tilde{A}^{(c)})$ & $R$ \\
        \hline
        $[52,27,6]$ & $[[260,58,\leq 6]]$ & 0.223 & $[130,27,12\rightarrow 14]$ & $[[2132,58,\leq 14]]$ & 0.027 & $[78,27,6\rightarrow8]$ & $[[676,58,\leq 8]]$ & 0.086\\ \hline
        $[28,9,10]$ & $[[175,19,\leq 10]]$ & 0.109 & $[91,9,28\rightarrow33]$ & $[[2191,19,\leq 39]]$ & 0.009 & $[49,9,14\rightarrow18]$ & $[[595,19,\leq 18]]$ & 0.032\\ \hline
        $[36,11,12]$ & $[[225,21,\leq 12]]$ & 0.093 & $[117,11,36\rightarrow40]$ & $[[2817,21,\leq 48]]$ & 0.007 & $[63,11,18\rightarrow22]$ & $[[765,21,\leq 22]]$ & 0.027\\ \hline
        $[68,19,18]$ & $[[425,29,\leq 18]]$ & 0.068 & $[221,19,54\rightarrow62]$ & $[[5321,29,\leq 74]]$ & 0.005 & $[119,19,32\rightarrow34]$ & $[[1445,29,\leq 34]]$ & 0.020\\ \hline
        $[76,21,20]$ & $[[475,31,\leq 20]]$ & 0.065 & $[247,21,60\rightarrow68]$ & $[[5947,31,\leq 93]]$ & 0.005 & $[133,21,37\rightarrow38]$ & $[[1615,31,\leq 38]]$ & 0.019\\ \hline
        $[124,33,24]$ & $[[775,43,\leq 24]]$ & 0.055 & $[403,33,71\rightarrow84]$ & $[[9703,43,\leq 115]]$ & 0.004 & $[217,33,44\rightarrow48]$ & $[[2635,43,\leq 48]]$ & 0.016\\ \hline
    \end{tabular}
    \caption{Classical weight reduction applied to lifted product codes. The base matrices $A$ are given in \cref{app:examples-lp} and $\mathcal{C}(A)$ is the quasi-cyclic code defined by $A$. For each weight reduced base matrix, we apply the relevant algorithm 10 times with different permutations of the input and retain the matrix whose associated quasi-cyclic code has the highest minimum distance. For each lifted product code we also provide the encoding rate $R = k / n$. We use the notation $d_1 \rightarrow d_2$, where $d_1$ indicates the distance without permutations, and $d_2$ indicates the highest obtained distance.}
    \label{tab:lp-examples}
\end{table}
\end{center}
}

\subsection{Numerical Simulations}
We simulate the performance of our codes as a quantum memory using a noise model derived from Xanadu's photonic architecture based on GKP qubits~\cite{bourassa2021Blueprint, tzitrin2021Fault}. Specifically, we consider a cluster state formed by foliating a QEC code~\cite{bolt2016Foliation, roberts2020Universal}, which is realized using GKP qubits and passive linear optics. Concatenation of the GKP code with discrete-variable QEC codes has previously been considered in~\cite{fukui2018high, vuillot2019quantum, noh2020fault, hanggli2020enhanced, zhang2021quantum, noh2022low, raveendran2022finite, zhang2023concatenation}. For a QEC code with weights $(w_X, q_X, w_Z, q_Z)$, the data and ancilla nodes of the corresponding cluster state have degrees $q_X + 2$ and $w_X$, respectively, in primal layers and $q_Z + 2$ and $w_Z$ in dual layers. The foliated stabilizers have weight $w_X + 2$ and $w_Z + 2$, although we emphasize that these are reconstructed from single-qubit measurement outcomes rather than being measured directly. We consider $2d$ foliation layers, where $d$ is the distance of the QEC code.

Following Ref.~\cite{tzitrin2021Fault}, once the cluster state is specified, the resource state construction involves:
\begin{enumerate}
    \item \textbf{Preparing GKP two-qubit cluster states.} These can be constructed using two GKP sensor states, a 50:50 beamsplitter, and a $\pi/2$ phase shifter~\cite{walshe2020continuous, tzitrin2021Fault}. Prior to the beamsplitter, each mode is assumed to be a perfect GKP qubit up to a single-mode Gaussian blurring channel parameterized by a variance $\sigma^{2}$~\cite{tzitrin2021Fault}. For Gaussian blurring channels, it is convenient to express the variance of the Gaussian in terms of decibels, $\tfrac{\sigma^{2}}{\sigma_{\text{vac}}^{2}}[\mathrm{dB}] = -10 \log_{10} (\sigma^{2}/\sigma^{2}_{\text{vac}})$, where $\sigma^{2}_{\text{vac}} $ is the variance of the vacuum, which is $1/2$ in units where $\hbar =1$. This model is equivalent to uniform photon loss experienced throughout the cluster state generation and measurement process~\cite{tzitrin2021Fault}.
    \item \textbf{Placing GKP pairs.} Place one GKP cluster state   pair for each edge of the cluster state graph. Now each original node in the cluster state is associated with one mode (half of a pair) per neighbor. This collection of modes is referred to as a macronode. 
    \item \textbf{Measuring macronodes.} To measure a given cluster state site in the $X$ basis, a continuous-variable GHZ measurement is applied on each mode within the corresponding macronode. This can be achieved by sending each mode through a beamsplitter network, followed by measurement of momentum on a single mode, and position on the rest. To measure in the $Z$ basis the process is the same except that all modes are measured in the position basis. 
    The Pauli error rate of these measurements increases monotonically with variance $\sigma^2$, which corresponds to decreasing $\tfrac{\sigma^{2}}{\sigma_{\text{vac}}^{2}}[\mathrm{dB}]$.
    \item \textbf{Applying feedforward corrections.} As described in Ref.~\cite{tzitrin2021Fault}, feedforward corrective displacement operators that are conditioned on binned homodyne outcomes from a given macronode must be applied on all modes residing in macronodes that are neighbors with respect to the original cluster state graph. These corrections can be applied in post-processing (no physical displacement gates are required).  
\end{enumerate}

We utilize the binning strategy described in~\cite{tzitrin2021Fault} as a soft-in-soft-out inner decoder and BP+OSD as a soft-in-hard-out outer decoder~\cite{panteleev2021quantum, roffe2020decoding}. We use the min-sum variant of BP with $N/10$ iterations (where $N$ is the number of qubits in the foliated cluster state) and a flooding schedule. For OSD, we choose the combination sweep strategy with search depth parameter $\lambda = 60$. We did not attempt to optimize the decoder to take the structure of our codes into account, rather we used it because of its applicability across the class of quantum LDPC codes~\cite{roffe2020decoding}.

We quantify the performance of our foliated codes using the logical error rate, which we estimate using Monte Carlo simulations. We consider a logical error to have occurred if any logical qubit has an error after decoding. For $n_{\mathrm{tot}}$ trials and $n_{\mathrm{fail}}$ logical errors we compute the logical error rate according to 
\begin{equation}
    p_{\mathrm{fail}} = \frac{n_{\mathrm{fail}} + \kappa^2 / 2}{n_{\mathrm{tot}} + \kappa^2},
\end{equation}
with error bars given by
\begin{equation}
    \kappa \sqrt{\frac{p_{\mathrm{fail}} (1 - p_{\mathrm{fail}})}{n_{\mathrm{tot}} + \kappa^2}},
\end{equation}
where $\kappa$ is the desired quantile of a standard normal distribution~\cite{brown2001interval}. We use $\kappa = 1.96$ which corresponds to a $95\%$ confidence interval.

\begin{figure}
    \centering
    \subfloat[]{
    \begin{tikzpicture}
        \begin{axis}[
            xlabel={$\sigma^2/\sigma_{\mathrm{vac}}^2$ [dB]}, 
            ylabel={$p_{\mathrm{fail}}$}, 
            axis lines=box,
            ymode=log, 
            grid=both, 
            legend style={at={(0.03,0.02)},anchor=south west}, 
            cycle list/Dark2,
            legend cell align={left},
            legend style={nodes={scale=0.9, transform shape}}
            ]
        \addlegendimage{mark=square*,only marks,mark size=2,Dark2-A}
        \addlegendimage{mark=*,only marks,mark size=2,Dark2-B}
        \addlegendimage{mark=triangle*,only marks,mark size=2,Dark2-C}
        \addlegendimage{mark=diamond*,only marks,mark size=2,Dark2-D}
        
        \addplot+[mark=square*, error bars/.cd,y dir=both, y explicit] table[x=delta, y=p_logical, y error=p_logical_error_bar, col sep=comma] {data/results_HGP_1.csv};
        \addlegendentry{$\mathrm{HGP}(H) : [\![45,9,3]\!]$}
        
        \addplot+[mark=*, error bars/.cd,y dir=both, y explicit] table[x=delta, y=p_logical, y error=p_logical_error_bar, col sep=comma] {data/results_HGP_3.csv};
        \addlegendentry{$\mathrm{HGP}(\tilde{H}) : [\![117,9,4]\!]$}
        
        \addplot+[mark=triangle*, error bars/.cd,y dir=both, y explicit] table[x=delta, y=p_logical, y error=p_logical_error_bar, col sep=comma] {data/results_HGP_2.csv};
        \addlegendentry{$\mathrm{HGP}(\tilde{H}^{(c)}) : [\![65,9,4]\!]$}
        
        \addplot+[mark=diamond*, error bars/.cd,y dir=both, y explicit] table[x=delta, y=p_logical, y error=p_logical_error_bar, col sep=comma] {data/results_HWR_1.csv};
        \addlegendentry{$\widetilde{\mathrm{HGP}}(H) : [\![2892,9,5]\!]$}
        
        \end{axis}
    \end{tikzpicture}\label{fig:633-sim}}
    \quad
    \subfloat[]{
    \begin{tikzpicture}
        \begin{axis}[
            xlabel={$\sigma^2/\sigma_{\mathrm{vac}}^2$ [dB]}, 
            ylabel={$p_{\mathrm{fail}}$}, 
            axis lines=box,
            ymode=log, 
            grid=both, 
            legend style={at={(0.03,0.02)},anchor=south west}, 
            cycle list/Dark2,
            legend cell align={left},
            legend style={nodes={scale=0.9, transform shape}}
            ]
        \addlegendimage{mark=square*,only marks,mark size=2,Dark2-A}
        \addlegendimage{mark=*,only marks,mark size=2,Dark2-B}
        \addlegendimage{mark=triangle*,only marks,mark size=2,Dark2-C}
        \addlegendimage{mark=diamond*,only marks,mark size=2,Dark2-D}
        
        \addplot+[mark=square*, error bars/.cd,y dir=both, y explicit] table[x=delta, y=p_logical, y error=p_logical_error_bar, col sep=comma] {data/results_HGP_169.csv};
        \addlegendentry{$\mathrm{HGP}(H_1) : [\![400,16,6]\!]$}
        
        \addplot+[mark=*, error bars/.cd,y dir=both, y explicit] table[x=delta, y=p_logical, y error=p_logical_error_bar, col sep=comma] {data/results_HGP_170.csv};
        \addlegendentry{$\mathrm{HGP}(\tilde{H}_1^{(c)}) : [\![1360,16,13]\!]$}
        
        \addplot+[mark=triangle*, error bars/.cd,y dir=both, y explicit] table[x=delta, y=p_logical, y error=p_logical_error_bar, col sep=comma] {data/results_HGP_172.csv};
        \addlegendentry{$\mathrm{HGP}(H_2) : [\![625,25,8]\!]$}
        
        \addplot+[mark=diamond*, error bars/.cd,y dir=both, y explicit] table[x=delta, y=p_logical, y error=p_logical_error_bar, col sep=comma] {data/results_HGP_173.csv};
        \addlegendentry{$\mathrm{HGP}(\tilde{H}_2^{(c)}) : [\![2125,25,16]\!]$}
        
        \end{axis}
    \end{tikzpicture}\label{fig:hgp-sim}}
    \caption{Simulation results for hypergraph product codes.
    Recall that larger values of $\sigma^2/\sigma_{\mathrm{vac}}^2$ [dB] correspond to lower effective Pauli error rates. (a) The baseline $\mathrm{HGP}(H)$ where $H$ is the parity-check matrix of the $[6, 3, 3]$ code from \cref{tab:hgp-examples}. $\mathrm{HGP}(\tilde{H})$ and $\mathrm{HGP}(\tilde{H}^{(c)})$ are constructed via applying classical weight reduction to $H$. $\widetilde{\mathrm{HGP}}(H)$ is constructed by applying quantum weight reduction to $\mathrm{HGP}(H)$. $\mathrm{HGP}(\tilde{H})$ and $\mathrm{HGP}(\tilde{H}^{(c)})$ have superior performance compared to the baseline code, whereas $\widetilde{\mathrm{HGP}}(H)$ only becomes competitive for large values of $\sigma^2/\sigma_{\mathrm{vac}}^2$. (b) Comparison of the codes from rows one and two of \cref{tab:hgp-joschka}. The waterfall region for the weight-reduced codes starts at approximately $10.5$dB compared to approximately $11$dB for the original codes.}
    \label{fig:sim-hgp}
\end{figure}
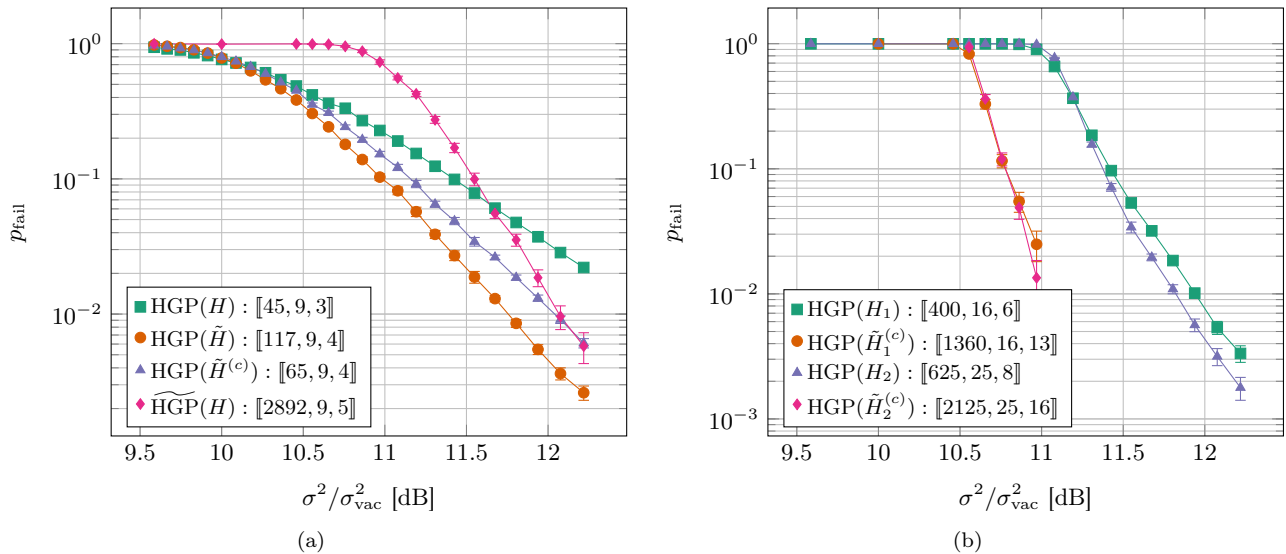

\begin{figure}[ht]
    \centering
    \subfloat[]{
    \begin{tikzpicture}
        \begin{axis}[
            xlabel={$\sigma^2/\sigma_{\mathrm{vac}}^2$ [dB]}, 
            ylabel={$p_{\mathrm{fail}}$}, 
            axis lines=box,
            ymode=log, 
            grid=both, 
            legend style={at={(0.03,0.02)},anchor=south west}, 
            cycle list/Dark2,
            legend cell align={left},
            legend style={nodes={scale=0.9, transform shape}},
            ]
        \addlegendimage{mark=square*,only marks,mark size=2,Dark2-A}
        \addlegendimage{mark=*,only marks,mark size=2,Dark2-B}
        \addlegendimage{mark=triangle*,only marks,mark size=2,Dark2-C}
        
        \addplot+[mark=square*, error bars/.cd,y dir=both, y explicit] table[x=delta, y=p_logical, y error=p_logical_error_bar, col sep=comma] {data/results_LP_1.csv};
        \addlegendentry{$\mathrm{LP}(A_1) : [\![260,58,6]\!]$}
        
        \addplot+[mark=*, error bars/.cd,y dir=both, y explicit] table[x=delta, y=p_logical, y error=p_logical_error_bar, col sep=comma] {data/results_LP_3.csv};
        \addlegendentry{$\mathrm{LP}(\tilde{A_1}) : [\![2132,58,14]\!]$}
        
        \addplot+[mark=triangle*, error bars/.cd,y dir=both, y explicit] table[x=delta, y=p_logical, y error=p_logical_error_bar, col sep=comma] {data/results_LP_2.csv};
        \addlegendentry{$\mathrm{LP}(\tilde{A_1}^{(c)}) : [\![676,58,8]\!]$}
        
        \end{axis}
    \end{tikzpicture}}
    \quad
    \subfloat[]{
    \begin{tikzpicture}
        \begin{axis}[
            xlabel={$\sigma^2/\sigma_{\mathrm{vac}}^2$ [dB]}, 
            ylabel={$p_{\mathrm{fail}}$}, 
            axis lines=box,
            ymode=log, 
            grid=both, 
            legend style={at={(0.03,0.02)},anchor=south west}, 
            cycle list/Dark2,
            legend cell align={left},
            legend style={nodes={scale=0.9, transform shape}},
            ]
        \addlegendimage{mark=square*,only marks,mark size=2,Dark2-A}
        \addlegendimage{mark=*,only marks,mark size=2,Dark2-B}
        \addlegendimage{mark=triangle*,only marks,mark size=2,Dark2-C}
        \addlegendimage{mark=diamond*,only marks,mark size=2,Dark2-D}
        
        \addplot+[mark=square*, error bars/.cd,y dir=both, y explicit] table[x=delta, y=p_logical, y error=p_logical_error_bar, col sep=comma] {data/results_LP_13.csv};
        \addlegendentry{$\mathrm{LP}(A_2) : [\![175,19,10]\!]$}
        
        \addplot+[mark=*, error bars/.cd,y dir=both, y explicit] table[x=delta, y=p_logical, y error=p_logical_error_bar, col sep=comma] {data/results_LP_14.csv};
        \addlegendentry{$\mathrm{LP}(\tilde{A}_2^{(c)}) : [\![595,19,18]\!]$}
        
        \addplot+[mark=triangle*, error bars/.cd,y dir=both, y explicit] table[x=delta, y=p_logical, y error=p_logical_error_bar, col sep=comma] {data/results_LP_16.csv};
        \addlegendentry{$\mathrm{LP}(A_3) : [\![225,21,12]\!]$}
        
        \addplot+[mark=diamond*, error bars/.cd,y dir=both, y explicit] table[x=delta, y=p_logical, y error=p_logical_error_bar, col sep=comma] {data/results_LP_17.csv};
        \addlegendentry{$\mathrm{LP}(\tilde{A}_3^{(c)}) : [\![765,21,22]\!]$}
        
        \end{axis}
    \end{tikzpicture}}
    \caption{Simulation results for lifted product codes. (a) We compare the code $\mathrm{LP}(A_1)$ with its weight reduced counterparts, $\mathrm{LP}(\tilde{A}_1)$ and $\mathrm{LP}(\tilde{A}_1^{(c)})$; see the first row of \cref{tab:lp-examples}. We again observe improved performance for the weight-reduced codes, with the uncompressed variant performing best. (b) Comparison of the hypergraph product codes from rows 2 and 3 of \cref{tab:hgp-joschka}. As in the hypergraph product case, the waterfall region begins at a smaller value of $\sigma^2/\sigma_{\mathrm{vac}}^2$ for the weight-reduced codes.}
    \label{fig:sim-lp}
\end{figure}
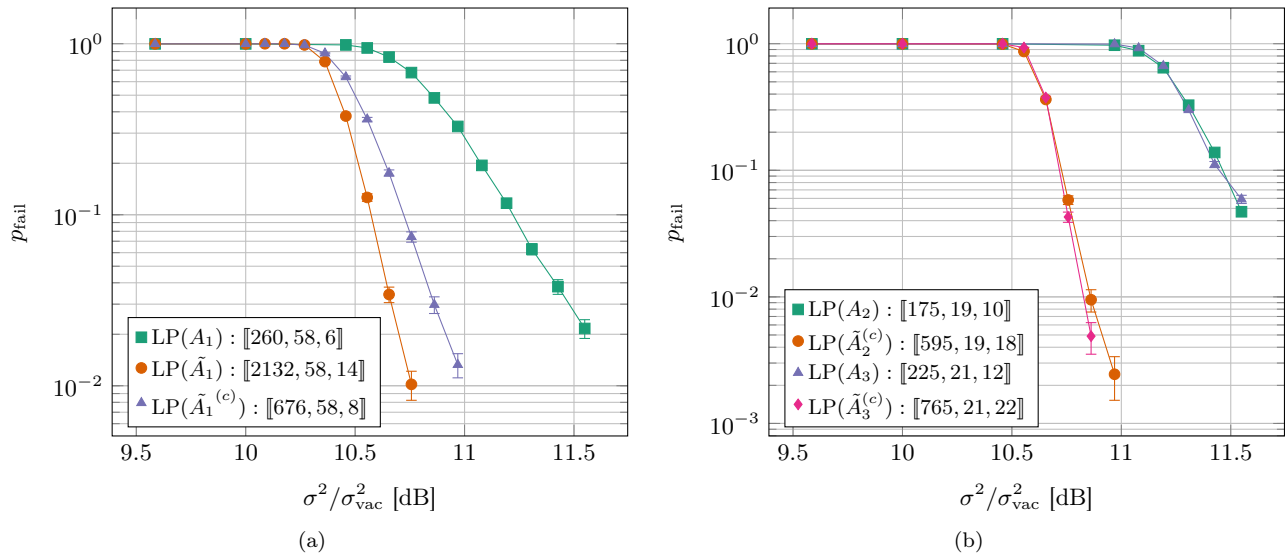

We first focus on a single hypergraph product code: the $[\![45, 9, 3]\!]$ code from \cref{tab:hgp-examples} constructed from the parity-check $H$ of a $[6, 3, 3]$ linear code. We compare the performance of $\mathrm{HGP}(H)$ with the performance of:
\begin{itemize}
    \item The $[\![2892, 9, 5]\!]$ code, $\widetilde{HGP}(H)$, formed by applying quantum weight reduction to $\mathrm{HGP}(H)$.
    \item The hypergraph product code $\mathrm{HGP}(\tilde{H})$ with parameters $[\![117, 9, 4]\!]$.
    \item The hypergraph product code $\mathrm{HGP}(\tilde{H}^{(c)})$ with parameters $[\![65, 9, 4]\!]$.
\end{itemize}
The results are shown in \cref{fig:633-sim}. We find that the codes obtained using classical weight reduction outperform the original code and the code obtained using quantum weight reduction. Even though the distance of the quantum weight-reduced code is the highest, the waterfall region (where the logical error rate starts to decrease) for this code begins at much higher values of $\sigma^2/\sigma^2_{\mathrm{vac}}$ when compared with the other codes. This is likely due to the increased row and columns weights of the parity-check matrices.

We also compare the performance of two codes from \cref{tab:hgp-joschka} with their (compressed) weight-reduced counterparts. The weight-reduced codes again have superior performance, and we note that the waterfall region for the weight-reduced codes begins at $\sigma^2/\sigma^2_{\mathrm{vac}}\approx 10.5$dB compared with $\sigma^2/\sigma^2_{\mathrm{vac}}\approx 11$dB for the baseline codes, indicating that weight reduction not only improves logical error rates but also the break-even point.

\section{Discussion \& Conclusion}\label{sec:conc}
Constructing qLDPC codes with good performance under realistic noise models is an essential step towards designing more efficient fault-tolerant quantum computing architectures. Since the noise associated with measuring a stabilizer scales with both the row and column weight of the stabilizer matrix in many proposed architectures, qLDPC codes with lower weights are likely to have superior performance. Hastings provided a method to reduce the weights of CSS codes \cite{hastings2016, hastings2021quantum}, but its use has so far been restricted to the asymptotic setting. This work provides the first application of quantum weight reduction to codes constructed with near-term hardware in mind. Our examples show the method leads to a large increase in the number of physical qubits and that it is often difficult to find cellulations that produce small weights. Additionaly, the Decongestion Lemma adds randomness to finding a cycle basis and the optimal choice of cellulation of large cycles is unclear. As a result, the majority of runs on inputs with weights already close to the theoretical minimum values produced outputs with weights with equal or worse to the initial weights. An implementation of the method is provided at \cite{Sabo_2021}.

In addition to examples, we provided an accessible review of the method with fewer mathematical prerequisites. Explicit examples accompany each step, including a completely diagrammatic description. Several statements made in \cite{hastings2021quantum} are clarified and subtleties are discussed. We also proposed modifications to reduce the overhead and analysed the method's effect on iterative decoding. Finally, we showed that three of the four steps of the method may be viewed in the same mathematical framework, without algebraic topology.

In response to some of the drawbacks of quantum weight reduction, we introduced classical and compressed classical weight reduction. Applying this technique to the inputs of the hypergraph product (for example) guarantees row weights of at most six and column weights at most three. We showed that classical weight reduction can increase the distance of the classical code and hence the distance for the hypergraph product code. The compressed variant reduces the overhead while at least maintaining the distance of the input. The two approaches represent a trade-off between achieving the maximum reduction in overhead versus the maximum increase in distance. We also showed that permutations in weight reduction can increase both the distance and the girth of the Tanner graph. We emphasize that classical weight reduction is applicable to quasi-cyclic codes defined by matrices with entries in a polynomial quotient ring, allowing us to construct examples of  weight-reduced lifted product codes with superior parameters to our hypergraph product code examples.

Both the quantum and classical weight reductions have the same input and output code dimensions, but the classical method uses far fewer qubits than the quantum method. An analysis of the cycle structure of both approaches shows that quantum weight reduction may strongly degrade the performance of iterative decoders, while the classical case may actually improve it. We benchmarked the performance of our weight-reduced codes in a photonic quantum computing architecture based on GKP qubits and passive linear optics. We used Monte Carlo simulations to estimate the logical error rate of the foliated cluster state corresponding to a logical identity channel (quantum memory). In every case we simulated, we observed improved performance for the codes produced using classical weight reduction.

There is another approach to weight reduction that we have so far neglected, where the check weights of a QEC code are reduced by transforming the QEC code into a subsystem code~\cite{kribs2005unified, poulin2005stabilizer,bacon2006operator} whose gauge operators are lower weight. This method has been successfully applied to topological codes defined on Euclidean and hyperbolic tilings~\cite{bombin2010topological,suchara2011constructions,bravyi2013subsystem,bombin2015gauge,bravyi2015doubled,oconnor2016stacked,jones2016gauge,higgott2021subsystem,kubica2022single}. A general method was proposed for qLPDC codes in Ref.~\cite{higgott2021subsystem}, however the Tanner graph of the code must obey certain conditions and the method can dramatically reduce the code distance. We leave it to future work to compare this method with the results of \cref{sec:num}. 

In future work, we plan to investigate whether we can improve the performance of our codes by using tailored decoders take advantage of the structure introduced by weight reduction. We note that one could already use our codes as a quantum memory in an architecture with a high-rate qLDPC memory and a surface code processor, as has been proposed recently for superconducting qubits~\cite{bravyi2023high} and neutral atom arrays~\cite{xu2023constant}. However, we also plan to explore techniques for performing logical operations (e.g.~\cite{krishna2021fault, cohen2022, quintavalle2023, breuckmann2022fold, huang2022homomorphic}) on our codes directly, as this could enable a fully qLDPC architecture with large savings in the number of physical qubits required to execute useful quantum algorithms at scale.

Our results illustrate that weight reduction techniques can be a useful tool for constructing useful qLDPC codes by transforming codes with good parameters but relatively high-weight checks into codes with low-weight checks, comparable parameters, and improved performance. It may be preferable to construct stabilizer codes with very low row/column weights and high encoding rate and distance directly, but the lack of known code constructions with these properties suggest that this is a challenging task. We found that optimizing the classical inputs to quantum product constructions is a superior strategy to optimizing the output quantum code itself, which is perhaps not surprising given that the orthogonality constraints that restrict quantum codes do not apply in the classical case. Here, we only scratched the surface of the space of possible quantum product codes that can be constructed with carefully designed classical inputs, and we suspect that exceptional examples are waiting to be discovered. 


\subsection*{Acknowledgements}
We gratefully acknowledge work done by Priya Nadkarni on simulating the Xanadu architecture. We thank Rafael Alexander for feedback throughout the writing process. Computations were performed on the Niagara supercomputer at the SciNet HPC Consortium. SciNet is funded by Innovation, Science and Economic Development Canada; the Digital Research Alliance of Canada; the Ontario Research Fund: Research Excellence; and the University of Toronto. Research at Perimeter Institute is supported in part by the Government of Canada through the Department of Innovation, Science and Economic Development Canada and by the Province of Ontario through the Ministry of Colleges and Universities.

\appendix
\section{Review Of Homological Algebra}\label{app:homological}
\subsection{The Tensor Product Of Chain Complexes} \label{app:tenprodch}
Let
\begin{align*}
    &\mathcal{A} \,: \, \cdots \xrightarrow{} A_{i + 1} \xrightarrow{\partial^\mathcal{A}_{i + 1}} A_i \xrightarrow{\partial^\mathcal{A}_i} A_{i - 1} \xrightarrow{} \cdots\\
    &\mathcal{B} \,: \, \cdots \xrightarrow{} B_{i + 1} \xrightarrow{\partial^\mathcal{B}_{i + 1}} B_i \xrightarrow{\partial^\mathcal{B}_i} B_{i - 1} \xrightarrow{} \cdots
\end{align*}
be chain complexes with vector spaces over the same field. Then $\mathcal{A} \otimes \mathcal{B}$ is defined to have vector spaces $\displaystyle (\mathcal{A} \otimes \mathcal{B})_n = \bigoplus_{i + j = n} A_i \otimes B_j$ and maps
\begin{equation}\label{totalcomplexboundary}
    \partial^{\mathcal{A} \otimes \mathcal{B}}_{n} = \bigoplus_{i + j = n - 1} \partial^\mathcal{A}_{i + 1} \otimes I_{B_{j}} + I_{A_{i}} \otimes \partial^\mathcal{B}_{j + 1},
\end{equation}
where $I$ is the identity map on the appropriate space.

Concrete examples relevant to this work are the product of a 3-term and a 2-term chain complex and the product of two 3-term chain complexes. For the former, let
\begin{align*}
    &\mathcal{A} \,: \,A_2 \xrightarrow{\partial^\mathcal{A}_2} A_1 \xrightarrow{\partial^\mathcal{A}_1} A_0\\
    &\mathcal{B} \,: \, B_1 \xrightarrow{\partial^\mathcal{B}_1} B_0
\end{align*} 
The product $\mathcal{A} \otimes \mathcal{B}$ is often drawn as
\begin{equation}\label{doublecomplex2-1ph}
    \begin{tikzcd}
        {A_2 \otimes B_1} && {A_2 \otimes B_0} && {A_1 \otimes B_0} && {A_0 \otimes B_0} \\
        && {A_1 \otimes B_1} && {A_0 \otimes B_1}
        \arrow["{\partial^\mathcal{A}_2 \otimes I_{B_1}}"', from=1-1, to=2-3]
        \arrow["{I_{A_2} \otimes \partial^\mathcal{B}_1}", from=1-1, to=1-3]
        \arrow["{\partial^\mathcal{A}_2 \otimes I_{B_0}}", from=1-3, to=1-5]
        \arrow["{\partial^\mathcal{A}_1 \otimes I_{B_1}}"', from=2-3, to=2-5]
        \arrow["{I_{A_0} \otimes \partial^\mathcal{B}_1}"', from=2-5, to=1-7]
        \arrow["{\partial^\mathcal{A}_1 \otimes I_{B_0}}", from=1-5, to=1-7]
        \arrow["{I_{A_1} \otimes \partial^\mathcal{B}_1}"{description}, from=2-3, to=1-5]
    \end{tikzcd}
\end{equation}
or
\begin{equation}\label{doublecomplex2-1}
    \begin{tikzcd}
	   {A_2 \otimes B_1} && {A_1 \otimes B_1} && {A_0 \otimes B_1} \\
	\\
	   {A_2 \otimes B_0} && {A_1 \otimes B_0} && {A_0 \otimes B_0}
	   \arrow["{\partial^\mathcal{A}_2 \otimes I_{B_1}}"', from=1-1, to=1-3]
	   \arrow["{I_{A_2} \otimes \partial^\mathcal{B}_1}"{description}, from=1-1, to=3-1]
	   \arrow["{\partial^\mathcal{A}_2 \otimes I_{B_0}}", from=3-1, to=3-3]
	   \arrow["{\partial^\mathcal{A}_1 \otimes I_{B_1}}"', from=1-3, to=1-5]
	   \arrow["{I_{A_0} \otimes \partial^\mathcal{B}_1}"'{description}, from=1-5, to=3-5]
	   \arrow["{\partial^\mathcal{A}_1 \otimes I_{B_0}}", from=3-3, to=3-5]
	   \arrow["{I_{A_1} \otimes \partial^\mathcal{B}_1}"{description}, from=1-3, to=3-3]
    \end{tikzcd}.
\end{equation}
Using the second diagram, we may define vertical and horizontal maps, $\partial^v_i$ and $\partial^h_i$, respectively such that $\partial^v_i \circ \partial^v_{i + 1}= \partial^h_i \circ \partial^h_{i + 1} = 0$ and $\partial^v_i$ and $\partial^h_i$ commute. Then every purely vertical chain and every purely horizontal chain form a valid chain complex.

Diagram \eqref{doublecomplex2-1} is called a double complex. \Cref{totalcomplexboundary} describes the chain complex formed from collapsing the double complex into the 4-term sequence
\begin{equation}\label{3by2chain}
    \mathcal{C} = \mathcal{A} \otimes \mathcal{B} \, : \, C_3 \xrightarrow{\partial^\mathcal{C}_3} C_2 \xrightarrow{\partial^\mathcal{C}_2} C_1 \xrightarrow{\partial^\mathcal{C}_1} C_0
\end{equation}
with
\begin{align*}
    C_3 &= A_2 \otimes B_1\\
    C_2 &= (A_2 \otimes B_0) \oplus (A_1 \otimes B_1)\\
    C_1 &= (A_1 \otimes B_0) \oplus (A_0 \otimes B_1)\\
    C_0 &= A_0 \otimes B_0.
\end{align*}
Notice that these can be read off of \cref{doublecomplex2-1ph} by collapsing each vertically aligned piece with a direct sum or by taking diagonal lines through \cref{doublecomplex2-1}. This is called the total complex.

The maps of the total complex (\cref{totalcomplexboundary}) are
\begin{align*}
    \partial^\mathcal{C}_3 &= \left(I_{A_2} \otimes \partial^\mathcal{B}_1\right) \oplus \left(\partial^\mathcal{A}_2 \otimes I_{B_1}\right)\\
    \partial^\mathcal{C}_2 &= \left(\partial^\mathcal{A}_2 \otimes I_{B_0} + I_{A_1} \otimes \partial^\mathcal{B}_1\right) \oplus \left(\partial^\mathcal{A}_1 \otimes I_{B_1}\right)\\
    \partial^\mathcal{C}_1 &= \partial^\mathcal{A}_1 \otimes I_{B_0} + I_{A_0} \otimes \partial^\mathcal{B}_1.
\end{align*}
To derive explicit matrix representations of these maps, ignore the $+$ and $\oplus$ and start from first principles:  $\partial^\mathcal{C}_3$ takes basis vectors of $C_3$ to basis vectors of $C_2$. We can group the basis elements of $C_2$ by those spanning the space $A_2 \otimes B_0$ then those spanning $A_1 \otimes B_1$. Assuming we want $\partial^\mathcal{C}_3$ to act by left multiplication, a matrix representation can be organized as
\begin{equation}\label{thirdmap}
    \mathrm{Mat}\left(\partial^\mathcal{C}_3\right) = \, \begin{pNiceArray}{c}[first-row, first-col]
        & A_2 \otimes B_1\\
        A_2 \otimes B_0 & I_{A_2} \otimes \partial^\mathcal{B}_1\\
        A_1 \otimes B_1 & \partial^\mathcal{A}_2 \otimes I_{B_1}
    \end{pNiceArray},
\end{equation}
where the row and column labels are added for convenience. Similarly,\footnote{We can check that the boundary of the boundary is zero:
\begin{align*}
    \mathrm{Mat}\left(\partial^\mathcal{C}_2\right) \mathrm{Mat}\left(\partial^\mathcal{C}_3\right) &= \begin{pmatrix}
        \partial^\mathcal{A}_2 \otimes \partial^\mathcal{B}_1 + \partial^\mathcal{A}_2 \otimes \partial^\mathcal{B}_1\\
        \partial^\mathcal{A}_2 \partial^\mathcal{A}_1 \otimes I_{B_1}
    \end{pmatrix} = 0\\
    \mathrm{Mat}\left(\partial^\mathcal{C}_1\right) \mathrm{Mat}\left(\partial^\mathcal{C}_2\right) &= \begin{pmatrix}
        \partial^\mathcal{A}_2 \partial^\mathcal{A}_1 \otimes I_{B_0} & \partial^\mathcal{A}_1 \otimes \partial^\mathcal{B}_1 + \partial^\mathcal{A}_1 \otimes \partial^\mathcal{B}_1
    \end{pmatrix} = 0,
\end{align*}
where we have used the fact that $2 \equiv 0$ in $\mathbb{F}_2$.}
\begin{align}
    \mathrm{Mat}\left(\partial^\mathcal{C}_2\right) &= \, \begin{pNiceArray}{cc}[first-row, first-col]
        & A_2 \otimes B_0 & A_1 \otimes B_1\\
        A_1 \otimes B_0 & \partial^\mathcal{A}_2 \otimes I_{B_0} & I_{A_1} \otimes \partial^\mathcal{B}_1\\
        A_0 \otimes B_1 & 0 & \partial^\mathcal{A}_1 \otimes I_{B_1}
    \end{pNiceArray}\label{3by2chainmatsa}\\
    & \nonumber \\
    \mathrm{Mat}\left(\partial^\mathcal{C}_1\right) &= \, \begin{pNiceArray}{cc}[first-row, first-col]
        & A_1 \otimes B_0 & A_0 \otimes B_1\\
        A_0 \otimes B_0 & \partial^\mathcal{A}_1 \otimes I_{B_0} & I_{A_0} \otimes \partial^\mathcal{B}_1
    \end{pNiceArray}.\label{3by2chainmatsb}
\end{align}
Note that the ordering of the basis vectors in the rows of $\mathrm{Mat}\left(\partial^\mathcal{C}_i\right)$ must be consistent with the ordering of the columns of $\mathrm{Mat}\left(\partial^\mathcal{C}_{i - 1}\right)$. If we were over a different field, it would have been necessary to define the maps to be
\begin{equation*}
    \partial^{\mathcal{A} \otimes \mathcal{B}}_{n} = \bigoplus_{i + j = n - 1} \partial^A_{i + 1} \otimes I_{B_{j}} + (-1)^i I_{A_{i}} \otimes \partial^B_{j + 1}
\end{equation*}
to get the necessary cancellation.

The procedure is the same for the tensor product of two 3-term chain complexes:
\begin{align*}
    &\mathcal{A} \,: \,A_2 \xrightarrow{\partial^\mathcal{A}_2} A_1 \xrightarrow{\partial^\mathcal{A}_1} A_0\\
    &\mathcal{B} \,: \, B_2 \xrightarrow{\partial^\mathcal{B}_2} B_1 \xrightarrow{\partial^\mathcal{B}_1} B_0\\
    &\mathcal{C} = \mathcal{A} \otimes \mathcal{B} \, : \, C_4 \xrightarrow{\partial^\mathcal{C}_4} C_3 \xrightarrow{\partial^\mathcal{C}_3} C_2 \xrightarrow{\partial^\mathcal{C}_2} C_1 \xrightarrow{\partial^\mathcal{C}_1} C_0
\end{align*}
with
\begin{align*}
    C_4 &= A_2 \otimes B_2\\
    C_3 &= (A_2 \otimes B_1) \oplus (A_1 \otimes B_2)\\
    C_2 &= (A_2 \otimes B_0) \oplus (A_1 \otimes B_1) \oplus (A_0 \otimes B_2)\\
    C_1 &= (A_1 \otimes B_0) \oplus (A_0 \otimes B_1)\\
    C_0 &= A_0 \otimes B_0
\end{align*}
and
\begin{align*}
    \mathrm{Mat}\left(\partial^\mathcal{C}_4\right) &= \begin{pNiceArray}{c}[first-row, first-col]
        & A_2 \otimes B_2 \\
        A_2 \otimes B_1 & I_{A_2} \otimes \partial^\mathcal{B}_2\\
        A_1 \otimes B_2 & \partial^\mathcal{A}_2 \otimes I_{B_2}
    \end{pNiceArray}\\
    & \\
    \mathrm{Mat}\left(\partial^\mathcal{C}_3\right) &= \begin{pNiceArray}{cc}[first-row, first-col]
        & A_2 \otimes B_1 & A_1 \otimes B_2\\
        A_2 \otimes B_0 & I_{A_2} \otimes \partial^\mathcal{B}_1 & 0\\
        A_1 \otimes B_1 & \partial^\mathcal{A}_1 \otimes I_{B_1} & I_{A_1} \otimes \partial^\mathcal{B}_2\\
        A_0 \otimes B_2 & 0 & \partial^\mathcal{A}_1 \otimes I_{A_2}
    \end{pNiceArray}\\
    & \\
    \mathrm{Mat}\left(\partial^\mathcal{C}_2\right) &= \begin{pNiceArray}{ccc}[first-row, first-col]
        & A_2 \otimes B_0 & A_1 \otimes B_1 & A_0 \otimes B_2\\
        A_1 \otimes B_0 & \partial^\mathcal{A}_1 \otimes I_{B_0} & I_{A_1} \otimes \partial^\mathcal{B}_1 & 0\\
        A_0 \otimes B_1 & 0 & \partial^\mathcal{A}_1 \otimes I_{B_1} & I_{A_0} \otimes \partial^\mathcal{B}_2
    \end{pNiceArray}\\
    & \\
    \mathrm{Mat}\left(\partial^\mathcal{C}_1\right) &= \, \begin{pNiceArray}{cc}[first-row, first-col]
        & A_1 \otimes B_0 & A_0 \otimes B_1\\
        A_0 \otimes B_0 & \partial^\mathcal{A}_1 \otimes I_{B_0} & I_{A_0} \otimes \partial^\mathcal{B}_1
    \end{pNiceArray}.
\end{align*}

The homology of the total complex follows a similar form\footnote{Equations of this type are called K\"{u}nneth formulas.}
\begin{equation}\label{Kunneth}
    H_k(\mathcal{A} \otimes \mathcal{B}) \cong \bigoplus_{i + j = k} \left(H_i(\mathcal{A}) \otimes H_j(\mathcal{B})\right).
\end{equation}
Consider the total complex (\cref{3by2chain}) where the chain $\mathcal{A}$ is the CSS code (\cref{CSSchain}) and $\mathcal{B}$ is the classical repetition code (\cref{classicalchain}), \cref{repcode}). Take the CSS code determined by the right two maps. By \cref{Kunneth}, the $Z$ logical operators of this code are
\begin{equation*}
    H_1(\mathcal{C}) = (H_1(\mathcal{A}) \otimes H_0(\mathcal{B})) \oplus (H_0(\mathcal{A}) \otimes H_1(\mathcal{B})).
\end{equation*}
To compute the homology of $\mathcal{B}$, extend it by zero on both sides:
\begin{equation}\label{repchainextended}
    0 \to \mathbb{F}^{\ell - 1}_2 \xrightarrow{H^{\mathrm T}} \mathbb{F}^\ell_2 \to 0.
\end{equation}
Then $H_1(\mathcal{B}) = \ker H^{\mathrm T} = 0$ and $H_0(\mathcal{B}) = \mathbb{F}^\ell_2 / \mathrm{im} \, H^{\mathrm T}$ is the space of all vectors modulo even-weight vectors. This has two cosets: the coset of all even-weight vectors and the coset of all odd-weight vectors. The latter is generated by any weight-one vector. The logical operators are therefore of the form $a \otimes b$, where $a \in H_1(\mathcal{A})$ is a $Z$ logical operator of the original code and $b \in H_0(\mathcal{B})$, which has minimum weight $d_Z \cdot 1$. For the $X$ logical operators, we take the dual of \cref{repchainextended} and apply the K\"{u}nneth formula to cohomology. Now $H^1(\mathcal{B}) = \mathbb{F}^{\ell - 1}_2 / \mathrm{im} \, H = 0$ as $\mathrm{rank} H = \ell - 1$ and $H^0(\mathcal{B}) = \ker H$ is the length $\ell$ all-ones vector. Hence, $X$ logical operators are of the form $a \otimes b$, where $a \in H^1(\mathcal{A})$ is an $X$ logical operator of the original code and $b \in H^0(\mathcal{B})$. The $X$ distance therefore increases to $d_X \cdot \ell$. This technique first appeared in~\cite{hastings2016} and is called distance balancing. It was generalized to use other parity-check matrices in~\cite{evra2022decodable}; see also~\cite{cross2022quantum, wills2023general}.

The hypergraph product (\cref{hgp}) is the tensor product
\begin{align*}
    &\mathcal{A} \,: \mathbb{F}^{n_1}_2 \xrightarrow{H_1} \mathbb{F}^{m_1}_2\\
    &\mathcal{B} \,: \, \mathbb{F}^{m_2}_2 \xrightarrow{H^{\mathrm{T}}_2} \mathbb{F}^{n_2}_2\\
    &\mathcal{C} = \mathcal{A} \otimes \mathcal{B} \,: \mathbb{F}^{n_1}_2 \otimes \mathbb{F}^{m_2}_2 \xrightarrow{\partial_2} (\mathbb{F}^{n_1}_2 \otimes \mathbb{F}^{n_2}_2) \oplus (\mathbb{F}^{m_2}_2 \otimes \mathbb{F}^{m_1}_2) \xrightarrow{\partial_1} \mathbb{F}^{m_1}_2 \otimes \mathbb{F}^{n_2}_2,
\end{align*}
where $H_1$ and $H_2$ are parity-check matrices of classical linear codes and
\begin{equation*}
    \partial_2 = \begin{pmatrix}
        I_{n_1} \otimes H^{\mathrm{T}}_2\\
        H_1 \otimes I_{m_2}
    \end{pmatrix}
    \quad , \quad
    \partial_1 = \begin{pmatrix}
        H_1 \otimes I_{n_2} & I_{m_1} \otimes H^{\mathrm{T}}_2    
    \end{pmatrix}.
\end{equation*}
Extending the chains on both sides by zero and applying \cref{Kunneth}, the $Z$ and $X$ logicals are of the form
\begin{equation*}
    (\ker H_1 \otimes \mathbb{F}^{n_2}_2 / \IM H^{\mathrm{T}}_2) \oplus (\mathbb{F}^{n_1}_2 / \IM H_1 \otimes \ker H^{\mathrm{T}}_2)
\end{equation*}
and
\begin{equation*}
    (\mathbb{F}^{n_1}_2 / \IM H^{\mathrm{T}}_1 \otimes \ker H_2) \oplus (\ker H^{\mathrm{T}}_1 \otimes \mathbb{F}^{n_2}_2 / \IM H^{\mathrm{T}}_2),
\end{equation*}
respectively.

\subsection{The Mapping Cone}\label{app:mappingcone}
Closely related to the above is the concept of the mapping cone. Consider the two chain complexes $\mathcal{A}$ and $\mathcal{B}$ below
\begin{equation}
    \begin{tikzcd}
        {\mathcal{A}: \cdots} && {A_{i + 1}} && {A_i} && {A_{i - 1}} && \cdots \\
        \\
        {\mathcal{B}: \cdots} && {B_{i + 1}} && {B_i} && {B_{i - 1}} && \cdots
        \arrow[from=1-1, to=1-3]
        \arrow[from=3-1, to=3-3]
        \arrow["{\partial^\mathcal{A}_{i + 1}}", from=1-3, to=1-5]
        \arrow["{\partial^\mathcal{A}_i}", from=1-5, to=1-7]
        \arrow["{\partial^\mathcal{B}_{i + 1}}", from=3-3, to=3-5]
        \arrow["{\partial^\mathcal{B}_i}", from=3-5, to=3-7]
        \arrow["{f_{i + 1}}", from=1-3, to=3-3]
        \arrow["{f_i}", from=1-5, to=3-5]
        \arrow["{f_{i - 1}}", from=1-7, to=3-7]
        \arrow[from=1-7, to=1-9]
        \arrow[from=3-7, to=3-9]
    \end{tikzcd}.
\end{equation}
The maps $f_i$ are called chain maps and we require that they are homomorphisms that commute with the other maps, i.e. $\partial^\mathcal{B}_{i+1}(f_{i + 1}(a)) = f_i\left(\partial^\mathcal{A}_{i + 1}(a)\right)$ for $a \in A_{i + 1}$. The mapping cone is defined to be the chain complex with spaces $\mathrm{cone}(f)_i = A_i \oplus B_{i + 1}$. Graphically,
\begin{equation}
    \begin{tikzcd}
        {A_i} && {A_{i - 1}} \\
        \oplus && \oplus \\
        {B_{i + 1}} && {B_i}
        \arrow["{f_i}", from=1-1, to=3-3]
        \arrow["{\partial^\mathcal{B}_{i + 1}}", from=3-1, to=3-3]
        \arrow["{\partial^\mathcal{A}_i}", from=1-1, to=1-3]
    \end{tikzcd}.
\end{equation}
Similar to the previous section, the maps $\partial_i: \mathrm{cone}(f)_i \to \mathrm{cone}(f)_{i - 1}$ are
\begin{equation*}
    \mathrm{Mat}\left(\partial_i\right) = \, \begin{pNiceArray}{cc}[first-row, first-col]
        & A_i & B_{i + 1}\\
        A_{i - 1} & \partial^\mathcal{A}_i & 0\\
        B_i & f_i & \partial^\mathcal{B}_{i + 1}
    \end{pNiceArray}.
\end{equation*}
Note that in non-binary fields, the first column should receive a minus sign, or equivalently, $f_i$ should be replaced with $(-1)^if_i$. From the mapping cone we have the short exact (split) sequence $0 \to A_{i - 1} \to \mathrm{cone}(f)_i \to B_i \to 0$. This induces the long exact sequence on homology (via the Snake Lemma) \cite{rotman2009introduction}.
\begin{equation}\label{LES}
    H_{k + 1}(\mathrm{cone}(f)) \to H_k(\mathcal{A}) \to H_k(\mathcal{B}) \to H_k(\mathrm{cone}(f)).
\end{equation}

\section{Examples}
\subsection{Hypergraph Product Codes}\label{app:examples-hgp}

{\renewcommand{\arraystretch}{1.4}
\begin{center}
\begin{table}[ht!]
    \begin{tabular}{||c|c|c|c|c|c|c||}
        \hline
        $\mathcal{C}(H)$ & $\mathrm{HGP}(H)$ & $R$ & $\mathrm{HGP}(\tilde{H})$ & $R$ & $\mathrm{HGP}(\tilde{H}^{(c)})$ & $R$ \\
        \hline
        $[11, 2, 7]$ & $[\![202, 4, 7]\!]$ & 0.020 & $[\![884, 4, 12]\!]$ & 0.005 & $[\![580, 4, 10]\!]$ & 0.007 \\
        \hline
        $[11, 3, 6]$ & $[\![185, 9, 6]\!]$ & 0.049 & $[\![1745, 9, 12\rightarrow15]\!]$ & 0.005 & $[\![765, 9, 8\rightarrow10]\!]$ & 0.012\\
        \hline
        $[11, 4, 5]$ & $[\![170, 16, 5]\!]$ & 0.094 & $[\![1930, 16, 12\rightarrow13]\!]$ & 0.008 & $[\![586, 16, 8]\!]$ & 0.027 \\
        \hline
        $[11, 5, 4]$ & $[\![157, 25, 4]\!]$ & 0.159 & $[\![557, 25, 5\rightarrow6]\!]$ & 0.045 & $[\![325, 25, 4\rightarrow5]\!]$ & 0.077 \\
        \hline
        $[11, 6, 4]$ & $[\![146, 36, 4]\!]$ & 0.247 & $[\![1170, 36, 7\rightarrow9]\!]$ & 0.031 & $[\![530, 36, 4\rightarrow7]\!]$ & 0.068 \\
        \hline
        $[11, 7, 3]$ & $[\![137, 49, 3]\!]$ & 0.358  & $[\![1885, 49, 7\rightarrow10]\!]$ & 0.026 & $[\![865, 49, 4\rightarrow8]\!]$ & 0.057 \\
        \hline
        $[12, 2, 8]$ & $[\![244, 4, 8]\!]$ & 0.016  & $[\![1060, 4, 14]\!]$ & 0.004 & $[\![724, 4, 12]\!]$ & 0.006\\
        \hline
        $[12, 3, 6]$ & $[\![225, 9, 6]\!]$ & 0.040 & $[\![1865, 9, 14\rightarrow15]\!]$ & 0.005 & $[\![845, 9, 10]\!]$ & 0.011 \\
        \hline
        $[12, 4, 6]$ & $[\![208, 16, 6]\!]$ & 0.077  & $[\![3880, 16, 14\rightarrow19]\!]$ & 0.004 & $[\![1360, 16, 8\rightarrow12]\!]$ & 0.012 \\
        \hline
        $[12, 5, 4]$ & $[\![193, 25, 4]\!]$ & 0.130 & $[\![697, 25, 5]\!]$ & 0.036 & $[\![325, 25, 4]\!]$ & 0.077  \\
        \hline
        $[12, 6, 4]$ & $[\![180, 36, 4]\!]$ & 0.200 & $[\![900, 36, 6\rightarrow7]\!]$ & 0.040 & $[\![468, 36, 4\rightarrow6]\!]$ & 0.077 \\
        \hline
        $[12, 7, 4]$ & $[\![169, 49, 4]\!]$ & 0.290 & $[\![1765, 49, 7\rightarrow10]\!]$ & 0.028 & $[\![785, 49, 4\rightarrow7]\!]$ & 0.062  \\
        \hline
        $[12, 8, 3]$ & $[\![160, 64, 3]\!]$ & 0.400  & $[\![2210, 64, 7\rightarrow9]\!]$ & 0.029 & $[\![1090, 64, 4\rightarrow7]\!]$ & 0.059 \\
        \hline
        $[13, 2, 8]$ & $[\![290, 4, 8]\!]$ & 0.014 & $[\![1252, 4, 14]\!]$ & 0.003 & $[\![884, 4, 12]\!]$ & 0.005  \\
        \hline
        $[13, 3, 7]$ & $[\![269, 9, 7]\!]$ & 0.033 & $[\![2385, 9, 16\rightarrow17]\!]$ & 0.004 & $[\![1205, 9, 12]\!]$ & 0.007 \\
        \hline
        $[13, 4, 6]$ & $[\![250, 16, 6]\!]$ & 0.064 & $[\![4058, 16, 17\rightarrow19]\!]$ & 0.004 & $[\![1466, 16, 11\rightarrow12]\!]$ & 0.011 \\
        \hline
        $[13, 5, 5]$ & $[\![233, 25, 5]\!]$ & 0.107& $[\![4717, 25, 13\rightarrow18]\!]$ & 0.005  & $[\![1637, 25, 8\rightarrow12]\!]$ & 0.015 \\
        \hline
        $[13, 6, 4]$ & $[\![218, 36, 4]\!]$ & 0.165  & $[\![900, 36, 5]\!]$ & 0.040 & $[\![468, 36, 4]\!]$ & 0.077\\
        \hline
        $[13, 7, 4]$ & $[\![205, 49, 4]\!]$ & 0.239 & $[\![1225, 49, 6\rightarrow8]\!]$ & 0.040 & $[\![709, 49, 4\rightarrow7]\!]$ & 0.069 \\
        \hline
        $[13, 8, 4]$ & $[\![194, 64, 4]\!]$ & 0.330 & $[\![2210, 64, 7\rightarrow10]\!]$ & 0.029 & $[\![1090, 64, 4\rightarrow8]\!]$ & 0.059  \\\hline
        $[13, 9, 3]$ & $[\![185, 81, 3]\!]$ & 0.438 & $[\![2561, 81, 6\rightarrow9]\!]$ & 0.032 & $[\![1341, 81, 4\rightarrow7]\!]$ & 0.060  \\
        \hline
        \end{tabular}
        \caption{Classical weight reduction applied to hypergraph product codes with some of the best known linear codes with $11 \leq n \leq 13$. $\mathcal{C}(H)$ is the linear code with parity-check matrix $H$ obtained from GAP (see main text). For each hypergraph product code, we give its encoding rate $R=k/n$. For each weight-reduction method, we apply the relevant algorithm 10,000 times using different permutations of the input parity-check matrix. In cases where permutations improved the distance, we use the notation $d_1 \rightarrow d_2$, where $d_1$ indicates the distance without permutations, and $d_2$ indicates the highest obtained distance.}
        \label{tab:hgp-examples-app-A}
\end{table}
\end{center}
}

{\renewcommand{\arraystretch}{1.4}
\begin{center}
\begin{table}[ht!]
    \begin{tabular}{||c|c|c|c|c|c|c||}
        \hline
        $\mathcal{C}(H)$ & $\mathrm{HGP}(H)$ & $R$ & $\mathrm{HGP}(\tilde H)$ & $R$ & $\mathrm{HGP}(\tilde H^{(c)})$ & $R$ \\
        \hline
        $[14, 2, 9]$ & $[\![340, 4, 9]\!]$ & 0.012 & $[\![1570, 4, 16]\!]$ & 0.003 & $[\![1154, 4, 14]\!]$ & 0.003 \\
        \hline
        $[14, 3, 8]$ & $[\![317, 9, 8]\!]$ & 0.028 & $[\![2669, 9, 16\rightarrow18]\!]$ & 0.003 & $[\![1409, 9, 12\rightarrow13]\!]$ & 0.006 \\
        \hline
        $[14, 4, 7]$ & $[\![296, 16, 7]\!]$ & 0.054 & $[\![5008, 16, 18\rightarrow21]\!]$ & 0.003 & $[\![2056, 16, 12\rightarrow14]\!]$ & 0.008 \\
        \hline
        $[14, 5, 6]$ & $[\![277, 25, 6]\!]$ & 0.090 & $[\![6173, 25, 17\rightarrow20]\!]$ & 0.004 & $[\![2257, 25, 11\rightarrow13]\!]$ & 0.011 \\
        \hline
        $[14, 6, 5]$ & $[\![260, 36, 5]\!]$ & 0.138 & $[\![7460, 36, 13\rightarrow18]\!]$ & 0.005 & $[\![2468, 36, 8\rightarrow12]\!]$ & 0.015 \\
        \hline
        $[14, 7, 4]$ & $[\![245, 49, 4]\!]$ & 0.200 & $[\![1429, 49, 6\rightarrow8]\!]$ & 0.034 & $[\![709, 49, 4\rightarrow5]\!]$ & 0.069 \\
        \hline
        $[14, 8, 4]$ & $[\![232, 64, 4]\!]$ & 0.276 & $[\![1832, 64, 6\rightarrow9]\!]$ & 0.035 & $[\![1000, 64, 4\rightarrow7]\!]$ & 0.064 \\
        \hline
        $[14, 9, 4]$ & $[\![221, 81, 4]\!]$ & 0.367 & $[\![2705, 81, 7\rightarrow11]\!]$ & 0.030 & $[\![1445, 81, 4\rightarrow8]\!]$ & 0.056 \\
        \hline
        $[14, 10, 3]$ & $[\![212, 100, 3]\!]$ & 0.472 & $[\![2938, 100, 6\rightarrow9]\!]$ & 0.034 & $[\![1618, 100, 4\rightarrow7]\!]$ & 0.062 \\
        \hline
        $[15, 2, 10]$ & $[\![394, 4, 10]\!]$ & 0.010 & $[\![1802, 4, 18]\!]$ & 0.002 & $[\![1354, 4, 16]\!]$ & 0.003\\
        \hline
        $[15, 3, 8]$ & $[\![369, 9, 8]\!]$ & 0.024 & $[\![2817, 9, 16\rightarrow18]\!]$ & 0.003 & $[\![1517, 9, 12\rightarrow13]\!]$ & 0.006\\
        \hline
        $[15, 4, 8]$ & $[\![346, 16, 8]\!]$ & 0.046 & $[\![5626, 16, 18\rightarrow22]\!]$ & 0.003 & $[\![2458, 16, 12\rightarrow15]\!]$ & 0.007\\
        \hline
        $[15, 5, 7]$ & $[\![325, 25, 7]\!]$ & 0.077 & $[\![10237, 25, 21\rightarrow25]\!]$ & 0.002 & $[\![3457, 25, 12\rightarrow16]\!]$ & 0.007\\
        \hline
        $[15, 6, 6]$ & $[\![306, 36, 6]\!]$ & 0.118 & $[\![9266, 36, 17\rightarrow20]\!]$ & 0.004 & $[\![3218, 36, 11\rightarrow14]\!]$ & 0.011\\
        \hline
        $[15, 7, 5]$ & $[\![289, 49, 5]\!]$ & 0.170 & $[\![9965, 49, 14\rightarrow19]\!]$ & 0.005 & $[\![3305, 49, 8\rightarrow13]\!]$ & 0.015\\
        \hline
        $[15, 8, 4]$ & $[\![274, 64, 4]\!]$ & 0.234 & $[\![2920, 64, 6\rightarrow9]\!]$ & 0.022 & $[\![1384, 64, 4\rightarrow7]\!]$ & 0.046\\
        \hline
        $[15, 9, 4]$ & $[\![261, 81, 4]\!]$ & 0.310 & $[\![2561, 81, 7\rightarrow10]\!]$ & 0.032 & $[\![1341, 81, 4\rightarrow8]\!]$ & 0.060\\
        \hline
        $[15, 10, 4]$ & $[\![250, 100, 4]\!]$ & 0.400 & $[\![3250, 100, 7\rightarrow11]\!]$ & 0.031 & $[\![1850, 100, 4\rightarrow9]\!]$ & 0.054\\
        \hline
        $[15, 11, 3]$ & $[\![241, 121, 3]\!]$ & 0.502 & $[\![3341, 121, 6\rightarrow9]\!]$ & 0.036 & $[\![1921, 121, 3\rightarrow7]\!]$ & 0.063\\
        \hline
    \end{tabular}
    \caption{Classical weight reduction applied to hypergraph product codes with some of the best known linear codes with $14 \leq n \leq 15$. $\mathcal{C}(H)$ is the linear code with parity-check matrix $H$ obtained from GAP (see main text). For each hypergraph product code, we give its encoding rate $R=k/n$. For each weight-reduction method, we apply the relevant algorithm 10,000 times using different permutations of the input parity-check matrix. In cases where permutations improved the distance, we use the notation $d_1 \rightarrow d_2$, where $d_1$ indicates the distance without permutations, and $d_2$ indicates the highest obtained distance.}
    \label{tab:hgp-examples-app-B}
\end{table}
\end{center}
}

\newpage
\subsection{Quasi-Cyclic Codes}\label{app:examples-lp}
Here we give the base matrices used to construct the lifted product code examples in \cref{tab:lp-examples}.
\begin{enumerate}
    \item From~\cite[Table 2]{bocharova2009}, a $[52, 27, 6]$ quasi-cyclic code with lift size $\ell = 13$ and base matrix 
    \begin{equation*}
        A = 
        \begin{pmatrix}
            1 & 1 & 1 & 1\\
            1 & x & x^3 & x^9
        \end{pmatrix}.
    \end{equation*}
    \item From~\cite[Example 11]{smarandache2012quasi}, a $[124, 33, 24]$ quasi-cyclic code with lift size $\ell = 31$ and base matrix
    \begin{equation*}
        A = 
        \begin{pmatrix}
            x & x^2 & x^4 & x^8\\
            x^5 & x^{10} & x^{20} & x^9\\
            x^{25} & x^{19} & x^7 & x^{14}
        \end{pmatrix}.
    \end{equation*}
    \item From~\cite[Table 1]{raveendran2022finite}, a $[28, 9, 10]$ quasi-cyclic code with lift size $\ell = 7$ and base matrix
    \begin{equation*}
        A = 
        \begin{pmatrix}
            1 & 1 & 1 & 1\\
            1 & x & x^2 & x^5\\
            1 & x^6 & x^3 & x
        \end{pmatrix}.
    \end{equation*}
    \item From~\cite[Table 1]{raveendran2022finite}, a $[36, 11, 12]$ quasi-cyclic code with lift size $\ell = 9$ and base matrix
    \begin{equation*}
        A = 
        \begin{pmatrix}
            1 & 1 & 1 & 1\\
            1 & x & x^6 & x^7\\
            1 & x^4 & x^5 & x^2
        \end{pmatrix}.
    \end{equation*}
    \item From~\cite[Table 1]{raveendran2022finite}, a $[68, 19, 18]$ quasi-cyclic code with lift size $\ell = 17$ and base matrix
    \begin{equation*}
        A = 
        \begin{pmatrix}
            1 & 1 & 1 & 1\\
            1 & x & x^2 & x^{11}\\
            1 & x^8 & x^{12} & x^{13}
        \end{pmatrix}
    \end{equation*}
    \item From~\cite[Table 1]{raveendran2022finite}, a $[76, 21, 20]$ quasi-cyclic code with lift size $\ell = 9$ and base matrix
    \begin{equation*}
        A = 
        \begin{pmatrix}
            1 & 1 & 1 & 1\\
            1 & x & x^6 & x^7\\
            1 & x^4 & x^5 & x^2
        \end{pmatrix}.
    \end{equation*}
\end{enumerate}

We can also weight reduce base matrices with higher weight entries in certain special cases.
Consider the matrix
\begin{equation*}
    A = 
    \begin{pmatrix}
        x + x^2 &  & x^4 & x^8\\
        x^5 & x^9 & x^{10} + x^{20} & \\
         & x^{25} + x^{19} &  & x^7 + x^{14}
    \end{pmatrix},
\end{equation*}
which for lift size $\ell = 46$ has an associated quasi-cyclic code with parameters $[184, 47, 32]$~\cite[Example 13]{smarandache2012quasi}.
The corresponding lifted product code $\mathrm{LP}(A)$ has parameters $[\![1150,50,\leq 21]\!]$.
If we choose the correct permutation for each row then we can obtain a weight reduced matrix where each column and row have weight at most three:
\begin{equation*}
    \tilde A = 
    \begin{pmatrix}
        x + x^2 &   &   &   & 1 &   &   &   &  \\
          &   & x^4 &   & 1 & 1 &   &   &  \\
          &   &   & x^8 &   & 1 &   &   &  \\
          &   & x^{1 } + x^{2 } &   &   &   & 1 &   &  \\
        x^5 &   &   &   &   &   & 1 & 1 &  \\
          & x^9 &   &   &   &   &   & 1 &  \\
          &   &   & x^7 + x^{14} &   &   &   &   & 1\\
          & x^{25} + x^{19} &   &   &   &   &   &   & 1
    \end{pmatrix}.
\end{equation*}
The corresponding quasi-cyclic code has parameters $[414,47,81]$ and the lifted product code $\mathrm{LP}(\tilde A)$ has parameters $[\![6670,50,\leq70]\!]$.

\bibliography{main}

\end{document}